\declaretheorem[name=Definition, parent=section]{definition}
\declaretheorem[name=Claim,parent=section]{claim}
\newcommand{\GG}{{\mathcal{G}}}
\newcommand{\FF}{{\mathcal{F}}}
\newcommand{\VV}{{\mathcal{V}}}
\newcommand{\MM}{{\mathcal{M}}}
\newcommand{\NN}{{\mathcal{N}}}
\newcommand{\PP}{{\mathcal{P}}}
\newcommand{\WW}{{\mathcal{W}}}
\newcommand{\Plex}{\PP_{\text{Lex}}}
\newcommand{\Pnmon}{\PP_{\neq}^{\text{Mon}}}
\newcommand{\FFn}{\FF_{\neq}}
\newcommand{\Vmon}{\VV^{\text{Mon}}}
\newcommand{\preq}{\preccurlyeq}
\newcommand{\preqa}{\preq'}
\newcommand{\preqb}{\hat{\preq}}
\newcommand{\preca}{\prec'}
\newcommand{\precb}{\hat{\prec}}
\newcommand{\vecp}[1]{\vec{#1}}
\newcommand{\vecv}{\vecp{v}}
\newcommand{\vecu}{\vecp{u}}
\newcommand{\vecf}[1]{\left(#1_1,\ldots, #1_n\right)}
\newcommand{\vecoper}[2]{\vecf{#1^{#2}}}
\newcommand{\vecord}[1]{\left(\preq_{#1_1},\ldots , \preq_{#1_n}\right)}
\newcommand{\vecordp}[1]{\left(\preq_{\vecp{#1_1}},\ldots , \preq_{\vecp{#1_n}}\right)}
\newcommand{\veco}{\vecf{\preq}}
\newcommand{\vecoa}{\vecf{\preqa}}
\newcommand{\vecob}{\vecf{\preqb}}
\newcommand{\vecopi}{\vecoper{\preq}{\pi}}
\newcommand{\dem}[3]{\text{d}^{#1}_{#2}\left(#3\right)}
\newcommand{\per}[1]{\text{Per}(#1)}
\newcommand{\perM}{\per{\MM}}
\newcommand{\ALLOCs}[1]{{\mathcal{A}_n(#1)}}
\newcommand{\ALLOCsM}{\ALLOCs{\MM}}
\newcommand{\SQ}[2]{g^{#1, #2}}
\newcommand{\SQqp}{\SQ{\vecp{q}}{\vecp{p}}}
\newcommand{\SQqpa}{\SQ{\vecp{q'}}{\vecp{p'}}}
\newcommand{\qp}{(\vecp{q},\vecp{p})}
\newcommand{\qpa}{(\vecp{q'},\vecp{p'})}
\newcommand{\Q}[2]{\text{Q}_{#1,#2}}
\newcommand{\Qnm}{\Q{n}{m}}
\newcommand{\rest}[2]{\left.#1\right\rvert_{#2}}
\newcommand{\restW}[1]{\rest{#1}{\WW}}
\DeclareMathOperator*{\argmax}{arg\,max}
\title{On Truthful Mechanisms without Pareto-efficiency: Characterizations and Fairness}
\author{
    Moshe Babaioff
        \thanks{Hebrew University of Jerusalem. {Emails: \texttt{\{moshe.babaioff, noam.manakermorag\}@mail.huji.ac.il}}}
    \and Noam Manaker Morag
        \footnotemark[\value{footnote}]
}
\date{}
\begin{document}

\maketitle 

\begin{abstract}
We consider the problem of allocating heterogeneous and indivisible goods among strategic agents, with preferences over subsets of goods, when there is no medium of exchange.  This model captures the well studied problem of fair allocation of indivisible goods. Serial-quota mechanisms are allocation mechanisms where there is a predefined order over agents, and each agent in her turn picks a predefined number of goods from the remaining goods. These mechanisms are clearly strategy-proof, non-bossy, and neutral. Are there other mechanisms with these properties? 

We show that for important classes of strict ordinal preferences (as lexicographic preferences, and as the class of all strict preferences), these are the only mechanisms with these properties. Importantly, unlike previous work, we prove the claim even for mechanisms that are not Pareto-efficient. Moreover, we generalize these results to preferences that are cardinal, including any valuation class that contains additive valuations. We then derive strong negative implications of this result on truthful mechanisms for fair allocation of indivisible goods among agents with additive valuations.
\end{abstract}          

\section{Introduction} \label{sec:intro}

One of the central goals of Algorithm Mechanism Design is understanding the extent to which incentives impact the quality of the outcome of algorithms. A central problem that has attracted significant research effort is the problem of fairly allocating indivisible goods, mainly from the algorithmic point of view (when the input is assumed to be known); see, for example, \cite{moulin2004fair, LMMS04, KurokawaPW18,caragiannis2019envy, PR2020, ChaudhuryKMS21, Amanatidis2023, ChaudhuryGM24}. 
The challenge of incorporating incentives for truthful reporting\footnote{{Our work focuses on \emph{dominant strategies}, so ``truthful'' (or ``strategy-proofness'') is in the sense of truthful reporting being a dominant strategy.}} 
when agents' preferences are private information has been addressed only for a special case (partition mechanisms for two agents \cite{AmanatidisBCM17}). 
Unfortunately, that result does not {generalize} beyond that special case, and no results are known beyond that special case, except when also requiring the mechanism to be Pareto-efficient (a strong requirement). This paper aims to make progress on the important and challenging problem of understanding the limits of fair allocation algorithms that incentivize truthful reporting without assuming Pareto-efficiency.

Specifically, we consider the problem of {deterministically} allocating heterogeneous and indivisible goods among agents, 
with either ordinal or cardinal preferences over subsets of goods, with no medium of exchange (no transfers). 
This captures, for example, the prominent model of fair division of heterogeneous and indivisible goods.

In many of these settings, some allocation algorithm is used to decide on the allocation of goods to agents, given the agents' preferences. 
When preferences are private information, agents may act strategically to optimize their allocation. 
For example, consider allocating goods to agents with additive valuations using the Round Robin protocol, {where} agents cyclically pick a good from the remaining goods. 
When all {agents} have the same preference it is always best to pick items in decreasing order of value.  
Yet picking in that order is not a dominant strategy: when preferences are different, there is no reason to take desirable goods early if no other agent wants them.
As in many settings where agents may act strategically, understanding which allocation algorithms are not susceptible to manipulations (are ``strategy-proof'')  is fundamental. Such understanding will also give insights regarding the implications of incentives on the fairness of algorithms.

The seminal work of Amanatidis et al. \cite{AmanatidisBCM17} has studied the fair partition\footnote{Throughout the paper we use the term ``allocation mechanisms'' for mechanisms that may leave goods unallocated, and use ``partition mechanisms'' for mechanisms that must allocate all goods.} of indivisible goods for two strategic agents with private preferences. First, they presented a characterization of truthful {(deterministic)} partition mechanisms for two agents. Then, from the characterization, they derived strong impossibility results for {truthful} fair partition of indivisible goods for two agents with additive valuations. We stress that, somewhat unintuitively, the impossibility for two agents when all goods must be allocated does not immediately imply an impossibility for more agents, or for allocation mechanisms. For example, it does not imply any impossibility for the case of three agents or even for the case of two agents when not all goods must be allocated. When not all goods must be allocated, there may be some mechanism that ``throws away'' some of the goods to create a fairer allocation than when all goods are assigned. Thus, the requirement to partition all goods among the two agents prevents a simple reduction from more than two agents to the two agents' case. We thus ask:

\vspace{0.05in}
\emph{What are the limits of fairness when allocating goods to strategic agents with private preferences?}
\vspace{0.07in}

To address this, it is natural to first ask:

\vspace{0.05in}
\emph{Which mechanisms for allocating goods to strategic agents are strategy-proof (truthful)?}
\vspace{0.05in}

Indeed, similarly to Amanatidis et al. \cite{AmanatidisBCM17} we approach the first question by initially aiming to address the second. So, {we first} seek a characterization of {deterministic} truthful mechanisms and then address the first question using the fact that every mechanism in the limited set of truthful mechanisms has poor fairness guarantees.

Addressing the characterization question is challenging, as even when one restricts attention to consider only the case of  partition mechanisms for two agents, the characterized truthful mechanisms are very involved. 
Currently, extending this characterization without any additional assumptions seems out of reach.
We thus restrict our attention to mechanisms that do not care about the naming of the goods. These are called \emph{neutral} allocation mechanisms in the literature (or, in the context of shares, they are called \emph{name-independent} in \cite{BF22}). Central fairness notions do not depend on good names, for example, the maximin share (MMS)\cite{Budish11}, and the notion of \emph{envy-free-up-to-one-good (EF1)}~\cite{Budish11,LMMS04}.\footnote{\label{footnote:ref-to-mms}See \cref{sec:implications-fair} for the definitions of MMS and EF1.} Similarly, commonly used allocation mechanisms, such as the Round Robin algorithm and the ``cut and choose'' procedure, are neutral. There are other neutral (non-truthful) algorithms for fair allocation, including ones with constant MMS approximation.

When restricting to neutral mechanisms, the characterization of \cite{AmanatidisBCM17} boils down to mechanisms in which, for some $k$, the first agent picks $k$ goods, and the other gets the rest. 
Observe that these mechanisms are actually ``serial-quota'' partition mechanisms for two agents. 
In this paper, we ask for a characterization result beyond two agents and when allowing the mechanism to leave goods unallocated, and as serial-quota mechanisms play a central role in our result, we next discuss these mechanisms.

A \emph{serial-quota mechanism} is a mechanism where there is a predefined order over agents and a predefined quota (number of goods) for each agent, where both the order and quotas are independent of preferences. Each agent, in turn, according to the predefined order, picks a set of goods 
no larger than her quota from the remaining goods.

When agents have strict preferences over sets of goods, these simple mechanisms have several good properties. First, each agent has a dominant strategy (so the mechanism is \emph{strategy-proof}): she should pick a set of size equal to her quota that she likes best (her \emph{demand}) among the remaining goods. Second, the mechanism is \emph{non-bossy}: an agent cannot impact the set received by any other agent, without affecting her own allocation.
Finally, the mechanism is \emph{neutral}: if all agents {identically} rename the goods, the allocation remains the same  (the class is ``permutations-closed'').\footnote{See \cref{sec:mechanisms} for formal definitions of these properties. }

{While serial-quota mechanisms have all these nice properties, they}
are very simple and thus do not possess some other important  properties. For example, every such mechanism might be very unfair to some of the agents. Importantly, for a given serial-quota mechanism, the ordering of the agents and the quotas must be fixed ``a-priori'', that is, they are independent of the reports of the agents. Unfortunately, for whatever quotas are chosen, there will be preferences for which the mechanism’s allocation is very unfair. For example, consider partitioning 1000 goods among five agents with additive valuations over goods. If the quotas of the first four agents are larger than 1, in a setting where the fifth agent only values {five of the goods} (say identically), the mechanism might give her a bundle of no value to her (if the others pick all {five} goods she likes). This outcome seems unfair as she should get at least one of these five goods\footnote{{Formally, since her MMS is not zero, she must get strictly positive value, otherwise the MMS approximation will be zero.} {Additionally, the allocation is not EF1.}}. So the only serial-quota partition mechanism that gives her any value is the one in which each of the first four agents gets one good, and the fifth agent gets all the rest. However, this mechanism is very unfair when all agents value all goods identically.

We thus ask: \emph{Are there any other mechanisms that are strategy-proof, non-bossy, and neutral?} Such mechanisms, in particular, might be able to get better fairness properties. 

A related but easier problem was addressed in \cite{Papai2000}, which showed that for monotone and strict preferences, serial-quota mechanisms are the only mechanisms if, in addition to strategy-proof, non-bossy, and neutral, the mechanism must also be Pareto-efficient. A mechanism is  Pareto-efficient if the allocation it picks  cannot be improved (one agent is strictly better off, while no other is harmed).  

Pareto-efficiency is a very strong requirement, which is violated by many mechanisms, including serial-quota mechanisms for most quotas.
Consider for example allocating three goods to two additive agents, when the first agent picks a single good, and the second agent gets the remaining two goods. 
The allocation of this serial-quota mechanism is not necessarily Pareto-efficient, as evident from the following simple example. Assume that the first agent values the goods at (11,10,10), while the second agent values the goods at (10,1,1). The mechanism results with the first agent getting utility of 11, and the second agent getting utility of 2, 
while if they switch their sets both will be better off:  the first agent will get 20, while the second agent will get 10. 

More generally, \cite{Papai2000} observed that a serial-quota mechanism is Pareto-efficient if and only if all goods are always allocated, and the quotas of all agents except the first are either 0 or 1. Thus, imposing Pareto-efficiency on a serial-quota mechanism implies that not only does the first picker pick goods before others, but she also picks more goods than later agents. This seems very unfair, even in non-pathological examples, as when all goods are identical. We thus move to study truthful mechanisms that are not restricted to be Pareto-efficient, starting with characterizing such mechanisms that are neutral and non-bossy.

\subsection{Our Main Result: Ordinal Mechanism Characterization}
We first consider agents with ordinal preferences and study the problem of allocating a set  $\MM$ of $m$ goods among a set $\NN$ of $n$ agents, each with an ordinal preference over subsets of goods. A {(deterministic)} mechanism is a function that maps  preferences of the $n$ agents to an allocation of the set of goods $\MM$. Preferences over $\MM$ come from a class of ordinal preferences $\PP= \PP(\MM)$, and as items are goods, preferences are weakly monotone.
{F}or our characterization results we assume that preferences are strict\footnote{{Note that although the characterization is for strict preferences, it implies negative results for fair allocation even when preferences {may} not be strict.}}{:} for any two distinct sets, one must be strictly preferred over the other.

Our main {characterization} result is for strict preferences, showing that under mild assumptions about the class of preferences (discussed below), a strategy-proof, non-bossy, and neutral mechanism {must} be a serial-quota mechanism. Our result holds even if the mechanism is allowed not to allocate all goods. {Moreover, o}ur characterization result has significant implications for the setting {where} agents have cardinal valuation functions over {all }subsets of goods, and truthfulness is with respect to reporting of such functions. Although a cardinal valuation function encapsulates more information than an ordinal preference, we show that for important classes of strict valuations over goods, the characterization of truthful, non-bossy, and neutral mechanisms is the same as the one for ordinal preferences.

Our main result applies, in particular, to the well-known class of additive valuation functions (without ties), as well as classes that contain them (such as the classes of sub-additive, sub-modular, and XOS valuation functions, {and also} the class of super-additive valuation functions). A central implication of the characterization is that for additive valuations {(even with ties)}, truthful, non-bossy, and neutral mechanisms have poor fairness properties (we discuss in more details in \cref{sec:intro-imp}).

The two assumptions we make about the class of strict preferences $\PP$ are as follows. First, we assume that all lexicographic preferences are in the class. This holds for many natural classes, such as the class of preferences induced by additive valuations (and any superset of it, as the class of sub-modular or sub-additive valuations) and the larger class of all monotone preferences. Second, we assume that the class is \emph{permutations-closed}; that is, for every preference in the class, it holds that all its permutations are also in the class (See \cref{definition:permuted-preference} for the formal definition and a more detailed discussion). This assumption is satisfied by most natural classes of preferences {(e.g., by the class of additive valuations and by the class of sub-additive valuations)} as it only requires that the names of goods are irrelevant.

For a set of goods $\MM$, we denote the family of classes of strict preferences that are permutations-closed and contain the entirety of the lexicographic preferences by $\FFn(\MM)$. Our main result is a characterization of truthful, non-bossy, and neutral allocation mechanisms as serial-quota mechanisms, for every class in this wide family of classes of strict preferences.

\begin{restatable}{theorem}{strict}
\label{theorem:strict}
Let $f$ be a {deterministic} mechanism that allocates a set  $\MM$ of $m$ goods among $n$ agents with preferences from some permutations-closed class of \emph{strict} preferences $\PP$ that contains the lexicographic preferences ($\PP\in\FFn(\MM)$), when not all goods need to be allocated. 

Mechanism $f$ is truthful, non-bossy, and neutral if and only if $f$ is a serial-quota mechanism. 
\end{restatable}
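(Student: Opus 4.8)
The forward direction is exactly the bundle of properties already noted: every serial-quota mechanism is strategy-proof (an agent's dominant strategy is to demand a favorite set of size equal to her quota among the goods still available), non-bossy, and neutral. So the content is the converse, and I would build it on three ingredients. (i) The standard menu/option-set characterization of strategy-proofness for strict preferences: for each agent $i$ and each fixed profile $\vec p_{-i}$ of the other reports there is an option set $\mathcal O_i(\vec p_{-i})\subseteq 2^{\MM}$ of the bundles $i$ can obtain by varying her report, and $f_i(\vec p_i,\vec p_{-i})$ is the $\preq_i$-maximal member of $\mathcal O_i(\vec p_{-i})$; since preferences are monotone, no strict superset of $f_i(\vec p)$ lies in $\mathcal O_i(\vec p_{-i})$. (ii) Neutrality makes this assignment equivariant: $\mathcal O_i(\pi\vec p_{-i})=\pi\,\mathcal O_i(\vec p_{-i})$ for every permutation $\pi$ of $\MM$ (using that $\PP$ is permutations-closed, so $\pi\vec p_{-i}$ is again a legal profile). (iii) Non-bossiness says that if $i$ changes her report without changing her own bundle, the entire allocation is unchanged.

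I would then induct on the number of agents $n$. For $n=1$, equivariance forces $\mathrm{Im}(f)$ to be closed under all permutations of $\MM$, hence a union $\bigcup_{k\in K}\binom{\MM}{k}$ of full ``layers''; monotonicity of the single preference then makes the agent always choose a favorite set from the top layer $\binom{\MM}{\max K}$, which is precisely the serial-quota mechanism with the one quota $\max K$. For $n\ge 2$, consider the uniform profile $L_\sigma$ in which every agent reports the lexicographic preference induced by a linear order $\sigma$ on $\MM$. By equivariance the set $P_i\subseteq[m]$ of $\sigma$-ranks of the goods received by $i$ in $L_\sigma$ is independent of $\sigma$; the $P_i$ partition a subset of $[m]$, and I set $q_i:=|P_i|$. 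If all $P_i$ are empty one shows $f$ is identically the empty allocation (a serial-quota mechanism with all quotas zero); otherwise let $i_1$ be the agent owning rank $1$.

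The key claim I would isolate is that ``going first is unconditional'': for every $\vec p_{-i_1}$, $\mathcal O_{i_1}(\vec p_{-i_1})=\binom{\MM}{q_{i_1}}$, so $i_1$ always receives her $\preq_{i_1}$-favorite $q_{i_1}$-subset of $\MM$. Granting this, the induction closes cleanly: since a lexicographic order topped by the elements of a $q_{i_1}$-set $S$ selects $S$ from $\binom{\MM}{q_{i_1}}$, non-bossiness implies that the bundles of the other $n-1$ agents depend on $i_1$'s report only through $S=f_{i_1}(\vec p)$; fixing $S$, the induced map on $\vec p_{-i_1}$ is a strategy-proof, non-bossy, neutral mechanism allocating $\MM\setminus S$ among $n-1$ agents whose preferences, restricted to $\MM\setminus S$, still form a permutations-closed class containing all lexicographic preferences on $\MM\setminus S$; by the inductive hypothesis it is a serial-quota mechanism, and by neutrality (permuting $\MM\setminus S$, together with non-bossiness to undo the effect of that permutation on $i_1$'s report) its order and quotas are independent of $S$. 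Prepending $i_1$ with quota $q_{i_1}$ exhibits $f$ as a serial-quota mechanism; in particular the agent order and the block structure of the $P_i$ fall out of the recursion rather than needing a separate argument.

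The hard part is the key claim itself. On uniform profiles we already know $\mathcal O_{i_1}(L_\sigma^{-i_1})=\binom{\MM}{q_{i_1}}$ (from the bundle $i_1$ receives and monotonicity), but equivariance alone only makes $\mathcal O_{i_1}(\vec p_{-i_1})$ invariant under permutations that fix $\vec p_{-i_1}$, which in general is far from all of $S_m$. The plan is to propagate the conclusion from a uniform profile to an arbitrary $\vec p_{-i_1}$ along a ``freezing path'': to test whether a given good $a$ is reachable, let $i_1$ report the lexicographic order topped by $a$, and change the other agents' reports one at a time toward a uniform lexicographic profile (e.g.\ toward the order topped by $i_1$'s current bundle), arranging that at each step the moving agent's bundle does not change, so that non-bossiness freezes the whole allocation along the path and the conclusion known at the uniform endpoint transfers back. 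Designing such a path so that every step is a legitimate invocation of non-bossiness (the mover's bundle provably unchanged), so that bundle sizes never overshoot $q_{i_1}$ en route, and so that $i_1$ can indeed seize any prescribed good, is the technical crux; the same machinery disposes of the degenerate all-quotas-zero case, where the statement to propagate is $\mathcal O_i(\vec p_{-i})=\{\emptyset\}$.
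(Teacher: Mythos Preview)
Your overall architecture---identify a first picker $i_1$ and quota $q_{i_1}$ from uniform lexicographic profiles, prove the key claim that $i_1$ unconditionally receives her favourite $q_{i_1}$-set, then recurse on the induced mechanism on $\MM\setminus S$---is exactly the paper's, and your base case $n=1$ and the inductive glue are fine. The gap is the key claim. Your freezing path targets a \emph{uniform} lexicographic endpoint, but the only report changes that provably leave the mover's bundle fixed are push-ups: replace $\preq_j$ by a preference that ranks $j$'s current bundle $A_j$ at least as highly, then invoke two-sided truthfulness and strictness. If instead you replace $\preq_j$ by the lexicographic order topped by $a$ (or by $i_1$'s bundle), there is no reason $A_j$ is highly ranked under that new report, so $A_j$ may well change and non-bossiness gives you nothing. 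Conversely, if you push each $j$ up to a lexicographic order with her own $A_j$ on top, the allocation does freeze, but you land at a \emph{non-uniform} lexicographic profile, and knowing the outcome only on uniform profiles tells you nothing there.

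This is precisely why the paper invests in a full characterization on $\Plex(\MM)$ before touching general $\PP$: it shows that any truthful, non-bossy, neutral partition mechanism restricted to lexicographic reports is already serial-quota (by first deriving Pareto-efficiency on lexicographic profiles from the three axioms via a trading-cycle argument, then adapting Hatfield's consecutive-allocation analysis). With that in hand, push-ups from an arbitrary profile and from a target profile both land at (generally distinct, non-uniform) lexicographic profiles where the serial-quota structure is known, and the Control Lemma---hence your key claim---follows. Without the non-uniform lexicographic case your freezing path, as sketched, does not close; you would need either to supply that missing piece (which is most of the work) or to exhibit a different invariant that survives changing $\preq_j$ to the common lexicographic order.
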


The characterization is for mechanisms that are truthful, non-bossy, and neutral. We observe that each of the three properties is needed for the characterization to hold. A mechanism that is non-bossy and neutral but \emph{not truthful}, could give each agent in order their ``worst'' set of goods according to the quotas. A mechanism that is truthful and neutral but \emph{not non-bossy}, could let the first agent assign each agent her quota in order, but according to the demand determined by the preference of the \emph{first} agent.
Finally, a mechanism that is truthful and non-bossy but \emph{not neutral}, could simply allocate the goods according to some constant allocation, irrespective of the agents reported preferences.

\paragraph{Proof outline.} 
Before discussing implication of our main result, we present the approach we take proving \cref{theorem:strict}. We prove it in several steps.

We first prove the same characterization for 
{partition mechanisms over} lexicographic preferences (which are always \emph{strict}, meaning that there is never a tie between different subsets). For these preferences, we first show that any truthful, non-bossy, and neutral partition mechanism must be Pareto-efficient (\cref{proposition:Pareto-efficient}). This turns out to be sufficient to imply that the mechanism must be a serial-quota mechanism (\cref{proposition:lexi-case}).  

We then move to consider any class $\PP$ in the family of strict classes $\FFn(\MM)$. We first prove a central lemma, called the Control Lemma (\cref{control}), which is a variant of the Control Lemma proven in \cite{AmanatidisBCM17}, which we later use to prove the characterization result for partition mechanisms over $\PP$. The Control Lemma of \cite{AmanatidisBCM17} is based on two key concepts {defined in \cite{AmanatidisBCM17}}. The first is that of ``strong desire'' and {the} second is of ``control''.

We extend the definition of strong desire to match the setting we study. 
In \cite{AmanatidisBCM17}, ``strong desire'' was defined only for agents with additive valuations over the goods. According to their definition, an agent with an additive valuation function \emph{strongly desires} a set of goods $S\subseteq \MM$ if her value for every individual good in $S$ is greater {than} the sum of the value of all the goods in $\MM\setminus S$. 
We adapt this definition to monotone ordinal preferences. 
An agent with a monotone preference \emph{strongly desires} $S\subseteq \MM$, if for any set $A$ she receives, the agent strictly prefers adding any non-empty subset of $S$ to $A$, over adding the entirety of $\MM \setminus S$ to $A$. In essence, this means the agent will always benefit from exchanging all her goods outside of $S$ for even a single good from $S\setminus A$. For example, an agent with lexicographic preferences over the goods according to some ordering $e_1,\ldots,e_m$ strongly desires every ``prefix'' of this ordering, e.g., the sets $\{e_1\}, \{e_1,e_2\},\{e_1,e_2,e_3\}$ are strongly desired.

The second key concept is of ``control''. Here, our definition remains the same as in \cite{AmanatidisBCM17}: an agent controls a set $S$ for mechanism $f$ if whenever she strongly desires $S$, she gets it whole (and possibly some additional goods). The central result which connects these two concepts is the Control Lemma.

\begin{restatable}{lemma}{control} [Control Lemma]
\label{control}
Let $f$ be a truthful, non-bossy, and neutral partition mechanism of a set of goods $\MM$ among $n$ agents with strict preferences from some permutations-closed class of preferences $\PP$ that contains the lexicographic preferences ($\PP\in\FFn(\MM)$).

Fix a subset $S\subseteq \MM$. If there exists a profile $\veco\in \PP^n$ such that all agents strongly desire $S$, and yet $S\subseteq f_i\veco$ for some  $i\in \NN$, then agent $i$ controls $S$ with respect to $f$.
\end{restatable}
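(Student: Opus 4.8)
The plan is to move, one report at a time, from the given profile $\veco$ to an arbitrary profile $\vecob$ in which agent $i$ strongly desires $S$, maintaining throughout the invariant that $S$ is contained in the bundle agent $i$ receives. Relabel so that $i=1$, and consider the sequence $\veco=\vecv^0,\vecv^1,\dots,\vecv^{n-1},\vecv^n=\vecob$, where $\vecv^t$ (for $1\le t\le n-1$) replaces the reports of agents $2,\dots,t+1$ by their $\vecob$-reports while the remaining agents keep their $\veco$-reports, and the last step additionally switches agent $1$ from $\preq_1$ to $\preqb_1$. Only two kinds of step occur, and only the second is hard. The final step is immediate: at $\vecv^{n-1}$ agent $1$ reports $\preq_1$, which strongly desires $S$, and by the invariant $S\subseteq f_1(\vecv^{n-1})$ with $f_1(\vecv^{n-1})\setminus S\subseteq\MM\setminus S$; switching to $\preqb_1$ (which strongly desires $S$ by hypothesis on $\vecob$), two-sided truthfulness gives $f_1(\vecv^{n-1})\succeq_{\preq_1}f_1(\vecob)$ and $f_1(\vecob)\succeq_{\preqb_1}f_1(\vecv^{n-1})$; if some $e\in S$ were missing from $f_1(\vecob)$, monotonicity and strong desire of $S$ under $\preqb_1$ give $f_1(\vecv^{n-1})\succeq_{\preqb_1}(f_1(\vecob)\cap S)\cup\{e\}\succ_{\preqb_1}(f_1(\vecob)\cap S)\cup(\MM\setminus S)\succeq_{\preqb_1}f_1(\vecob)$, contradicting the second inequality. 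This step uses only truthfulness, monotonicity, and the definition of strong desire.

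The inductive steps in which an agent $j\neq 1$ switches (from her $\veco$-report $\preq_j$, which strongly desires $S$, to her $\vecob$-report) are the substance of the lemma, and here is where the passage from two agents to $n$ agents bites. Let $\vecv$ be the profile before such a step (so $S\subseteq f_1(\vecv)$, and both $\preq_1$ and $\preq_j$ strongly desire $S$) and $\vecv'$ the profile after. Suppose toward a contradiction that $e\in S\setminus f_1(\vecv')$; as $f$ is a partition mechanism, $e\in f_k(\vecv')$ for some $k\neq 1$. If $k=j$: since $f_j(\vecv)\subseteq\MM\setminus S$ (because $S\subseteq f_1(\vecv)$) and truthfulness at $\vecv$ gives $f_j(\vecv)\succeq_{\preq_j}f_j(\vecv')$, while strong desire of $S$ under $\preq_j$ gives $f_j(\vecv')\succeq_{\preq_j}\{e\}\succ_{\preq_j}\MM\setminus S\succeq_{\preq_j}f_j(\vecv)$, we reach a contradiction. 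So necessarily $e$ must have drifted to some ``third'' agent $k\notin\{1,j\}$ whose report did not change; then $f_k$'s bundle changed, so by non-bossiness $f_j$'s bundle changed as well, and by strictness truthfulness at $\vecv$ upgrades to $f_j(\vecv)\succ_{\preq_j}f_j(\vecv')$, whence (again by strong desire) $f_j(\vecv')\cap S=\emptyset$; in fact, with the other reports held fixed, no report of agent $j$ can yield a bundle meeting $S$, since $\preq_j$ strongly desires $S$ yet selects from those available bundles one disjoint from $S$.

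Ruling out this last ``third-agent'' configuration is the main obstacle, and it is precisely the phenomenon that does not arise in the two-agent setting of \cite{AmanatidisBCM17}. To handle it I would invoke the structure already established earlier in the paper: the restriction of $f$ to profiles of lexicographic preferences is a serial-quota mechanism (\cref{proposition:lexi_case}), with a fixed agent order $\sigma$ and fixed quotas. Combining \cref{proposition:lexi_case} with the Step-1-type replacement (which lets us freely move agent $1$'s report among preferences strongly desiring $S$) and the hypothesis that \emph{all} agents strongly desire $S$ in $\veco$ while agent $1$ still receives all of $S$, one pins down that agent $1$ is the first agent in $\sigma$ with positive quota and that $q_1\ge|S|$. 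I would then transport a putative third-agent configuration back into the lexicographic regime — replacing agent $j$'s report by a lexicographic one routed through an $S$-focused lexicographic preference, using neutrality to control the internal orderings of $S$ and of $\MM\setminus S$, and using non-bossiness to carry equalities of bundles across the intermediate profiles — and derive a contradiction with the serial-quota picture, under which the first positive-quota agent, holding an $S$-focused report, must receive the entirety of $S$.

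I expect the genuinely delicate point to be that $f$ is not assumed anonymous, so the third agent $k$ and agent $j$ cannot simply be swapped; the argument must instead move only the \emph{goods} (via neutrality) and the \emph{reports} (via single-agent deviations), in a way that either reduces to the already-settled case $k=j$ of the previous paragraph or lands in the lexicographic serial-quota picture. Everything else — the chaining of single-agent deviations, the two-sided-truthfulness argument for agent $1$'s own switch, and the easy case of the loss going to the deviating agent — is routine given the definitions of strong desire, truthfulness, non-bossiness, and the strictness of the preferences.
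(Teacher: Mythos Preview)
Your proposal correctly identifies the main obstruction (the ``third-agent'' case where, upon a non-$i$ agent's deviation, a good of $S$ drifts to a bystander $k\notin\{1,j\}$), but it does not actually close that case: you only sketch that you ``would transport a putative third-agent configuration back into the lexicographic regime'' and flag this as ``the genuinely delicate point,'' without giving the transport. As written, the inductive step is incomplete precisely at the place where the lemma earns its keep.

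More importantly, the whole framework of maintaining the invariant $S\subseteq f_1$ through a chain of single-agent deviations is an unnecessary detour, and it is what creates the third-agent difficulty in the first place. The paper avoids this entirely: rather than walking from $\veco$ to $\vecob$ one report at a time, it jumps from \emph{each} of $\veco$ and $\vecob$ \emph{separately} to a lexicographic profile in a single application of monotonicity (\cref{monotonicity}). Concretely, for each agent one defines a lexicographic preference whose ordering is consistent with (for agent $i$) $(S,R_i,R_{-i})$ and (for $j\neq i$) $(S,R_j,R_{-j})$, where $R_j=f_j(\veco)\setminus S$; the resulting profile $\vecordp{e}$ is a push-up of $\veco$ (this is where the hypothesis that \emph{all} agents strongly desire $S$ is used), so $f\vecordp{e}=f\veco$ and hence $S\subseteq f_i\vecordp{e}$. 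Similarly one builds a lexicographic push-up $\vecordp{e'}$ of $\vecob$ with agent $i$'s ordering placing $S$ first. Now both $\vecordp{e}$ and $\vecordp{e'}$ are lexicographic, so \cref{lexi_serial_quota} applies: from $S\subseteq f_i\vecordp{e}$ with all agents ranking $S$ first one reads off $p_1=i$ and $q_1\ge|S|$, and then the same $(\vecp{q},\vecp{p})$ forces $S\subseteq f_i\vecordp{e'}=f_i\vecob$. No agent-by-agent chaining, no third-agent case, and neutrality is used only implicitly through \cref{lexi_serial_quota}.
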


The original version of the Control Lemma presented in \cite{AmanatidisBCM17} was proven only for $n=2$ agents, but without requiring that $f$ be neutral. Additionally, since any partition mechanism for $n=2$ agents is trivially non-bossy (neither agent can affect the {other's} outcome without affecting their own), the requirement of non-bossiness was satisfied (implicitly), but not explicitly stated as a requirement. We require it and only consider mechanisms that are non-bossy. Finally, we prove the lemma for a wide range of ordinal preference classes, while the original version was proven for additive valuation functions. {Our proof of the Control Lemma follows a vastly different approach to that of \cite{AmanatidisBCM17}: proving this result directly for $n\geq 3$ agents appears to be very hard, thus {we} {present an alternative approach which} makes use of an inherent connection between ``strong desire'' and lexicographic preferences to prove the lemma, using the characterization for lexicographic preferences shown earlier.} 
Thus, our version of the lemma is weaker for $n=2$ agents, but our result holds for all $n$ and for a wider range of preference classes. We also present counterexamples showing that all three properties of the mechanism (being truthful, non-bossy, and neutral)  are necessary for $n\geq 3$ agents (\cref{lemma:control-tight}).

{We use the Control Lemma to prove} 
the characterization for the case of partition mechanisms when the permutations-closed class of preferences is strict and contains the lexicographic preferences. We then show that the result can be extended to get the same characterization (as serial-quota mechanisms) not only for partition mechanisms but also for allocation mechanisms (that do not have to allocate all goods), using a reduction argument.

\subsection{Implications of Our Main Results}\label{sec:intro-imp}

We next discuss some implications of our characterization result for ordinal preferences, first to derive characterization results for cardinal preferences (valuations), and then also to derive strong impossibilities for truthful fair allocation of indivisible goods to agents with additive valuations (for mechanisms that are non-bossy and neutral), {even with ties}. 
These impossibilities are in sharp contrast to known results, which demonstrate that good fairness properties can be guaranteed by allocation algorithms (which are not truthful).

\subsubsection{Characterization Results for Cardinal Preferences} 

Let us start with the implied characterization results for cardinal preferences. \emph{Cardinal preferences}, otherwise known as \emph{valuation functions} (or simply \emph{valuations}), are more nuanced than ordinal preferences. 
While ordinal preferences only  provide information on the relative ranking of different subsets of goods, valuation functions provide a real number for each subset, capturing the (cardinal) utility of the subset. Such information allows inter-set comparison of the form \emph{``by how much is one subset better than the other''} and makes the notion of \emph{utility approximation} meaningful (we use that later when discussing MMS {approximation}). 
Each cardinal preference induces a unique ordinal preference, but the opposite is not true. Despite this disparity in information, we prove that our characterization for strict ordinal preferences applies to strict cardinal preferences as well. Formally:

\begin{restatable}{application}{cardinalstrict}
\label{application:cardinal-strict}
    Let $f$ be an allocation mechanism of a set of goods $\MM$ among a set $\NN$ of $n$ agents with \emph{valuations} from a class $\VV$ of \emph{strict} (and monotone) valuation functions that is permutations-closed and induces every lexicographic preference, when not all goods need to be allocated.
    
    Mechanism $f$ is truthful, non-bossy, and neutral if and only if $f$ is a serial-quota mechanism on ${\VV}$.
\end{restatable}

This result is significant as  serial-quota mechanisms cannot make use of the cardinal aspect of the {agents' preferences}: 
for each agent, only the {agent's most preferred set} matters (but not the cardinal value of the set - the intensity of the preference). 
In particular, the additional information provided by the cardinal preference cannot  be used to optimize the allocation.

We also present an extension of this result to classes of valuations that might not be strict. 
For a valuation class $\VV$ of weakly-monotone valuation functions, 
we denote by $\VV_{\neq}$ the subset of strict valuations in $\VV$. 
{For example, if $\VV$ is the class of all additive valuations over goods, the class $\VV_{\neq}$ is the class of additive valuations over goods for which two different subsets of goods never have the same utility (no ties).
  }

\begin{restatable}{corollary}{cardinalnonstrict}
\label{corollary:cardinal-nonstrict}
    Let $f$ be a cardinal allocation mechanism of a set of goods $\MM$ among         a set of $n$ agents with \emph{weakly-monotone valuations}  from a class $\VV$ that is permutations-closed and induces every lexicographic preference, when not all goods need to be allocated. Mechanism $f$ is truthful, non-bossy, and neutral on $\VV_{\neq}$ if and only if $f$ is a serial-quota mechanism on $\VV_{\neq}$.
\end{restatable}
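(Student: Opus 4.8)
The plan is to obtain \cref{corollary:cardinal-nonstrict} as a direct consequence of \cref{application:cardinal-strict}, by restricting attention to the strict part of the domain. Concretely, I would argue with the restriction $\rest{f}{\VV_{\neq}}$ of $f$ to profiles in $\VV_{\neq}^n$, and observe that each of the three properties in the statement — being truthful, non-bossy, and neutral \emph{on $\VV_{\neq}$} — together with the target property of being a serial-quota mechanism \emph{on $\VV_{\neq}$}, depends only on the behaviour of $f$ on profiles drawn from $\VV_{\neq}^n$; that is, these are exactly the corresponding properties of the mechanism $\rest{f}{\VV_{\neq}}$ viewed as a mechanism whose domain is $\VV_{\neq}$. (For neutrality one additionally needs that renaming the goods keeps a profile inside $\VV_{\neq}^n$, which is part of the next step.) Thus it suffices to show that \cref{application:cardinal-strict} applies to $\rest{f}{\VV_{\neq}}$ over the class $\VV_{\neq}$.

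The one substantive step is verifying that $\VV_{\neq}$ itself meets the hypotheses of \cref{application:cardinal-strict}: it is a permutations-closed class of strict and monotone valuation functions that induces every lexicographic preference. Strictness is immediate from the definition of $\VV_{\neq}$; monotonicity follows because a weakly-monotone valuation that is strict is in fact strictly monotone (if $A\subsetneq B$ then $v(A)\le v(B)$ and $v(A)\neq v(B)$, so $v(A)<v(B)$). Permutations-closedness of $\VV_{\neq}$ follows from permutations-closedness of $\VV$ together with the observation that relabelling the goods sends a strict valuation to a strict valuation. Finally, $\VV_{\neq}$ induces every lexicographic preference: since $\VV$ does, for each lexicographic preference there is some $v\in\VV$ inducing it, and any valuation inducing a lexicographic (hence strict) ordinal preference cannot assign equal value to two distinct sets, so $v\in\VV_{\neq}$.

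With these observations in hand, applying \cref{application:cardinal-strict} to $\rest{f}{\VV_{\neq}}$ over $\VV_{\neq}$ gives that $\rest{f}{\VV_{\neq}}$ is truthful, non-bossy, and neutral if and only if it is a serial-quota mechanism on $\VV_{\neq}$, which is precisely the claim. I do not expect a genuine obstacle here; the only points that require a moment's care are the verification that $\VV_{\neq}$ still induces all lexicographic preferences — which, as noted, is automatic because any valuation whose induced ordinal preference is strict must itself be strict — and the routine check that the three mechanism properties are local to the domain $\VV_{\neq}^n$, so that passing from $f$ to its restriction neither creates nor destroys them.
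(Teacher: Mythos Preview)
Your proposal is correct and follows exactly the approach the paper intends: the corollary is presented as an immediate consequence of \cref{application:cardinal-strict} applied to the restriction of $f$ to $\VV_{\neq}$, and your verification that $\VV_{\neq}$ inherits the needed hypotheses (permutations-closed, strict, monotone, induces every lexicographic preference) is the natural and only substantive step. One minor remark: your detour through \emph{strict} monotonicity is unnecessary, since \cref{application:cardinal-strict} only requires weak monotonicity, which $\VV_{\neq}$ inherits trivially from $\VV\subseteq\Vmon(\MM)$.
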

This corollary implies strong negative results on fair truthful mechanisms, as we discuss next.

\subsubsection{Fairness of Truthful Mechanisms} 
For a class of valuations that contains valuations with ties, the above result shows that we know how every cardinal allocation mechanism that is truthful, non-bossy, and neutral must behave on the sub-class of \emph{strict} valuations: it must behave like a serial-quota mechanism on these valuations. In particular, this implies that any such mechanism must behave like a serial-quota mechanism on the sub-class of additive valuations that are strict. As it is known \cite{AmanatidisBM16} that every serial-quota mechanism has poor MMS approximation\textsuperscript{\ref{footnote:ref-to-mms}} 
of only $\frac{1}{\lfloor{\frac{m-n+2}{2}}\rfloor}$, we derive the following strong negative result for fair allocation of indivisible goods.

\begin{restatable}{application}{mms}
{
Consider $n\geq 2$ agents with additive valuations over $m\geq n+2$ goods. Then for any $\rho>\frac{1}{\lfloor{\frac{m-n+2}{2}}\rfloor}$, any cardinal allocation mechanism which is truthful, non-bossy, and neutral on the sub-class of strict additive valuations cannot be $\rho$-MMS.
}

\end{restatable}
The simple mechanism for strict additive valuations in which each agent picks one good in her turn and the last agent receives all remaining goods is $\rho$-MMS for $\rho=\frac{1}{\lfloor{\frac{m-n+2}{2}}\rfloor}$, and this mechanism is a serial-quota mechanism so it is  truthful, non-bossy, and neutral.  Thus, this result shows that no mechanism which is truthful, non-bossy, and neutral on additive valuations can have a better MMS approximation than this trivial mechanism.

We thus see that the impact of truthfulness on performance is significant: in contrast to our negative result showing that $\rho$ is at most $O(1/m)$ for truthful mechanisms, allocation algorithms (that are not truthful but are neutral and non-bossy) give $\rho$ that is a constant.\footnote{Getting the full MMS is impossible, by \cite{KurokawaPW18}. A sequence of works explored the best fraction obtainable, with the state of the art being slightly above $3/4$ \cite{akrami2023breaking}.} 
For example, $\rho=1/2$ can be obtained by the ``bag-filling'' algorithm presented in \cite{Garg2019}. This algorithm is a simple greedy allocation algorithm (that is non-truthful) which is non-bossy and neutral.

We next consider the fundamental envy-based fairness notion for indivisible items, that of  \emph{envy-free-up-to-one-good (EF1)}.
We observe that when preferences are additive (or any class that contains all strictly additive valuations), 
a serial-quota mechanism that is EF1 is very limited in form:  
all agents but the last must receive a quota of one good, while the last
receives either one or two goods. 
In particular, all but at most $n+1$ goods must be left unallocated (ruling out partition mechanisms when $m\geq n+2$).
We use our characterization to conclude the following negative result regarding allocation of truthful, non-bossy, and neutral allocation mechanisms that are EF1:

\begin{restatable}{application}{addef}
    {Consider the class of additive valuations.
    If an EF1 cardinal allocation mechanism is truthful, non-bossy, and neutral
      on the sub-class of strict additive valuations,
    then the mechanism         {allocates at most $n+1$ goods on strict additive profiles.}}  
\end{restatable}

We thus see that the impact of truthfulness on the performance of neutral and non-bossy algorithms is {striking}: our result shows that {any} truthful mechanism that is EF1 must throw away almost all goods (throw at least $m-(n+1)$ goods), {while} the Round Robin algorithm can partition \emph{all} goods and is EF1 (on the reported preferences). {Moreover, in the case that all items are {nearly} identical {(no ties)}, every agent can only get at most $2$ items, a negligible fraction of the $\lfloor \frac{m}{n} \rfloor$ items she gets in the best EF1 allocation.}

\subsubsection{Technical Contributions} \label{sec:tech}

This section details our paper’s main methodological contributions.

\paragraph{Lexicographic Preferences as a Stepping Stone}
We establish our main result {(\cref{theorem:strict})} through a two-step strategy. First, we provide a complete characterization for the restricted class of \emph{lexicographic} preferences. Second, we generalize this result to any {permutations-closed} class of strict and monotone preferences (ordinal or cardinal) that contains the lexicographic preferences. The generalization relies on an extended version of the \emph{Control Lemma} (\cref{control}), originally proved for two-agent partition mechanisms \cite{AmanatidisBCM17}. This lemma makes use of the concept of ``strong desire'', wherein an agent desires each good in a given set $S$ individually more than the entirety of the goods outside this set. The Control Lemma is deeply tied to lexicographic preferences through this notion, as an agent with a lexicographic preference \emph{strongly desires} any set that is a prefix of her ordering over the goods. Thus, we prove the extended Control Lemma by converting general preference profiles to lexicographic preference profiles, and then making use of the full characterization shown for lexicographic preferences earlier. This method of utilizing lexicographic preferences as a stepping stone may be applicable in future work.

\paragraph{Characterization without Pareto efficiency}
Earlier results achieve similar characterizations only under Pareto efficiency (see \cref{sec:related}). In the lexicographic domain, however, Pareto efficiency follows from truthfulness, non-bossiness, and neutrality; we can therefore use Pareto Efficiency in the lexicographic domain ``for free''. This implication allows us to adapt arguments from \cite{Hatfield} to prove the main result for lexicographic preferences, without explicitly requiring Pareto Efficiency. Because the subsequent generalization step does not require the assumption of  Pareto efficiency, our final characterization for general monotone preferences—both ordinal and cardinal—dispenses with that assumption entirely. 
The key technique here is making use of a strong property (in our case, the property of Pareto efficiency), without assuming it explicitly in general, by only using it in a restricted setting where it is naturally implied by other assumptions.

\paragraph{Allowing Disposal of Goods}
Our analysis covers \emph{allocation} mechanisms that may leave some goods unassigned. We accomplish this by reducing any \emph{allocation} mechanism for $n$ agents to a \emph{partition} mechanism for $n+1$ agents by introducing a dummy agent. This agent receives all leftover goods, regardless of her reported preference. We show that our assumptions of truthfulness, non-bossiness, and neutrality of the $n$ agents allocation mechanism, carry over to this newly defined partition mechanism. Since the dummy agent's allocation does not depend on her reported preference, she must appear last in the serial-quota partition mechanism, ensuring that the original allocation mechanism retains the serial-quota form. 
We note that this reduction technique is rather general and only requires the preservation of the assumptions (it is not specific to the particular properties of truthfulness, non-bossiness, or neutrality). Thus, this technique is applicable to any setting where the required assumptions continue to hold when another dummy agent is added. 
Crucially, this technique requires a characterization for arbitrary $n$, and therefore is not applicable to the characterization of partition mechanisms for two agents presented in \cite{AmanatidisBCM17}.

\subsection{Future Work}\label{sec:future}

We discuss several directions for future work.

The first direction is of finding characterization results for truthful, non-bossy, and neutral mechanisms for classes of preferences not covered by our results. In particular, we focus on strict preferences, and do not handle preferences with ties.
One may first look for a characterization result for truthful, non-bossy, and neutral mechanisms over the entire class of weakly-monotone preferences. These preferences pose the unique challenge that an agent is not even guaranteed to demand their entire quota: an agent may be offered a quota of several goods, but some sets in the demand will be smaller than the quota.

Another direction for future work is that instead of focusing on characterization results, focus on proving fairness  
impossibilities for truthful mechanisms with fewer assumptions. 
Our characterization assumes that the mechanism is not only truthful, but also non-bossy and neutral, and so our fairness impossibilities only hold for these mechanisms. It would be interesting to get similar negative results without assuming neutrality (even when explicitly assuming non-bossiness), but for more than two agents.
Alternatively, one can try to obtain results for neutral mechanisms that are not non-bossy.

\subsection{Related Work}
\label{sec:related}

The literature relevant to this article originates from two primary fields of study, {one on mechanism design and the other on fair allocation of indivisible goods}.

\subsubsection{Mechanism Design for Indivisible Goods}
Our paper is part of the vast literature on the design of allocation mechanisms in settings with private preferences (without transfers). We next discuss several papers in this literature that are most related to ours.  

 Serial-quota mechanisms were first presented by Papai in \cite{Papai2000}, where it was shown that when {agents' preferences} are strict and quantity-monotone (more is always better), an allocation mechanism {is} strategy-proof, non-bossy, Pareto-efficient, and neutral if and only if it is a serial-quota mechanism. A similar result was shown when {agents' preferences} are strict and monotone: an allocation mechanism {is} strategy-proof, non-bossy, Pareto-efficient, and neutral if and only if it is a \emph{constrained} serial-quota mechanism, where all agents but the first receive a quota of at most one good. The main contribution of our work over \cite{Papai2000} is removing the strong requirement of Pareto-efficiency, {thus permitting the use of any chosen quotas}.

Some papers examine a model where the mechanism must abide some preexisting \emph{quota-system}, which determines exactly how many items each agent must receive. Hatfield \cite{Hatfield} studies the case when each agent must receive exactly $Q$ goods, and agents are assumed to have responsive preferences over items, and showed that an allocation mechanism is strategy-proof, Pareto-efficient, and non-bossy if and only if it is a sequential\footnote{While ``serial'' constrains the order and quotas to be independent of preferences, ``sequential'' allows some dependence\cite{Hatfield}.}-quota mechanism, and that adding the assumption of neutrality further restricts the characterization to serial-quota mechanisms. Hosseini and Larson \cite{Hosseini2016} study the case when there is some quota-system and {agents' preferences} are lexicographic,  and shows that an allocation mechanism is strategy-proof, non-bossy, neutral, and Pareto $C$-efficient\footnote{An allocation mechanism is Pareto $C$-efficient if it's allocations are Pareto optimal among all allocations that allocate exactly $C$ items in total.}  if and only if it is a serial-quota mechanism.

Another prominent model studied in the field of allocation of indivisible goods is the ``house allocation model''\cite{Shapely1974,Ma1994, Svensson1994, Svensson1999, Papai2000b}, in which each agent can receive only one good.
In \cite{Svensson1999} a characterization result is presented for this model, which is very similar to the one we prove for our model:
it is shown that a mechanism for the house allocation model is truthful, non-bossy, and neutral if and only if it is serially-dictatorial (agents choose their house in turn according to some predefined ordering). Our paper studies the more general and challenging problem, when agents have preferences over subsets and it is possible to allocate \emph{multiple} goods to an agent, sometimes known as the ``multiple assignment'' model.

There have {been} many different characterizations shown for the multiple assignment problem, each with different assumptions about the preferences of the agents and the properties the mechanism is assumed to possess. A common element in all these papers is the requirement that the mechanism be Pareto-efficient\footnote{In some papers (such as \cite{Svensson1999}) Pareto-efficiency is not assumed outright but instead proven from the other assumptions. This is still fundamentally different from our characterization, which includes mechanisms that are not Pareto-efficient.}.

Our characterization consider mechanisms that are ``non-bossy'', a property that is very helpful in proving our results.  
Thomson \cite{Thomson2016} acknowledges its technical appeal, but presents various critiques of this property in different models.
Characterizing truthful mechanisms without it, and understanding the limits of fair truthful mechanisms that are not required to satisfy this property, are two interesting research directions.

\subsubsection{Fair Allocation of Indivisible Goods}

{We} move to consider papers that do not assume {Pareto-efficiency}, mainly in the context of fair allocation.
There is extensive research on fair allocation, much more than we can survey in this section. The reader is referred to the books \cite{brams1996fair,moulin2004fair} and surveys \cite{AzizSurvey2022, fairSurvey2022} for background. We next discuss some {papers} that are most related to ours. 

The work of Amanatidis et al. \cite{AmanatidisBCM17} characterizes truthful partition mechanisms for two agents. As all goods must be allocated and as there are only two agents, the mechanism is trivially non-bossy. Unlike our paper, {the} paper {considers} mechanisms that are not constrained to be neutral, and thus the set of truthful mechanisms is much richer than only serial-quota mechanisms. On the other hand, we are able to present characterizations for more than two agents, and for mechanisms that are allowed not to allocate all goods, by focusing on neutral mechanisms (with an explicit assumption of non-bossiness).
{The earlier work of \cite{AmanatidisBM16} provides some general bounds on the fairness of truthful mechanisms in different models, especially the case when there are two agents.}

In the context of assignment of chores (items of negative value), \cite{ijcai2019p9} has studied the power of truthful {mechanisms}, presenting bounds on MMS approximation of serial-quota mechanisms for both the ordinal and the cardinal models. 
The paper {leaves} gaps between its upper and lower bounds on the obtainable MMS approximation, and it might be possible to close those gaps (likely proving the upper bounds are tight) by presenting a characterization of truthful (non-bossy and neutral) mechanisms for the assignment of chores. 

The issue of dependence on item names was discussed in several papers on fair allocation \cite{BF22,BabaioffF2024}. The paper \cite{BF22} defines a general notion of a ``share'' for agents with equal entitlements, and  requires shares to satisfy  ``name-independence'', which is essentially neutrality, but for shares rather than mechanisms. The paper \cite{BabaioffF2024} extends the definition to agents with arbitrary entitlements, and studies both the case {where shares} are required to be ``name-independent'', and the case they do not.

While \cite{AmanatidisBCM17} studied truthful mechanisms and obtained strong negative results, several other papers have {studied} other notions aimed {at} {incorporating} incentives into fair allocation in different ways. 
Several papers \cite{AmanatidisBFLLR21, AmanatidisBL0R23} have studied Nash equilibrium of the Round Robin procedure (in which {agents} cycle in picking a good each).  Round Robin is a partition mechanism that is non-bossy and neutral but not truthful. 
It was shown \cite{AmanatidisBFLLR21} that there is a Nash equilibrium in the induced game that is EF1. 
Existence of a Nash equilibrium is weaker than dominant strategy truthfulness, and our {results} show that to get a positive result in a strategic setting for EF1 partition by a (non-bossy and neutral) algorithm, one must indeed relax truthfulness.

An alternative approach to addressing incentives was suggested in \cite{BF22}, which introduced the notion of ``self-maximizing'' shares, a property of shares that {incentivizes} ``pessimistic'' agents to report valuation functions truthfully.

\section{The Ordinal Model}\label{sec:model}

We consider the allocation of a set $\MM$ of $m$ indivisible goods to a set $\NN$ of $n$ agents (for $m,n\in \mathbb{N}$). 
Each agent $i\in \NN$ has {an \emph{ordinal preference} (or simply a \emph{preference})} 
{$\preq_i$} over {subsets of}
$\MM$, which defines an order relation on pairs of subsets of $\MM$.
That is, for every pair of subsets $S,T\subseteq \MM$, we write $T\preq S$ (or $S \succcurlyeq T$) when subset $S$ is weakly preferred to $T$ according to the preference $\preq$. 

\subsection{Preferences}
{When considering ordinal preferences, we study preferences that are strict.}

\begin{definition}[{Strict preferences}]
For any pair of subsets $S,T\subseteq \MM$, subset $S$ is \emph{strictly preferred} to $T$ according to the preference $\preq$, if  $T\preq S$ but not $S\preq T$. In this case we {write} $T\prec S$ (or $S\succ T$). A preference $\preq$ is \emph{strict} if  for any two distinct sets, {there is one that is strictly preferred to the other}. Formally, $\preq$ is strict if for every $S,T\subseteq \MM$, if $S\neq T$ then either $S\prec T$ or $T\prec S$. 
\end{definition}

We use $\PP(\MM)$ to denote a \emph{class} (set) of preferences over $\MM$. When $\MM$ is clear from the context, we simplify notation and  denote $\PP(\MM)$ by $\PP$. We use $\veco\in \PP^n$ to denote a \emph{preference profile}, a vector of $n$ preferences in $\PP$, one for each agent.

Throughout this paper we assume that items are \emph{goods}: an agent is never harmed by getting additional items. That is, we assume that all preferences are weakly monotone.  A preference $\preq$ is \emph{weakly monotone} (or simply \emph{monotone}) if {for every $T\subseteq S \subseteq \MM$} it holds that $T\preq S$. 
{Moreover, we will assume that {ordinal} preferences are strict. We denote the set of all \emph{monotone and strict} preferences over $\MM$ by $\Pnmon(\MM)$.}

{We are interested in mechanisms where the numbering of the goods does not affect the outcome (see \cref{definition:neutrality}). To make this formal we next define the notion of \emph{permuted preferences}.} 
Let $\perM$ denote the set of all permutations over $\MM$. For permutation  $\pi\in \perM$ and set $S\subseteq\MM$ we use $\pi(S)$ to denote the set of goods $\left\{\pi(x)\mid x\in S\right\}$. 

\begin{definition}[Permuted preference]
\label{definition:permuted-preference}
    Given a permutation $\pi\in\perM$, for any preference $\preq$, let $\preq^{\pi}$ denote the preference defined as follows. For $S,T \subseteq \MM$:    \[S \preq^{\pi} T \iff \pi^{-1}(S) \preq \pi^{-1}(T) \]
\end{definition}

The above requirement can be equivalently stated: $\pi(S) \preq^{\pi} \pi(T) \iff S\preq T$ for all $S,T\subseteq \MM $. In particular, for a strict preference $\preq$, for each set $S$, the rank of the set $\pi(S)$ according to the preference $\preq^{\pi}$ is the same as the rank of $S$ according to the preference $\preq$.

We say that a set of preferences $\PP$ is \emph{permutations-closed} if it is closed under permutations, that is, for any permutation  $\pi\in \perM$ and $\preq\in \PP$ it holds that the preference $\preq^{\pi}$ is in $\PP$ as well. Throughout this paper we assume that the set of preferences $\PP$ over which an allocation mechanism is defined is always permutations-closed. This assumption is satisfied by many natural classes of preferences, e.g., {the class of} lexicographic preferences, {the class of additive preferences} and {their superclass} of all monotone preferences.

Preferences that are lexicographic play an important {role} in our proofs. A lexicographic preference is defined by an  ordering $(e_1,e_2,\ldots, e_m)$ over goods, and the preference is determined by the ``first good'' in the ordering that belongs to one set but not the other.
\begin{definition}[{Lexicographic preferences}]
\label{def:lexi}
    A preference $\preq$ is \emph{lexicographic} if there exists an ordering $(e_1,e_2, \ldots, e_m)$ of the set $\MM$ of $m$ goods such that for all $S,T\subseteq \MM$, it holds that $T \prec S$ if and only if there exists $j\leq m$ such that $e_j \in S\setminus T$ and for all $k<j$ it holds that $e_k\in S   \iff e_k\in T$. We denote  the lexicographic preference corresponding to the ordering $\vecp{e}=(e_1,\ldots , e_m)$ by $\preq_{\vecp{e}}
    $. We use $\Plex(\MM)$ to denote the class of all lexicographic preferences over $\MM$.
\end{definition}

As any lexicographic preference must be both weakly-monotone and strict, it holds that  $\Plex(\MM)\subsetneq \Pnmon(\MM)$. Moreover, since the permutation of any lexicographic preference must also be lexicographic, the class $\Plex(\MM)$ is permutations-closed {(see \cref{appendix:subsec:lexi} for more discussion).}

Our proof extensively uses orderings of goods {which} are consistent with some order over subsets as follows.
\begin{definition}[{Order consistency}]
    Let $(S_1,S_2, \ldots ,S_k)$ be a partition of the goods $\MM$ into $k\geq 2$ sets. We say that the ordering $\vecp{e}=(e_1,e_2,\ldots,e_m)$ \emph{is consistent with} the order of subsets $(S_1,S_2, \ldots ,S_k)$ if for each pair of indices $i<j$, all goods in $S_i$     {precede those} in $S_j$ in the order $\vecp{e}$. That is, $S_1=\{e_1,e_2,\ldots,e_{|S_1|}\}$  (as sets), $S_2=\{e_{|S_1|+1},\ldots,e_{|S_1|+|S_2|}\}$ and so forth.
\end{definition}

Our paper focuses on  characterization of mechanisms for classes of preferences that belong to some families of classes we define below.
\begin{definition}
Fix a set $\MM$ of goods.
    {We use $\FFn(\MM)$ to  denote the family of all permutations-closed classes of strict preferences that contain the entirety of the lexicographic preferences $\Plex(\MM)$. (As items are goods, every preference in the class must be in $\Pnmon(\MM)$). Formally:
\[\PP\in \FFn(\MM) \iff \Plex(\MM) \subseteq \PP\subseteq \Pnmon(\MM) \text{ and $\PP$ is permutations-closed.}\]

    }
\end{definition}

Our main characterization result gives us complete understanding of the structure of truthful, non-bossy, and neutral mechanisms, not only for the case that the class of preferences is $\Pnmon(\MM)$, but actually for any class of preferences in $\FFn(\MM)$.
Note that  $\Pnmon(\MM)$ includes more preferences than any of its sub-classes. 
{This means that an agent has more potential
``deviations'' when reporting her preferences,} so there are additional truthfulness constraints. Thus, a-priori, the characterization {of truthful mechanisms} for $\Pnmon(\MM)$, might be different than the one for its sub-classes.

\subsection{Mechanisms}\label{sec:mechanisms}

We study (deterministic) allocation mechanisms\footnote{In this paper we focus on \emph{direct revelation} mechanisms. This is without loss of generality, by the revelation principle. We also focus on deterministic mechanisms, and do not consider mechanisms that are allowed to use randomization.} of goods to agents, without monetary transfers, when the preference of each agent is private information to her. Given a profile of (reported) preferences, an allocation mechanism picks an allocation of goods to the agents.  An \emph{allocation} $A=(A_1,\ldots, A_n)$ is a collection of $n$ disjoint subsets of $\MM$. That is, $A_i\subseteq \MM$ for all $i\in \NN$, and $A_i\cap A_j=\emptyset$ for all $i\neq j\in \NN$. For {allocations} it holds that $\cup_{i\in \NN} A_i\subseteq \MM$, allowing goods to be left unallocated. We say that an allocation $A$ is a \emph{partition} of $\MM$ if all goods are allocated, that is, $\cup_{i\in \NN} A_i= \MM$. We use $\ALLOCsM$ to denote the set of all allocations of $\MM$ among $n$ agents.

\begin{definition}[Allocation mechanism]
An \emph{allocation mechanism} $f$ of a set $\MM$ of goods to a set $\NN$ of $n$ agents, each with preferences from $\PP$, is a (deterministic) function that maps a {reported} profile of $n$ preferences
{from} $\PP$, to an allocation $A\in \ALLOCsM$ of the set of goods $\MM$ among the agents. That is, for $f:\PP^n\rightarrow \ALLOCsM$, when the {reported} preference profile is $\veco\in \PP^n$, the set that $f$ allocates to agent $i$ is $A_i=f_i\veco$.
\end{definition}

{
Throughout the paper, when we talk about an allocation mechanism $f$, unless stated otherwise, we assume that  $f:\PP^n\rightarrow \ALLOCsM$ is an allocation mechanism of a set  $\MM$ of $m$ goods  to a set $\NN$ of $n$ agents, each with  preference in the class $\PP$ of strict preferences.
}
Some of the results we prove are for allocation mechanisms that must allocate all the goods, which we call ``partition mechanisms''.
\begin{definition}[Partition mechanism]
    \label{partition-mechanism}
    An allocation mechanism $f$  is a \emph{partition mechanism} if for every preference profile $\veco\in \PP^n$ the allocation $A=f\veco$ is a partition of $\MM$.
\end{definition}

We next formalize some standard properties of mechanisms from the literature. 

\begin{definition}
    [Truthfulness]
    {An allocation} mechanism  $f$ is \emph{truthful} (in dominant strategies), if for every  agent $i\in\NN$, every preference profile $\veco= (\preq_i,\preq_{-i})\in \PP^{n}$, and every alternate {reported}     preference $\preqa_i\in \PP$ of agent $i$, it holds that $ f_i(\preqa_i,\preq_{-i})\preq_i f_i(\preq_i,\preq_{-i})$.
\end{definition}
Intuitively, truthfulness states that  for any profile of preferences, there is no agent that can get a strictly better bundle by misreporting  her preference.

\begin{definition} [Non-Bossiness]
\label{definition:non-bossy}
     {
     Consider a class $\PP$ of strict preferences. {An allocation} mechanism $f$ is \emph{non-bossy} if for every agent $i\in\NN$, every preference profile $\veco= (\preq_i,\preq_{-i})\in \PP^{n}$, and every alternate preference $\preqa_i\in \PP$:
     \[f_i(\preqa_i,\preq_{-i})= f_i(\preq_i,\preq_{-i}) \implies f(\preqa_i,\preq_{-i}) = f(\preq_i,\preq_{-i})\]}
\end{definition}
{
Intuitively, non-bossiness states that in order for {a change in the preference of an} agent to ``affect'' the allocation of other agents, she must affect her own allocation.
}

In this paper we focus on mechanisms that are neutral, these are mechanisms for which the names of the goods are insignificant and permuting the names of the goods makes no difference.

\begin{definition} [Neutrality]
\label{definition:neutrality}
    {
    Consider any permutations-closed class $\PP$ of strict preferences. {An allocation} mechanism $f$ is \emph{neutral} if for every permutation $\pi\in\perM$ and preference profile $\veco\in \PP^n$, it holds that $f\vecopi=\pi(f\veco)$
    }
\end{definition}

{
Intuitively, neutrality states that if all agents agree on a specific ``renaming'' of the goods, and all update their preferences accordingly, 
all agents will get the permutation of their original allocation.}
The goal of this property is to focus on mechanisms that are precluded from being dependent on the arbitrary ``numbering'' of the goods.

An important property of an allocation mechanism is of being Pareto-efficient, only generating Pareto-optimal allocations.  
\begin{definition} [Pareto-efficiency]
    For a given preference profile $\veco$, an allocation $A=(A_1,\ldots, A_n)$ is \emph{Pareto-optimal} if there does not exist an alternate allocation $B=(B_1,\ldots, B_n)$ such that $A_i\preq_i B_i$ for all agents $i\in\NN$ and for at least one agent the inequality is strict. 
    An allocation mechanism $f$ {for agents with preferences from class $\PP$} is \emph{Pareto-efficient}, if for every preference profile $\veco{\in \PP}$ the allocation $A=f\veco$ is Pareto-optimal.
\end{definition}

\subsection{Serial-Quota Mechanisms}
For classes of strict preferences, our characterization relates to serial-quota mechanisms which we define below. We first need to define the demand of an agent with some preference and quota.

\begin{definition}
Given a set of goods $S\subseteq \MM$, a number $k\in[|S|]\cup \{0\}$ and a \emph{strict} preference $\preq\in \PP$, define the \emph{$k${-demand} from $S$ according to $\preq$}, denoted $\dem{\preq}{k}{S}$, as {the (unique) most preferred subset of at most $k$ goods from $S$, according to $\preq$.} Formally\footnote{{We use  $\argmax_{\preq}\GG$ 
to denote the highest ranked set from a family of sets $\GG$ according to \emph{strict} preference $\preq$.}}:
\[\dem{\preq}{k}{S}=\argmax_{\preq}\{T\subseteq S,\space |T|\leq k\} \]
\end{definition}

\begin{definition}
\label{quota-ordering-pair}
    A \emph{quota-vector $\vecp{q}$} for $n$ agents and $m$ goods is a list of non-negative integer  \emph{quotas} $q_1,\ldots, q_n$ such that $\sum_{i=1}^n q_i \leq m$. An \emph{ordering  $\vecp{p}$} of $n$ agents is a list $p_1,\ldots , p_n\in \NN$ of distinct agents. Let $\Qnm$ denote the set of all pairs $\qp$ of quota-vectors and orderings for $n$ agents and $m$ goods, satisfying the following two requirements:
\begin{enumerate}
    \item All {agents with quota zero} must be at the end of the ordering. That is, if $q_i=0$ for some $i\in[n]$, then for all $j>i$, $q_j=0$ as well.
    \item Agents with quota zero are ordered according to their index in the input. That is, letting $i$ be the first index with quota $0$, agents $p_i,\ldots,p_n$ are ordered so that $p_i<\ldots<p_n$.
\end{enumerate}
\end{definition}

The reason for including the two additional requirements is to ensure a unique representation of serial-quota mechanisms that are equivalent. We give more details
{in \cref{appendix:subsec:sq-mech}}.

\begin{definition}
{
    Let $\PP$ be a class of \emph{strict} preferences over a set $\MM$ of $m$ goods. For a pair $\qp\in\Qnm$, we say the mechanism $f$ is a \emph{$\qp$-serial-quota mechanism} for $n$ agents with preferences from class $\PP$, if for any profile of preferences $\veco\in \PP^n$ it holds that for every index\footnote{In the case $i=1$, the first agent simply picks her favorite $q_1$ goods from the entirety of $\MM$.} $i\in [n]$:
        \[f_{p_i}\veco =  \dem{\preq_{p_i}}{q_i}{\MM \setminus \bigcup_{j=1}^{i-1} f_{p_j}\veco }   \]     
        That is, for every $i\in[n]$ in order, agent $p_i$ receives her $q_i$-demand from the goods remaining, according to preference $\preq_{p_i}\in \PP$. We let $\SQqp$ denote the unique serial-quota mechanism for the pair $\qp$.
}
\end{definition}

We say an allocation mechanism $f$ is a \emph{serial-quota mechanism} if there exists a quota-ordering pair $\qp\in\Qnm$ s.t. $f$ is a $\qp$-serial-quota mechanism.
For a quota-ordering pair $\qp\in\Qnm$, we say an allocation mechanism $f$ is a \emph{$\qp$-serial-quota mechanism on $\widehat{\PP}$}, if $f\veco=\SQqp\veco$ for every preference profile $\veco\in \widehat{\PP}^n$.

\section{A Characterization for Strict Ordinal Preferences} \label{sec:strict}
In this section we present a characterization of all truthful, non-bossy, and neutral allocation mechanisms for any permutations-closed class $\PP$ of strict preferences that contains the lexicographic preferences ($\PP\in\FFn(\MM)$), even when not all goods must be allocated. Our characterization shows that such mechanisms must be serial-quota mechanisms:

\strict*

To show that $f$ is a serial-quota mechanism we show that there exist $\qp\in\Qnm$ such that $f\veco=\SQqp\veco$ for every profile $\veco\in\PP^n$. 

Before moving to prove the theorem, let us outline the steps of the proof.
Considering a permutations-closed class of strict preferences over goods, we prove the claim in several steps. 
We first prove that when all preferences are lexicographic, the combination of truthfulness, non-bossiness, and neutrality implies Pareto-efficiency (\cref{proposition:Pareto-efficient}). Then, using Pareto-efficiency,  we prove the characterization for partition mechanisms when preferences are lexicographic (\cref{proposition:lexi-case}). We then move to consider any class of strict preferences that contains the lexicographic preferences. We prove the useful ``Control Lemma'' for these classes (\cref{control}). We then use an induction argument to prove the characterization for partition mechanisms over any such class of strict preferences (\cref{strict-partition}). Finally, we prove \cref{theorem:strict}, extending the result from partition mechanisms to allocation mechanisms using a reduction argument.

In the rest of the section we prove \cref{theorem:strict} according to the above 
outline.

\subsection{Monotonicity}
\label{sec:monotonicity}

The widely used property of \emph{monotonicity} is known to be tightly connected to truthful {and non-bossy} allocation {mechanisms}, and we also use that connection extensively. Intuitively, monotonicity states that if all agents alter their preferences in a way that makes the bundle they got even more preferable, the allocation must remain the same.

\begin{restatable}[Push-up]{definition}{pushupdef}
     We say that preference $\preqa\in \PP$ is a \emph{push-up} of $\preq\in \PP$ for set $S\subseteq \MM$ if any subset $Z\subseteq \MM$ s.t. $Z\preq S$ it holds that $Z\preqa S$. Let $f$ be an allocation mechanism over the preference class $\PP$. We say that the profile $\vecoa\in \PP^n$ is a \emph{push-up} of $\veco\in \PP^n$ for $f$ if for every agent $i\in\NN$, the preference $\preqa_i$ is a push-up of $\preq_i$ for allocation $f_i\veco$.
\end{restatable}

{Intuitively, this states that each agent with preference from $\vecoa$ ``desires'' the set {she is} allocated under $\veco$ even more.}

\begin{restatable}[Monotonicity]{lemma} {monotonicity}
 \label{monotonicity}
    Let $f$ be a truthful and non-bossy allocation mechanism for agents with preferences from a class $\PP \subseteq \Pnmon(\MM)$ of strict preferences. Then for any pair of valuation profiles $\veco,\vecoa\in \PP^n$ such that $\vecoa$ is a push-up of $\veco$ for $f$, it holds that $f\veco=f\vecoa$.
\end{restatable}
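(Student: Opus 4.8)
The plan is to transform $\veco$ into $\vecoa$ by changing one agent's reported preference at a time, maintaining the invariant that the whole allocation never changes. For $k=0,1,\ldots,n$, let $\sigma^{(k)} = (\preqa_1,\ldots,\preqa_k,\preq_{k+1},\ldots,\preq_n)$ be the hybrid profile in which the first $k$ agents have switched to their new preference; thus $\sigma^{(0)}=\veco$ and $\sigma^{(n)}=\vecoa$. I will show by induction on $k$ that $f\sigma^{(k)} = f\veco$, from which the lemma follows by taking $k=n$.

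The base case $k=0$ is immediate. For the inductive step, write $A := f\veco$ and assume $f\sigma^{(k-1)}=A$; in particular $f_k\sigma^{(k-1)} = A_k$. Let $B := f\sigma^{(k)}$ and $B_k := f_k\sigma^{(k)}$. Applying truthfulness at $\sigma^{(k-1)}$ --- where agent $k$ has true preference $\preq_k$ and could misreport $\preqa_k$, reaching $\sigma^{(k)}$ --- gives $B_k \preq_k A_k$. Applying truthfulness at $\sigma^{(k)}$ --- where agent $k$ has true preference $\preqa_k$ and could misreport $\preq_k$, reaching $\sigma^{(k-1)}$ --- gives $A_k \preqa_k B_k$.

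Now I invoke the push-up hypothesis: since $\vecoa$ is a push-up of $\veco$ for $f$ and $f_k\veco = A_k$, the preference $\preqa_k$ is a push-up of $\preq_k$ for the set $A_k$, so $B_k \preq_k A_k$ yields $B_k \preqa_k A_k$. Together with $A_k \preqa_k B_k$ and the strictness of $\preqa_k$ (every class in play consists of strict preferences), this forces $A_k = B_k$, i.e.\ $f_k\sigma^{(k)} = f_k\sigma^{(k-1)}$. Since $\sigma^{(k)}$ and $\sigma^{(k-1)}$ differ only in agent $k$'s report, non-bossiness upgrades this equality of agent $k$'s bundle to the full equality $f\sigma^{(k)} = f\sigma^{(k-1)} = A$, completing the induction.

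The argument is routine once truthfulness and non-bossiness are on the table; the one point that needs attention is that the push-up condition on $\preqa_k$ is defined relative to the \emph{original} allocation $f_k\veco = A_k$, so to use it at step $k$ we genuinely need the induction hypothesis $f_k\sigma^{(k-1)} = A_k$ --- this is why the argument is phrased as an induction rather than $n$ independent one-agent switches. One should also be careful to apply truthfulness in both directions (once with $\preq_k$ as the truthful report, once with $\preqa_k$) and to note that it is precisely the strictness of $\preqa_k$ that turns the two weak comparisons $A_k \preqa_k B_k$ and $B_k \preqa_k A_k$ into an equality of sets.
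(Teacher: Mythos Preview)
Your proof is correct and follows essentially the same approach as the paper's: both change one agent at a time via a hybrid-profile induction, using truthfulness in both directions together with the push-up condition and strictness to pin down agent $k$'s bundle, and then non-bossiness to propagate the equality to the whole allocation. The paper separates out the single-agent switch as a standalone ``Step~1'' before inducting, whereas you fold it directly into the induction, but the content is the same---and you correctly flag the reason the induction hypothesis is genuinely needed (the push-up is relative to $f_k\veco$, not $f_k\sigma^{(k-1)}$).
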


This connection between monotonicity and truthful {and non-bossy} allocation {mechanisms} is a well known fact that has been proven before for both ordinal and cardinal preferences \cite{Papai2001, Hatfield}. For the sake of completeness we present a proof using our notations in \cref{appendix:monotonicity}.

\subsection{Partition under Lexicographic Preferences} \label{subsec:lexi}

In this section we prove the main {result} 
for the special (and severely restricted) case when all the preferences are lexicographic, and moreover, when the mechanism is a partition mechanism that must allocate all goods. 

That is, we prove our characterization result (that any truthful, non-bossy, and neutral mechanism $f$ must be a serial-quota mechanism) for a mechanism $f$ that partitions a set $\MM$ of $m$ goods among a set $\NN$ of $n$ agents, each with preference from the class $\Plex(\MM)$ of lexicographic preferences.
{W}e first show that {when preferences are lexicographic,} if a  partition mechanism $f$ is truthful, non-bossy, and neutral, then it must be Pareto-efficient (\cref{sec:lex-PO}). We then use this to prove that it must be a serial-quota mechanism, using arguments similar to those of Hatfield \cite{Hatfield} (\cref{sec:lex-hatfield}).

\subsubsection{Pareto-efficiency}\label{sec:lex-PO}

{The following powerful result allows us to make use of previous work which was done for Pareto efficient mechanisms, without explicitly assuming Pareto efficiency.}

\begin{restatable}{proposition}{paretolexi}
    \label{proposition:Pareto-efficient}     
     Let $f$ be a \emph{partition} mechanism for $n$ agents with preferences from the class $\Plex(\MM)$ of lexicographic preferences. If $f$ is truthful, non-bossy, and neutral, then it is Pareto-efficient.
\end{restatable}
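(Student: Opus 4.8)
\textbf{Proof proposal for \cref{proposition:Pareto-efficient}.}

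The plan is to deduce Pareto-efficiency from \cref{cycle_nonexistence} by showing that, for lexicographic preferences and a partition mechanism, the absence of a strictly improving \emph{single-good} trading cycle already forces Pareto-optimality of every allocation $A = f\veco$. Suppose for contradiction that $A$ is not Pareto-optimal, so there is an allocation $B = (B_1,\ldots,B_n)$ with $A_i \preq_i B_i$ for every $i$ and $A_j \prec_j B_j$ for some $j$. Since $f$ is a partition mechanism, $A$ partitions $\MM$; note that $B$ need not be a partition, but we may freely assume it is one (give all unallocated goods of $B$ to agent $j$; this only helps $j$ by monotonicity and harms no one, since the extra goods were unallocated in $B$), so without loss of generality both $A$ and $B$ partition $\MM$.

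The first key step is to extract, from the pair $(A,B)$, a single good that \emph{some} agent strictly wants to move. Because preferences are lexicographic, each agent $i$ has an underlying good-ordering $\vecp{e}^{\,i} = (e^i_1, e^i_2, \ldots)$. For agent $j$, since $A_j \prec_j B_j$, there is a first position $t$ in $\vecp{e}^{\,j}$ where $B_j$ and $A_j$ differ with $e^j_t \in B_j \setminus A_j$; so agent $j$ strictly prefers to \emph{acquire} the good $x := e^j_t$, which currently belongs to some other agent $c$ in $A$. I would now build a directed ``wants'' graph on agents: draw an edge $j \to c$ whenever, comparing $A$ with $B$, agent $j$ is willing to take a specific good currently held by $c$ in $A$ while strictly improving. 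The second key step is to argue this graph contains a cycle all of whose edges can be realized by a \emph{single} simultaneous one-good swap around the cycle; the lexicographic structure is what makes this work, because for a lexicographic agent, gaining the single highest-ranked good on which $A_j$ and $B_j$ disagree is already a strict improvement regardless of what else changes, and losing a lower-ranked good (which lies after position $t$, or is not in $A_j$ at all in a way that matters) does not spoil it. Chasing the edge $j \to c \to \cdots$ through finitely many agents must close into a cycle $c_1 \to c_2 \to \cdots \to c_k \to c_1$; reading off the associated goods $x_{c_1}, \ldots, x_{c_k}$ gives a strictly improving single-good trading cycle over $A$ and $\veco$, contradicting \cref{cycle_nonexistence}.

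The main obstacle is making the second step fully rigorous: in general a Pareto improvement $B$ may require each agent to exchange a whole bundle, not a single good, and it is not a priori obvious that one can ``localize'' the improvement to a single-good cycle. The lexicographic assumption is essential here and must be used carefully — one has to track, for each agent on the candidate cycle, the precise position in that agent's ordering of the good she gains versus the good she gives up, and verify that the gained good strictly dominates (appears strictly earlier than) the lost good in her ordering, so that the one-good swap is a strict improvement for her. I expect the cleanest route is: among all agents who strictly prefer $B_i$ to $A_i$, and all the goods witnessing this, pick a good $x$ that is ``globally earliest'' in an appropriate sense (e.g.\ minimize over $j$ the rank, in $\vecp{e}^{\,j}$, of the first disagreement good), start the chain there, and at each step choose the continuation that again picks the earliest witnessing good for the next agent; a minimality/potential argument then shows the losing good at each step is strictly later in that agent's order than the good she gains, so every agent on the eventual cycle strictly improves. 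Once the single-good cycle is produced, \cref{cycle_nonexistence} closes the argument immediately.
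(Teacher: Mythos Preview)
Your overall approach is the same as the paper's: reduce Pareto-efficiency to \cref{cycle_nonexistence} by extracting a strictly improving single-good trading cycle from a Pareto-dominating allocation $B$. However, you make the argument substantially harder than it needs to be by missing one clean observation.

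Because lexicographic preferences are \emph{strict}, every agent $i$ with $A_i \neq B_i$ automatically satisfies $A_i \prec_i B_i$ (you have $A_i \preq_i B_i$ by Pareto-domination, and ties are impossible). So you do not need to start from a single strictly-improving agent $j$ and ``chase''; instead, let $V = \{i : A_i \neq B_i\}$, and for every $i\in V$ define $y_i = \argmax_{\preq_i}(B_i\setminus A_i)$, the top-ranked good agent $i$ gains. Since $A$ is a partition, $y_i \in A_{c}$ for some agent $c$, and since $y_i \in B_i$ forces $y_i \notin B_c$, we get $c\in V$. This gives a directed graph on $V$ where every vertex has out-degree (equivalently, in-degree) exactly one, so a simple cycle exists immediately---no ``globally earliest'' potential argument is needed.

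For the strict-improvement check along the cycle, your instinct that ``the gained good strictly dominates the lost good'' is correct, and the reason is short: since $A_i \prec_i B_i$ lexicographically, $y_i$ is the top-ranked element of the entire symmetric difference $A_i\triangle B_i$, hence in particular ranks above every element of $A_i\setminus B_i$. The good $x_i$ that agent $i$ gives away along the cycle is $y_{i'}$ for the next agent $i'$, so $x_i \in B_{i'}$, hence $x_i\notin B_i$, hence $x_i\in A_i\setminus B_i$. Therefore $y_i$ outranks $x_i$ in $\preq_i$, and the swap $A_i \mapsto (A_i\setminus\{x_i\})\cup\{y_i\}$ is a strict lexicographic improvement. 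This replaces your proposed minimality argument entirely.
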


{This result is crucial to our proof, as we later show that the characterization for lexicographic preferences can be generalized to monotone preferences without assuming Pareto efficiency. The proof of this result, along with the necessary definitions and lemmas is given in \cref{appendix:lexi-po}.} With Pareto-efficiency, proving that the mechanism must be a serial-quota mechanism follows from arguments presented in Hatfield \cite{Hatfield}, which we adapt to our problem in the next section.

\subsubsection{Characterization for Partitions under Lexicographic Preferences}
\label{sec:lex-hatfield}

We
{take inspiration} 
from 
\cite{Hatfield} to prove that for lexicographic preferences, if a partition mechanism $f$ is truthful, non-bossy, and neutral, then it must be a serial-quota mechanism. In proving this result we make use of variants of several lemmas that Hatfield proves.  Since the definitions and assumptions of Hatfield are different from ours, the lemmas cannot be used as stated, so we adapt the statements and proofs of the lemmas and the result they imply to our model and definitions, while maintaining the core ideas.

\begin{restatable}{proposition}{lexicase}
\label{proposition:lexi-case}
  Let $f$ be a partition mechanism for $n$ agents with preferences from the class $\Plex(\MM)$ of lexicographic preferences. Mechanism $f$ is truthful, non-bossy, and neutral if and only if $f$ is a $\qp$-serial-quota mechanism.
\end{restatable}

{We  present a high-level overview of the proof in \cref{appendix:highlevel}, detailing the claims we have adapted and the general proof strategy. For completeness, we {also present complete proofs of all claims} in \cref{appendix:lex-hatfield}.}

\subsection{Partition under Strict Preferences}\label{sec:strict-partition} 
In this section we extend the main result for partition mechanisms, from the case that the preferences are lexicographic, to the case that the class of preferences is any permutations-closed class of strict preferences that contains the lexicographic preferences, that is, to any class $\PP$ in $ \FFn(\MM)$.

An immediate consequence of \cref{proposition:lexi-case} is that partition mechanisms over ``general'' preference classes $\PP\in \FFn(\MM)$ (i.e. classes containing the lexicographic preferences)
which are truthful, non-bossy, and neutral also behave like serial-quota mechanisms for all \emph{lexicographic preference profiles}. Formally:

\begin{claim}
    \label{lexi-serial-quota}
    {Let $f$ be a truthful, non-bossy, and neutral partition mechanism of a set $\MM$ of $m$ goods among a set $\NN$ of $n$ agents with strict preferences from some class $\PP\in \FFn(\MM)$.}  Then there exists a quota-vector and an ordering of the agents $\qp\in \Qnm$ s.t. $f$ is a $\qp$-serial-quota mechanism on $\Plex(\MM)$.
    
\end{claim}

\begin{proof}
    Consider the partition mechanism $f^L$ that is the restriction of $f$ to lexicographic preferences. That is, $f^L$ is the partition mechanism of the set $\MM$ of $m$ goods among the set $\NN$ of $n$ agents  with lexicographic preferences $\Plex(\MM)$ defined identically to $f$:    \[f^L\veco = f\veco \ \ \ \ \forall \veco\in\Plex^n(\MM)\]
    {Note that $f^L$ is well defined since the assumption $\PP\in \FFn(\MM)$ requires that $\Plex^n(\MM)\subseteq\PP$.}
    Since $f$ is truthful, non-bossy, and neutral, $f^L$ must also be truthful, non-bossy, and neutral. Then \cref{proposition:lexi-case} implies that there exists a quota-vector and ordering of the agents $\qp\in\Qnm$ such that $f^L$ is a $\qp$-serial-quota mechanism.     Then for all lexicographic preference profiles $\veco\in \Plex^n(\MM)$:   
    \[f\veco=f^L\veco=\SQqp\veco\]
    as required.
\end{proof}

In order to prove our main result for partition mechanisms over classes of strict preferences (\cref{strict-partition}), we must show that the above equality holds for all preference profiles over $\PP^n$, not just the lexicographic ones. First, we prove the following useful lemma.

\begin{restatable}{lemma}{constantquotas}
    
\label{constant-quotas}
    {Let $f$ be a truthful, non-bossy, and neutral partition mechanism of a set of goods  $\MM$  among $n$ agents with strict preferences from some class    $\PP\in \FFn(\MM)$.} Let $\qp\in \Qnm$ be {the quota and ordering for which $f$ is a $\qp$-serial-quota mechanism on \emph{lexicographic} preferences (by \cref{proposition:lexi-case})}.  Then for every preference profile $\veco\in \PP^n$ {(not just lexicographic profiles)} and index $i\in[n]$:  \[|f_{p_i}\veco|=q_i\]
\end{restatable}

\begin{proof} Let $\qp$ be as defined in \cref{lexi-serial-quota}. Let $\veco\in \PP^n$  be some preference profile. For each agent $i\in\NN$ let $\preqa_i\in \Plex(\MM)$ be some lexicographic preference according to an ordering of $\MM$ that places the goods of $A_i=f_i\veco$ first. {That is, the ordering is consistent with $(A_i,\MM\setminus A_i)$}. Consider the profile of lexicographic preferences $\vecoa$. \cref{lexi-serial-quota} implies that for every index $i\in [n]$:

\[f_{p_i}\vecoa=\SQqp_{p_i}\vecoa=\dem{\preqa_{p_i}}{q_i}{\MM\setminus\left( \bigcup_{j=1}^{i-1} \SQqp_{p_j}\vecoa \right)}\]
Since preferences are strict and monotone we know that for each index $i\in[n]$  agent $p_i$ will always demand a set of size equal to their quota $q_i$. Then $|f_{p_i}\vecoa|=q_i$. On the other hand, clearly $\vecoa$ is a push-up of $\veco$ for $f$. As $f$ is a truthful and non-bossy allocation mechanism, \cref{monotonicity} implies that $f\veco=f\vecoa$. Then:
\[|f_{p_i}\veco|=|f_{p_i}\vecoa|=q_{i}\]
as required.
\end{proof}

This result already shows that each agent must have some a-priori ``quota'' for exactly how many goods they receive from $f$ for all preference profiles. We will now extend this to show that the agents select their goods according to their quota {in a serial fashion}. 

\subsubsection{Strong Desire and Control}
\label{subsec:strong-desire-and-control}

In order to generalize the results from the lexicographic domain to all 
preference classes in $\FFn(\MM)$ (classes of strict preferences that contain the lexicographic preferences) we make use of the  concepts of \emph{strong desire} and \emph{control}. The concepts are adapted from Amanatidis et al. \cite{AmanatidisBCM17}, and altered to address weakly-monotone preferences in general (rather than only additive valuations). 

{An agent \emph{strongly desires} $S$ if she always prefers getting even one more good from $S$, over additionally getting the entirety of $\MM\setminus S$. Formally:}
\begin{definition}
Agent with preference $\preq$ \emph{strongly desires} a set $S\subseteq \MM$ if for every $A \subsetneq B \subseteq S$:
\[ A \cup (\MM \setminus S) \prec B\]
\end{definition}

For example, for an ordering $e_1,\ldots ,e_m$ of the goods, an agent with the corresponding lexicographic preference strongly desires every set of goods that forms a prefix of the ordering. That is, each of the following sets is strongly desired: $\{e_1\}, \{e_1,e_2\}, \{e_1,e_2,e_3\},\ldots, \{e_1,e_2,\ldots, e_n\}$.

An agent \emph{controls} a set $S\subseteq \MM$ with respect to $f$ if whenever she strongly desires $S$, she gets $S$ (and possibly some other goods). Formally:
\begin{definition}
    Let $f$ be an allocation mechanism for agents with preferences from class $\PP$. 
    We say that agent $i\in \NN$ \emph{controls a set $S\subseteq \MM$ with respect to $f$,} 
    if for every $\veco\in \PP^n$  in which agent $i$  with preference $\preq_i$ strongly desires $S$, it holds that $S \subseteq f_i\veco$.
\end{definition}
Notice that the notion of control is defined for a specific agent $i$ and set $S$, but constrains the allocation for many different preference profiles of $i$, so long as $i$ strongly desires $S$. 

\subsubsection{The Control Lemma} \label{subsec:control-lemma}
The cornerstone of the proof of our result for strict preferences is the Control Lemma. {The lemma presents conditions on a partition mechanism under which the ``control claim'' holds. 
Mechanism $f$ satisfies the \emph{control claim} if the following holds: If there exists a profile $\veco\in \PP^n$ such that all agents strongly desire $S\subseteq \MM$, and yet $S\subseteq f_i\veco$ for some  $i\in \NN$, then agent $i$ controls $S$ with respect to $f$.
The Control Lemma shows that the control claim holds for partition mechanism $f$ that satisfies three properties: it is truthful, non-bossy, and neutral.} {Formally:}

\control*

{
The general steps of the proof are as follows. We presume an instance of the control claim, that is, a profile $\veco$ where all agents strongly desire some set $S\subseteq \MM$ yet some agent $i\in\NN$ gets it whole. We wish to show that agent $i$ controls $S$, that is, for any profile $\vecoa$ where agent $i$ strongly desires $S$ she also gets it whole. We construct lexicographic preference profiles from $\veco$ and $\vecoa$, and prove that {their corresponding allocations are equal} (\cref{control-c1} and \cref{control-c2}).
Finally we use \cref{lexi-serial-quota}, which states that $f$ behaves like a serial-quota mechanism on lexicographic preferences, to show that agent $i$ getting the entirety of $S$ in the lexicographic profile corresponding to $\veco$ implies the same for the lexicographic profile corresponding to $\vecoa$.

We now present the formal proof of this Lemma:}

{
    \begin{proof}[Proof of \cref{control} (Control Lemma)]
    Let $\veco\in \PP^n$ be a profile such that all agents strongly desire $S$, and yet $S\subseteq f_i\veco$ for some  $i\in \NN$. Let $\vecoa\in \PP^n$ be a profile for which agent $i$ strongly desires $S$. 
    To show that agent $i$ controls $S$ with respect to $f$ we prove that $S\subseteq f_i\vecoa$.
    
    For each agent $j\in \NN$ let $A_j=f_j\veco$ denote that {agent's} allocation, $R_j=A_j\setminus S$ denote the goods that $j$ got but are not in $S$,
    and $R_{-j}=\MM\setminus (R_j\cup S)$ denote the remaining goods. Then:
    \[f\veco=(A_1,\ldots , A_n)=(R_1,\ldots R_{i-1}, R_i\cup S,R_{i+1}, \ldots ,  R_n)\]
    For each agent $j\in \NN$ let $\vecp{e}_j$ be an ordering of the goods of $\MM$ that is consistent with $(S,R_j,R_{-j})$.
    
    We will now consider the profile $\vecoa$. For each agent $j\in \NN$ let $A'_j=f_j\vecoa$ denote the allocation of $j$ under $\vecoa$. 
    We will define an ordering $\vecp{e'}_j$ for each agent $j\in \NN$ as follows.
    \begin{itemize}
        \item For agents $j\in\NN\setminus \{i\}$ let $\vecp{e'}_j$ be an ordering of the goods that is consistent with $(A'_j,\MM\setminus A'_j)$.
        \item For agent $i$, first partition the set $S$ into $S_i=S\cap A'_i$ and $S_{-i}=S\setminus A'_i$, and define $R'_i=A'_i\setminus S$ and $R'_{-i}=\MM\setminus(R'_i\cup S)$.     Now define an ordering $\vecp{e'_i}$ which is consistent with $(S_i,S_{-i},R'_i, R'_{-i})$.
    \end{itemize}
    
    Consider the lexicographic preference profiles $\vecordp{e}, \vecordp{e'}$ corresponding to these two different orderings of the goods for each agent. We will prove two claims:
    \begin{enumerate}
        \item $\vecordp{e}$ is a push-up of $\veco$ for $f$
        \item $\vecordp{e'}$ is a push-up of $\vecoa$ for $f$
    \end{enumerate}
    The proofs of these claims are rather technical, {and presented {in \cref{appendix:control}} as \cref{control-c1} and \cref{control-c2} respectively.} We will show that these claims together complete the proof. Since $f$ is a truthful and non-bossy partition mechanism, \cref{monotonicity} (monotonicity) and the above two claims imply:
    \[f\vecordp{e}=f\veco=(A_1,\ldots , A_n)\]
    \[f\vecordp{e'}=f\vecoa=(A'_1,\ldots , A'_n)\]
    From the original assumption and the first equation we know $S\subseteq f_i\veco=f_i\vecordp{e}$. The second equation shows that proving $S\subseteq f_i\vecoa$ is equivalent to proving  $S\subseteq f_i\vecordp{e'}$.
    
    Since $f$ is truthful, non-bossy, neutral and partitions the goods, \cref{lexi-serial-quota} implies there exists a quota-vector and ordering of the agents $\qp\in\Qnm$ such that $f=\SQqp$ for all lexicographic preference profiles. Note that $\vecordp{e}, \vecordp{e'}$ are both lexicographic preference profiles.
    
    Start with the profile $\vecordp{e}$. All agents lexicographically place the goods of $S$ the highest {(they strongly desire $S$)} yet $S\subseteq f_i\vecordp{e}=\SQqp_i\vecordp{e}$. This is only possible if agent $i$ chooses first with a quota that is larger than $S$, meaning both  $p_1=i$ and $|S|\leq q_1$. Now consider the profile $\vecordp{e'}$. Since the same $\qp$ must apply for all lexicographic preference profiles, over this profile as well agent $i$ will also choose first and take $q_1$ goods. Since the preference $ \preq_{\vecp{e'_i}}$ also places the goods of $S$ the highest, she will again pick them first and get them all. Therefore:
    \[S\subseteq \SQqp_i\vecordp{e'}=f_i\vecordp{e'}\]
    {as required}. 
    \end{proof}
}

We now
{present} some implications of the Control Lemma which will be of great use in the proof of the main result for partition mechanisms over classes of strict preferences, presented in \cref{subsec:strict-partition}.

\begin{restatable}{lemma}{firstcontrol}
    \label{first-control}
    Let $f$ be a truthful, non-bossy, and neutral mechanism for agents with strict preferences from some class $\PP\in \FFn(\MM)$. Let $\qp\in \Qnm$ be {the quota and ordering for which $f$ is a $\qp$-serial-quota mechanism on \emph{lexicographic} preferences (by \cref{proposition:lexi-case})}.

    {Then agent $p_1$ controls the set $S\subseteq \MM$ with respect to  $f$ if and only if $|S|\leq q_1$.}
\end{restatable}

\begin{proof} We will prove each direction separately.
\textit{Direction 1:} Let $S\subseteq \MM$ such that  $|S|\leq q_1$, we show that agent $p_1$ controls $S$ for $f$. Let $\veco\in \Plex^n(\MM)$ be a lexicographic preference profile such that all agents place the goods of $S$ the highest (in some arbitrary order). Clearly  $S\subseteq \dem{\preq_{p_1}}{q_1}{\MM}$, therefore $S\subseteq \SQqp_{p_1}\veco$ {(since agent $p_1$ has quota $q_1\geq |S|$, she can pick all the goods in $S$)}. Then by \cref{proposition:lexi-case}:
\[f_{{p_1}}\veco=\SQqp_{p_1}\veco\supseteq S\]
We have shown that over a profile where all agents strongly desire $S$ it holds that $S\subseteq f_{{p_1}}\veco$ therefore by \cref{control} (control) agent ${p_1}$ controls $S$ for $f$, as required.

\textit{Direction 2:} Let $S\subseteq \MM$ such that  $|S| > q_1$, assume for contradiction that agent ${p_1}$ controls $S$ for $f$. Let $\veco\in \PP^n$ be some preference profile such that agent $p_1$ strongly desires $S$, then since she controls $S$ for $f$ this implies $S\subseteq f_{{p_1}}\veco$. But then $|f_{{p_1}}\veco|\geq |S| > q_1$, contradicting \cref{constant-quotas}.
\end{proof}

\begin{restatable}{lemma}{firstpicker}
     \label{first-picker}
  {For the same $f,\PP$, and $\qp$ as in \cref{first-control}, it holds that} agent $p_1$ always gets her $q_1$-demand from $\MM$. Formally, for every profile $\veco\in \PP^n$, $f_{p_1}\veco=\dem{\preq_{p_1}}{q_1}{\MM}$.
\end{restatable}

\begin{proof}
     Let $\veco\in \PP^n$ be some preference profile. \cref{constant-quotas} implies $|f_{p_1}\veco|=q_1$. Then by definition of demand, $f_{p_1}\veco \preq_{p_1} \dem{\preq_{p_1}}{q_1}{\MM}$. Assume for contradiction that $f_{p_1}\veco\neq\dem{\preq_{p_1}}{q_1}{\MM}$, then since $\preq_{p_1}$ is strict $f_{p_1}\veco \prec_{p_1} \dem{\preq_{p_1}}{q_1}{\MM}$. Let $\preqa_{p_1}\in \PP$ be an alternate preference profile s.t. agent $p_1$ with preference $\preqa_{p_1}$ strongly desires $\dem{\preq_{p_1}}{q_1}{\MM}$. Since $|\dem{\preq_{p_1}}{q_1}{\MM}|=q_1$, \cref{first-control} implies that agent $p_1$ controls $\dem{\preq_{p_1}}{q_1}{\MM}$. Then by the definition of control $\dem{\preq_{p_1}}{q_1}{\MM}\subseteq f_{p_1}(\preqa_{p_1},\preq_{-p_1})$. Thus:
\[ f_{p_1}\veco \prec_{p_1} \dem{\preq_{p_1}}{q_1}{\MM} \preq_{p_1} f_{p_1}(\preqa_{p_1},\preq_{-p_1})\]
But this means that when the profile is $\veco$, agent $p_1$ with true preference $\preq_{p_1}$ would benefit from lying and reporting $\preqa_{p_1}$ instead, contradicting the assumption that $f$ is truthful.
\end{proof}

To complete the picture, we show that the Control Lemma is tight in its requirements from the mechanism, in the sense that for $n\geq 3$ agents, dropping any of the three properties (truthful, neutral or non-bossy) will break it\footnote{For two agents, Amanatidis et. al \cite{AmanatidisBCM17} have proven the control claim for truthful partition mechanisms. Such mechanisms are necessarily non-bossy, so for $n=2$ the control claim holds even without neutrality. }. In fact, we show that even for $n\geq2$ agents, once truthfulness is dropped the control claim does not hold.  We present a proof of the following claim in \cref{appendix:control-nec}.

\begin{restatable}{claim}{controltight}
\label{lemma:control-tight}

{For $n\geq 3$ agents, for \emph{every} class of strict preferences $\PP\in\FFn(\MM)$, 
for each of the three properties of \emph{truthfulness, non-bossiness, and neutrality},
there exists a partition mechanism $f$ which violates only that property, such that the control claim fails for $f$.}

\end{restatable}

\subsubsection{Proving the Strict Version}\label{subsec:strict-partition}

The entirety of the proof thus far has resulted in proving two key results:
\begin{enumerate}
    \item As consequence of \cref{proposition:lexi-case}, any truthful, non-bossy and neutral mechanism behaves like a serial-quota mechanism for lexicographic preference profiles\footnote{{See \cref{lexi-serial-quota} for the formal claim.}}.
    \item \cref{first-picker} shows that for every truthful, non-bossy and neutral mechanism there is an agent $p_1$ with some quota $q_1$ such that $p_1$ always gets her $q_1$-demand from $\MM$.
\end{enumerate}

In this section, we use these results to prove the following characterization for partition mechanisms: 

\begin{restatable}{lemma}{strictpartition}
\label{strict-partition}
    {Let $f$ be a partition mechanism of a set $\MM$ of $m$ goods among a set $\NN$ of $n$ agents with strict preferences from some class $\PP\in \FFn(\MM)$.} Mechanism $f$ is truthful, non-bossy, and neutral if and only if $f$ is a serial-quota partition mechanism.
\end{restatable}

{
The proof of the characterization is highly technical, 
{so we next only present}
an intuitive high-level overview of the proof, which captures the core ideas behind this characterization. The formal proof of the characterization, along with the necessary definitions and intermediate results is deferred to \cref{appendix:strict-partition}. 
}

{The heart of the proof, is based on an induction statement: 
First, we use \cref{first-picker} to show there must exist a ``first agent'' $p_1$ with some quota $q_1$ that always gets her $q_1$-demand from $\MM$.  Next, we prove that after the first agent $p_1$ has taken her demand, the mechanism that allocates the remaining goods to the remaining agents is also truthful, non-bossy, and neutral (see \cref{first-pick-independence} and \cref{induced-mechanism}). Applying the induction hypothesis, this tells us the induced mechanism is a serial-quota mechanism, for some quotas and ordering of the remaining agents. We then must show that the ordering and quotas of the induced mechanism are the same as the ordering and quotas of the original mechanism $f$ over lexicographic preference profiles. 
Finally, we conclude that $f$ is a serial-quota partition mechanism for all preference profiles.}

\subsection{Main Characterization Result: Allocation under Strict Preferences}\label{sec:strict-alloc}
In this section we 
extend the characterization result that was proven in \cref{sec:strict-partition} for partition mechanisms (\cref{strict-partition}), to hold also for allocation mechanisms (that do not have to allocate all goods). In essence, the claim follows as allocating goods to $n$ agents (with some goods left unallocated) is like partitioning the goods to $n+1$ agents (giving the extra agent the unallocated goods). {We present the proof in \cref{appendix:alloc-strict}.}

\strict*

\section{Implications for Cardinal Preferences (Valuation Functions)}\label{sec:implications-val}

In this section we move to consider agents with \emph{cardinal preferences} (also known as \emph{valuations}), and 
use our characterization result for ordinal preferences to derive characterization results for classes of valuations functions.

\begin{definition}[{Valuations}]
    A \emph{valuation} (or \emph{valuation function} or \emph{cardinal preference}) over a set  $\MM$ of goods  is a function $v:2^{\MM}\rightarrow \mathbb{R}_{\geq 0}$. 
    The (cardinal) \emph{utility (or value)} of an agent with valuation $v$ when getting the set $S$ is $v(S)$.
    Each valuation $v$ induces an ordinal preference, denoted $\preq_v$, as follows:
    \[\forall S,T\subseteq M: S\preq_v T \iff v(S)\leq v(T)\]
For a property $P$ of preferences, we say that a valuation $v$ has property $P$ if its induced preference $\preq_v$ has property $P$. For example, we say that a valuation function $v$ is \emph{strict} if the preference $\preq_v$ is strict. 

We denote the set of all \emph{(weakly) monotone} valuations over $\MM$ by $\Vmon(\MM)$ and only consider preferences in that set.
A \emph{valuation class}  $\VV$ is a set of valuation functions such that $\VV\subseteq \Vmon(\MM)$. A \emph{valuation vector} or \emph{valuation profile} of $n$ agents over $\VV$ is a vector $\vecv\in \VV^n$. The \emph{induced preference class} of the valuation class $\VV$, is the class of ordinal preferences $\PP_{\VV}=\{\preq_v:v\in \VV\}$.
\end{definition}

{
For the sake of formality, in \cref{appendix:cardinal-allocations} we re-define all the relevant definitions from ordinal preferences to cardinal valuations.
}
{The following lemma shows that {when valuations are strict, the output of a} truthful and non-bossy cardinal allocation mechanism depends {only} on the induced preferences.} The proof can be found in \cref{appendix:subsec:implications-val}.
\begin{restatable}{lemma}{setOrderingEquivalence}
    \label{setOrderingEquivalence}
    For valuation class $\VV$ of strict valuation functions, let $f:\VV^n\rightarrow\ALLOCsM$ be a truthful and non-bossy allocation mechanism. Then for every pair $\vecv,\vecu\in \VV^n$ of valuation vectors s.t. $\preq_{v_i}=\preq_{u_i}$ {for every agent $i\in\NN$,} it holds that $f(\vecv)=f(\vecu)$.

\end{restatable}
{
We now use our characterization for ordinal preferences to derive a characterization result for mechanisms over strict cardinal valuations. The proof can be found in \cref{appendix:subsec:implications-val}.
}

\cardinalstrict*

\subsection{Classes with some Monotone Valuation Functions}

{When the class $\VV$ contains valuations which are not strict, the properties defined above are no longer well defined. In order to extend the implications of our characterization to valuations which are not strict, we will need to 
consider the restriction of a class of valuations to the sub-class of strict valuations within it.
\begin{definition}
    Let $\VV\subseteq \Vmon(\MM)$ be a class of weakly-monotone valuation functions. We denote the intersection of $\VV$ and strict valuation functions by $\VV_{\neq}$.
\end{definition}
}
For example, if $\VV$ is the class of all additive valuations over goods, the class $\VV_{\neq}$ is the class of additive valuations over goods for which two different subsets of goods never have the same utility (no ties). We say a mechanism $f$ satisfies property $K$ on sub-class $\VV_{\neq}$ if the restriction of $f$ to only accept strict valuations from $\VV_{\neq}$ satisfies $K$. For example, we say a mechanism $f$ is \emph{neutral on $\VV_{\neq}$} if for every profile $\veco\in \VV_{\neq}^n$ and permutation $\pi\in \perM$, $f\vecopi=\pi(f\veco)$.

For a class of valuations that includes valuations with ties, the next corollary shows that we know how every cardinal allocation mechanism that is truthful, non-bossy, and neutral must behave on the sub-class of \emph{strict} valuations: it must behave like a serial-quota mechanism on these valuations. For example, such a mechanism must behave like a serial-quota mechanism on the sub-class of additive valuations that are strict. 

\cardinalnonstrict*
        
\section{Implications for Fair Division}\label{sec:implications-fair}

    In this section we discuss the implications of our characterization result to the problem of fair allocation of indivisible goods when agents have additive valuations over goods. We present strong negative results for both MMS approximation and {the envy-based notion of EF1}.

    As usual, we say that valuation $v$ is \emph{additive over goods} if $v(\{x\})\geq 0$ for every good $x\in \MM$, and for every set $S\subseteq\MM$ it holds that $v(S)=\sum_{x\in S}v(\{x\})$. We next present the fairness notations and derive the implications of our characterization results.''

\subsection{The Maximin Share (MMS)}

\begin{definition}[{MMS, $\rho$-MMS}]
For every $n\in\mathbb{N}$ and valuation $v:2^\MM\rightarrow\mathbb{R}$, let MMS$(v,n)$ denote the maximin-share of an agent with valuation $v$ in a mechanism with $n$ agents. Formally:

\[\text{MMS}(v,n)=\max_{\left(S_{1},\ldots, S_{n}\right)\in\text{Par}_{n}(\MM)}\min_{i\in\NN}\left\{ v(S_{i})\right\}\]
{Where $\text{Par}_n(\MM)$ denotes the family of all partitions of the set of goods $\MM$ into $n$ sets.}

We say a cardinal allocation mechanism $f$ for $n$ agents with valuations from class $\VV$ is $\rho$-MMS if for every valuation vector $\vecv\in \VV^n$ and agent $i\in \NN$:
\[v_i(f_i(\vecv))\geq \rho\cdot \text{MMS}(v_i,n)\]
\end{definition}

It was observed in \cite{AmanatidisBM16} that all serial-quota mechanisms have a poor MMS approximation (Theorem 4.1). We restate their result:

\begin{claim}
    \label{approximation-bound}
    For $n\geq 2$ agents with strict additive valuations over $m\geq n+2$ goods, 
    the serial-quota mechanism with quota vector $\vecp{q}=(1,1,1,\ldots,1,1, m-n+1)$ 
    is $\rho$-MMS for $\rho=\frac{1}{\lfloor{\frac{m-n+2}{2}}\rfloor}$, and for any larger constant $\rho$ there is no serial-quota mechanism that is is $\rho$-MMS.
\end{claim}

When the number of goods is even mildly larger than the number of agents (say $m\geq 2n$), the above bound is $O(1/m)$, essentially the poorest MMS approximation imaginable (as just letting each agent pick one good in turn clearly gives an approximation of $\Omega(1/m)$).
With our characterization, we can derive that the poor MMS approximation of serial-quota mechanisms is actually the best one can hope from any  truthful, non-bossy, and neutral mechanism.

\mms*
Notice that the class of additive valuations is permutations-closed and induces the lexicographic preferences. Therefore the above application follows immediately from \cref{corollary:cardinal-nonstrict} and \cref{approximation-bound}.

{The simple mechanism {for strict additive valuations} in which each agent picks one good in her turn {and the last agent receives all remaining goods} is $\rho$-MMS for $\rho=\frac{1}{\lfloor{\frac{m-n+2}{2}}\rfloor}$, and this mechanism is a serial-quota mechanism so it is  truthful, non-bossy, and neutral on strict additive preferences.  Thus, this result shows that no {mechanism which is} truthful, non-bossy, and neutral {on strict additive valuations} can have a better MMS approximation than this trivial mechanism.}

{We thus see that the impact of truthfulness on performance is significant: in contrast to our negative result showing that $\rho$ is at most $O(1/m)$ for truthful mechanisms, {allocation algorithms (that are not truthful but are neutral and non-bossy) give $\rho$ that is a constant. 
For example, $\rho=1/2$ can be obtained by} the ``bag-filling'' algorithm presented in \cite{Garg2019}. This algorithm is a simple greedy allocation algorithm (that is non-truthful) which is non-bossy and neutral.}

\subsection{Envy-based Notions}
We next discuss a central envy-based fairness notion for indivisible goods, the notion of \emph{envy-free-up-to-one-good (EF1)}. This notion is fundamentally ordinal, so we present its definition as an ordinal notion. 

\begin{definition}[{EF1}] 
Fix a class $\PP=\PP(\MM)\subseteq \Pnmon(\MM)$ of strict preferences over a set of goods $\MM$.

    An allocation $A=(A_1,...,A_n)$ is \emph{EF1} for $n$ agents with ordinal preferences $\veco\in \PP^n$, if for every pair of agents $i\neq j$ \emph{there exists a good} $x\in A_j$ such that $A_j \setminus\{x\} \preq_i A_i$.
    
    \end{definition}

EF1 says that envy can be eliminated by removing \emph{one} good from the other agent's set. The EF1 property is defined to hold for cardinal preferences (valuations) if the property holds for the induced preferences. 
Finally, we say that a mechanism is EF1 if it always outputs an EF1 allocation with respect to the input preferences.

{Any allocation that gives all agents at most one good is trivially EF1, so the serial-quota mechanism which gives each agent a quota of at most one is EF1. When all agents are given a quota of exactly one, the last agent will definitely get a good which all agents value less (or equal) to the good that they got. Then we can actually let the last agent select two goods instead of one, and since all the previous agents will prefer their own good over each of these two goods individually, the allocation will still be EF1. In the following claim, we show that there are no other possible quotas for which a serial-quota mechanism is EF1. {The proof is deferred to \cref{appendix:implications-fair}.}

\begin{restatable}{claim}{quotaefone}
\label{quota-efone}
Consider $n\geq 2$ agents with strict additive valuations over $m\geq n+1$ goods\footnote{When $m\leq n$ there is a trivial positive result: the {$n$-agent} Round Robin mechanism {for allocating $m\leq n$ goods} (which is a serial-quota mechanism) is EF1 and also truthful, non-bossy, and neutral.
{Note the truthfulness hinges on the assumption that $m\leq n$, and breaks for large $m$}.}.
    Any  serial-quota mechanism that is EF1     cannot have quota larger than 1 for any of the agents but the last, and {the last} cannot have quota larger than 2 (and the quota of the last is at most 1, unless all other agents have quota {of exactly 1}).
    Thus, the serial-quota mechanism that allocates the highest number of goods has quota vector $\vecp{q}=(1,1,1,\ldots,1,1,2)$.
\end{restatable}

 We thus see that a serial-quota mechanism that is EF1 can have only a tiny advantage over the trivial EF1, truthful, non-bossy, and neutral mechanism that simply allows each agent to pick one good in her turn: it is possible to give the last agent the opportunity to pick one more good (but nothing more). Combining   \cref{quota-efone} with the characterization from \cref{corollary:cardinal-nonstrict} immediately implies the following result:

\addef*

We thus see that the impact of truthfulness on performance is significant: our result shows that any truthful,  neutral and non-bossy mechanism that is EF1 must throw away almost all goods (throw at least $m-(n+1)$ goods), while the Round Robin mechanism is neutral and non-bossy, and always outputs an EF1 partition of \emph{all} goods.

\section*{Acknowledgments}
This research was supported by the Israel Science Foundation (grant No. 301/24).
Moshe Babaioff's research is also supported by a Golda Meir Fellowship.

\expandafter\def\csname blx@maxbibnames\endcsname{99}\printbibliography

\pagebreak
\appendix

\section{Additional Discussion of the Model}

\subsection{Lexicographic Preferences}
\label{appendix:subsec:lexi}

In this section we provide some more information about lexicographic preferences that will be used in the proofs. See \cref{def:lexi} for a formal definition of lexicographic preferences. An important characteristic of lexicographic preferences that is useful is that all lexicographic preferences are permutations of each other. 

\begin{claim}
    Let $\preq, \preqa\in \Plex(\MM)$ be lexicographic preferences over a set of goods $\MM$. Then there exists a permutation $\pi\in\perM$ such that $\preqa=\preq^{\pi}$.
\end{claim}
\begin{proof}
    Let $\vecp{e}=(e_1,\ldots, e_m)$ and $\vecp{e'}=(e'_1,\ldots, e'_m)$ be the orderings of the goods corresponding to $\preq,\preqa$ respectively, so that $\preq=\preq_{\vecp{e}}$ and $\preqa=\preq_{\vecp{e'}}$. Consider the following permutation $\pi\in \perM$ {which} switches between the two {orderings}: the permutation defined by $\pi(e_j)=e'_j$ for all $j\in[m]$. We show that this same permutation also swaps between the corresponding lexicographic preferences, so that $\preq^{\pi}=\preqa$.
    
    Let $S,T\subseteq\MM$ be two subsets. By definition, $T\preq^{\pi} S$ hold if and only if there exists $j\in[m]$ such that $\pi(e_j)=e'_j\in S\setminus T$ and for all $1\leq k\leq j$, $\pi(e_k)=e'_k\in S\iff \pi(e_k)=e'_k\in T$. Notice this is equivalent to $T\preqa S$.
\end{proof}

{
Thus, there is a one-to-one correspondence between permutations and lexicographic preferences: from any given lexicographic preference, each of its permutations is a distinct lexicographic preference, and any such preference is one of its permutations. Thus the set of all its permutations is exactly the set of all lexicographic preferences. Note this is a stronger property than just being permutations-closed.
}

\subsection{Serial-Quota Mechanisms}
\label{appendix:subsec:sq-mech}

In this section we further explain the motivation behind the exact definition of the set $\Qnm$ of all pairs of valid quota-vectors and ordering presented in \cref{quota-ordering-pair}.

For the simplicity of some proofs (for example, \cref{strict-partition}) it is useful if each serial-quota mechanism is represented by a single (unique) quota-ordering pair $\qp\in\Qnm$. An issue that may arise here is that when some agents have a quota of $0$, changing their position in the ordering does not change the outcome of the mechanism in any way (but have different representations as vectors $\qp$). For this reason, we defined the set of pairs $\Qnm$ so that when there are agents with a quota of zero, they are always placed last in the ordering, ordered internally according to their indices. These requirements guarantee that if an allocation mechanism $f$ is a serial-quota mechanism, there exists a \emph{unique} $\qp\in\Qnm$ for which $f$ is a $\qp$-serial-quota mechanism.

\section{\texorpdfstring{Missing Proofs from \cref{sec:monotonicity}}
                        {Missing Proofs from Section~\ref{sec:monotonicity}}}

\label{appendix:monotonicity}

\monotonicity*

\begin{proof}
We prove the Lemma in two steps, first proving the claim holds if the preference of only one agent is pushed up, and then using induction to prove the claim when multiple preferences are pushed up. 

\emph{Step 1:} Let $\veco, \vecoa\in \PP^n$  be a pair preference profiles such that $\vecoa$ is a push-up of $\veco$ for $f$. We prove that for any $i\in \NN$ it holds that $f\veco=f(\preqa_i,\preq_{-i})$. Fix some  $i\in \NN$. As $f$ is  truthful we know that $f_i(\preqa_i,\preq_{-i})\preq_i f_i\veco$. As $\vecoa$ is a push-up of $\veco$, this implies $f_i(\preqa_i,\preq_{-i})\preqa_{i} f_i\veco$. Truthfulness of $f$ in the opposite direction implies $f_i\veco\preqa_{i} f_i(\preqa_i,\preq_{-i})$, so together these imply $f_i\veco= f_i(\preqa_i,\preq_{-i})$ (since $\preqa_i$ is strict). Then non-bossiness implies $f\veco=f(\preqa_i,\preq_{-i})$, as required.

\emph{Step 2:} Let $\veco, \vecoa\in \PP^n$  be a pair preference profiles such that $\vecoa$ is a push-up of $\veco$ for $f$. We will prove by induction on $i$ from $0$ to $n$ that:
\[f(\preqa_1,\ldots, \preqa_i,\preq_{i+1},\ldots, \preq_n)=f\veco\]
The base case $i=0$ is trivial. Let $i>0$, we assume the induction hypothesis for $i-1$:
\[f(\preqa_1,\ldots, \preqa_{i-1},\preq_{i},\ldots, \preq_n)=f\veco\]

We will prove the induction statement for $i$. First we show that the profile $\vecoa$ is a push-up of $(\preqa_1,\ldots, \preqa_{i-1},\preq_{i},\ldots, \preq_n)$ for $f$. For agents $j\leq i-1$ this is trivial since their preferences over both profiles are identical. We must show that for each agent $j\geq i$, the preference $\preqa_j$ is a push-up of $\preq_j$ for allocation $f_j(\preqa_1,\ldots, \preqa_{i-1},\preq_{i},\ldots, \preq_n)=f_j\veco$. Since we assumed that $\vecoa$ is a push-up of $\veco$ for $f$ we know that $\preqa_j$ is a push-up of $\preq_j$ for allocation $f_j\veco$, as required.

We have shown that $\vecoa$ is a push-up of $(\preqa_1,\ldots, \preqa_{i-1},\preq_{i},\ldots, \preq_n)$ for $f$. Notice that the profile  $(\preqa_1,\ldots, \preqa_i,\preq_{i+1},\ldots, \preq_n)$ is obtained from the profile $(\preqa_1,\ldots, \preqa_{i-1},\preq_{i},\ldots, \preq_n)$ by having agent $i$ alter her preference from $\preq_i$ to $\preqa_i$. Applying the result from step $1$ we get that:
\[f(\preqa_1,\ldots, \preqa_{i},\preq_{i+1},\ldots, \preq_n)=f(\preqa_1,\ldots, \preqa_{i-1},\preq_{i},\ldots, \preq_n)=f\veco\]
Completing the induction. Taking the end of the induction $i=n$ we get $f\vecoa=f\veco$, completing the proof.
\end{proof}

We note that while truthfulness ensures that the allocation of agent $i$ does not change when the preference of $i$ changes to a push-up of her original preference, the fact that the allocation of every other agent does not change crucially depends on the mechanism being non-bossy.

\section{\texorpdfstring{Missing Proofs from \cref{subsec:lexi}}
                        {Missing Proofs from Section~\ref{subsec:lexi}}}

\subsection{\texorpdfstring{Missing Proofs from \cref{sec:lex-PO}}
                        {Missing Proofs from Section~\ref{sec:lex-PO}}}
\label{appendix:lexi-po}

{
The proof of \cref{proposition:Pareto-efficient} centers around the following novel definition of a \emph{strictly improving single-good trading cycle}.
}

\begin{definition}
    {A} \emph{single-good trading cycle} over an allocation $A=(A_1,\ldots, A_n)$ and a preference profile $\veco$, is an ordered list $\vecp{x}=(x_{c_1},\ldots, x_{c_k})$ of $k>1$ distinct goods, each allocated to a different agent. That is, for agent $c_j$ it holds that  $x_{c_j}\in A_{c_j}$, and $c_j\neq c_i$ for $j\neq i$.

    The \emph{augmented allocation $A^{\vecp{x}}$ using the trading cycle $\vecp{x}$} is the same as allocation $A$ except that for every index $j\in[k]$, agent $c_j$ is allocated good $x_{c_{j-1}}$ instead of good $x_{c_{j}}$, so that $A^{\vecp{x}}_{c_j}=A_{c_j}\setminus\{x_{c_j}\}\cup \{x_{c_{j-1}}\}$, using the cyclic notation $c_0=c_k$.
    
    We say the trading cycle $\vecp{x}$ is \emph{strictly improving} if for every $j\in[k]$, agent $c_j$ strictly prefers her augmented allocation, that is, $A_{c_j}\prec_{c_j} A^{\vecp{x}}_{c_j}$ for every index $j\in[k]$.
\end{definition}

{
Notice that the existence of a strictly improving single-good trading cycle for an allocation and preference profile implies that the allocation is not Pareto-optimal. Thus the non-existence of such cycles is a necessary condition for a mechanism to be Pareto-efficient.} In \cref{cycle-nonexistence} we  show that when preferences are lexicographic, for any truthful, non-bossy, and neutral allocation mechanism such a cycle cannot exist. Then, in \cref{proposition:Pareto-efficient} we use that claim to show that the non-existence of such cycles is sufficient to prove Pareto-efficiency when preferences are lexicographic. 

\begin{restatable}{lemma}{cyclenonexist}
    
    \label{cycle-nonexistence}
     Let $f$ be a truthful, non-bossy, and neutral allocation mechanism for $n$ agents with preferences from the class $\Plex(\MM)$ of lexicographic preferences. For every profile of lexicographic preferences $\veco\in \Plex^n(\MM)$ there cannot exist a strictly improving single-good trading cycle over the allocation $A=f\veco$.
\end{restatable}

\begin{proof} Assume for contradiction there exists a preference profile $\veco\in \Plex^n(\MM)$ s.t. there exists a strictly improving single-good trading cycle $\vecp{x}=(x_{c_1},\ldots, x_{c_k})$ (with $k>1$) over $\veco$ and the partition $A=f\veco$. 

We use the following notation to simplify the exposition. For each index $j\in [k]$ let $y_{c_j}=x_{c_{j-1}}$ be the good gained by agent $c_j$. Let $C=\{c_1,\ldots , c_k\}$ denote the set of all agents which are part of the trading cycle. For each agent $i\in C$ let $R_i= A_{i}\cap A^{\vecp{x}}_{i}=A_i\setminus \{x_i\}$ be the set of good that agent $i$ gets both before and after the augmentation, and let $R_{-i}=\MM\setminus(A_i\cup A^{\vecp{x}}_{i})=\MM\setminus(R_i\cup \{x_i,y_i\})$ be the set of goods that agent $i$ does not get neither before nor after the augmentation.

For each agent $i\in \NN$ we define an ordering of the goods $\vecp{e}_i$ as follows:
\begin{itemize}
    \item If $i\notin C$ let $\vecp{e}_i$ be any ordering that is consistent with $(A_i, \MM\setminus A_i )$.
    \item If $i\in C$ let  $\vecp{e}_i$ be any ordering that is consistent with $(R_i,y_i,x_i,R_{-i})$.
\end{itemize}
For ease of notation, for each agent $i\in\NN$ let $\preqb_i=\preq_{\vecp{e}_i}$ denote the lexicographic preference corresponding to the ordering $\vecp{e}_i$.

We also define the permutation of the goods that swaps the goods along the trading cycle. Formally, let $\pi\in \perM$ be the permutation of the goods defined $\pi(x_i)=y_i$ for every $i\in C$ and $\pi(z)=z$ for every $z\in\MM\setminus\vecp{x}$. Then for every agent $i\in \NN\setminus C$ not in the cycle, $\pi(A_i)=A_i=A^{\vecp{x}}_i $, while for every agent $i\in C$:\[\pi(A_i)=\pi(R_i\cup\{x_i\})=R_i\cup \{y_i\}=A^{\vecp{x}}_i \]

We will now prove two claims:
\begin{enumerate}
    \item $\vecob$ is a push-up of $\veco$ for $f$
    \item $\vecob$ is a push-up of $\vecoper{\preqb}{\pi}$ for $f$
\end{enumerate}

We will first show that these two claims together complete the proof. Since $f$ is a truthful and non-bossy  allocation mechanism, \cref{monotonicity} (monotonicity)\footnote{{See \cref{appendix:monotonicity}.}} along with the above two claims imply:
\[f\vecob=f\veco=A\]
\[f\vecoper{\preqb}{\pi}=f\vecob\]
Then combining these results with the neutrality of $f$ implies:
\[A=f\vecob=f\vecoper{\preqb}{\pi}=\pi\left(f\vecob\right)=\pi(A)=A^{\vecp{x}}\]
 Yet we know that for each agent $i\in C$, $A_i=R_i\cup \{x_i\}\neq R_i\cup \{y_i\}=A^{\vecp{x}}_i$, so this is a contradiction. Thus all that is left to complete the proof is to prove both claims.

\textit{Claim 1:} We prove that $\vecob$ is a push-up of $\veco$ for $f$. Let $i\in\NN$ and let $Z\subseteq \MM$ be a bundle such that $Z\preq_{i} f_i\veco=A_i$, we prove that $Z\preqb_iA_i$.

First consider the case when $i\notin C$. In this case we defined $\preqb_i$ to lexicographically place the goods of $A_i$ highest, so if it does not hold that $A_i\subseteq Z$ then immediately $Z\preqb_{i} A_i$, as required. Otherwise assume $A_i\subseteq Z$. Then $A_i \preq_i Z$, which combined with the assumption $Z\preq_i A_i$ and the strictness of $\preq_i$ implies $Z=A_i$, again proving $Z\preqb_{i} A_i$.

Now consider the case when $i\in C$. Recall then that $A_i=R_i\cup \{x_i\}$. If it does not hold that $R_i \subseteq Z$ then clearly $Z\preqb_i A_i$ because $R_i\subseteq A_i$ and $\preqb_i$ places the goods of $R_i$ highest. Otherwise, assume that $R_i \subseteq Z$. Since the cycle $\vecp{x}$ is strictly 
improving $A_{i}\prec_{i} A^{\vecp{x}}_{i}$. If it were the case that $y_i\in Z$ we would get $A^{\vecp{x}}_{i}=R_i\cup \{y_i\}\subseteq Z$, so $A^{\vecp{x}}_{i}\preq_i Z$. But then $Z\preq_iA_i\prec_iA^{\vecp{x}}_{i}\preq_iZ$, a contradiction. So we can assume that $y_i\notin Z$. If $x_i\in Z$ then $A_i=R_i\cup \{x_i\}\subseteq Z$. Then $A_i\preq_i Z$, which combined with the assumption $Z\preq_i A_i$ and the strictness of $\preq_i$ implies $Z=A_i$, again proving $Z\preqb_{i} A_i$. Otherwise assume  $x_i\notin Z$. Then $Z\subseteq R_i\cup R_{-i}$, and since $\preqb_i$ lexicographically prefers $x_i$ over the entirety of $R_{-i}$ this implies $Z\precb_iR_i\cup \{x_i\} =A_i$,again proving $Z\preqb_{i} A_i$ as required.

\textit{Claim 2:} We prove that $\vecob$ is a push-up of $\vecoper{\preqb}{\pi}$. Let $i\in\NN$ and let $Z\subseteq \MM$ be a bundle such that $Z\preqb^{\pi}_{i} f_i\vecoper{\preqb}{\pi}=\pi(A_i)=A^{\vecp{x}}_i $, we will prove that $Z\preqb_i A^{\vecp{x}}_i $.

First consider the case when $i\notin C$. In this case $ A^{\vecp{x}}_i=A_i$. If it does not hold that $A_i\subseteq Z$ then since $\preqb_i$ places the goods of $A_i$ the highest, this immediately implies $Z\preqb_i A_i$ as required. Otherwise, assume $A_i\subseteq Z$. Then $ A^{\vecp{x}}_i=A_i\preqb^{\pi}_{i} Z $, which combined with the earlier assumption that $Z\preqb^{\pi}_i  A^{\vecp{x}}_i$ and the strictness of lexicographic preferences implies $Z= A^{\vecp{x}}_i=A_i$, again proving $Z\preqb_i A^{\vecp{x}}_i $ as required.

Now consider the case when $i\in C$. Recall that in this case $A_i=R_i\cup \{x_i\}$,  and $ A^{\vecp{x}}_i=\pi(A_i)=R_i\cup \{y_i\}$.  If it does not hold that $R_i\subseteq Z$, then since $\preqb_i$ places the goods of $R_i$ the highest, this immediately implies $Z\preqb_i R_i\preqb_i A^{\vecp{x}}_i$ as required. Otherwise, assume $R_i\subseteq Z$. If also $y\in Z$ then $ A^{\vecp{x}}_i=R_i\cup\{y_i\}\subseteq Z$. Then $ A^{\vecp{x}}_i \preqb^{\pi}_{i}Z$ which combined with the earlier assumption that $Z\preqb^{\pi}_i  A^{\vecp{x}}_i$ and the strictness of $\preqb^{\pi}_i$ implies $Z= A^{\vecp{x}}_i$, again proving $Z\preqb_i A^{\vecp{x}}_i $ as required. Otherwise assume that $y_i\notin Z$. Then $Z\subseteq R_i\cup \{x_i\} \cup R_{-i}$. Since $\preqb_i$ lexicographically prefers $y_i$ to the entirety of $\{x_i\} \cup R_{-i}$: \[Z\preqb_i R_i\cup \{x_i\} \cup R_{-i} \precb_i R_i\cup \{y_i\} = A^{\vecp{x}}_i\]
Again proving $Z\preqb_i A^{\vecp{x}}_i$ as required.
\end{proof}

{Using this result, we can now prove \cref{proposition:Pareto-efficient}.}

\paretolexi*

\begin{proof} Assume for contradiction that $f$ is truthful, non-bossy, and neutral but it is not Pareto-efficient. Then there exists a lexicographic preference profile $\veco \in \Plex^n(\MM)$ such that the allocation $A=(A_1,\ldots, A_n)=f\veco$ is not Pareto-optimal. Meaning, there exists an alternate allocation $B=(B_1,\ldots, B_n)$ such that for all $i\in \NN: A_i\preq_i B_i$  and for at least one agent the inequality is strict. We will prove that the existence of a strictly improving single-good trading cycle over $\veco$ and $A$, thus contradicting \cref{cycle-nonexistence}.

Let $V$ be the set of agents for which the allocation of the agent in $A$ and in $B$ are different:
\[V = \{i\in \NN : A_i\neq B_i\}\]
Note that $V\neq \emptyset$ since the inequality $ A_i\preq_i B_i$ is strict for at least one agent $i\in \NN$. Also note that the entire class $\Plex(\MM)$ of lexicographic preferences is strict, therefore for every $i\in V$, $A_i\preq_i B_i$ and $A_i\neq B_i$ together imply $A_i\prec_i B_i$. Since $\preq_i$ is lexicographic, this further implies that $B_i$ contains the first good (according to the ordering of $\preq_i$ of the goods) that belongs to only one of the two sets $A_i,B_i$. Let $y_i = \argmax_{\preq_i} \{B_i\setminus A_i\}$ be that good, so that $A_i \setminus B_i \prec_i \{y_i\}$. As $\preq_i$ is a lexicographic preference we conclude that:
\begin{equation} \label{eq:1}
A_i =(A_i \cap B_i) \cup (A_i \setminus B_i) \prec_i (A_i \cap B_i) \cup \{y_i\}
\end{equation}

Define a directed graph $G=(V,E)$ as follows. For every agent $i\in V$, let $j\in \NN$ be the agent such that $y_i \in A_j$. Note that there must be such an agent because we assumed that $f$ must always allocate all the goods. Then in the transition from $A$ to $B$, agent $j$ transfers a good to agent $i$. But this implies $A_j\neq B_j$, therefore $j \in V$. So $j$ is also a vertex in $G$, and we can add the edge $(j,i)$ to the graph. Formally:
\[E = \{(j,i)\in V^2 : y_i\in A_j\}\]

Note that every vertex in $G$ has an in-degree of exactly 1. This implies the existence of a simple cycle $C=(c_1,\ldots, c_k)$ over $G$, where $k>1$ since there are no self-loops in $G$. For every index $j\in[k]$ let $x_{c_{j-1}}=y_{c_{j}}$ (using cyclic notation $c_{k}=c_0$) be the best good given to agent $c_{j}$ from $c_{j-1}$ in the transition from $A$ to $B$. Consider $\vecp{x}=(x_{c_1},\ldots, x_{c_k})$. Clearly $\vecp{x}$ is a single-good trading cycle, with augmented allocation $A^{\vecp{x}}_{i}=\left(A_{i}\setminus \{x_{i}\}\right)\cup \{y_{i}\}$ for every agent $i\in C$. To complete the proof, we will show it is strictly improving. For every index $j\in [k]$ we know that agent $c_j$ gives the good $x_{c_j}$ to the next agent in the transition from $A$ to $B$, so letting $i=c_j$ we get $x_{i}\in A_{i}\setminus B_{i}$, meaning that $A_{i}\cap B_{i} \subseteq A_{i}\setminus \{x_{i}\}$, and therefore $A_{i}\cap B_{i} \preq_{i} A_{i}\setminus \{x_{i}\}$. Combining this with the result from \cref{eq:1} we get:
\[A_{i}\prec_{i} \left(A_{i}\cap B_{i}\right) \cup \{y_{i}\}\preq_{i}\left(A_{i}\setminus \{x_{i}\}\right)\cup \{y_{i}\}=A^{\vecp{x}}_{i}\]
We have shown $A_{c_j}\prec_{c_j} A^{\vecp{x}}_{c_j}$ for every index $j\in [k]$, proving that $\vecp{x}$ is strictly improving.
\end{proof}

\subsection{\texorpdfstring{Proof of \cref{proposition:lexi-case}: High-Level Overview}
                        {Proof of Proposition~\ref{proposition:lexi-case}: High-Level Overview}}
\label{appendix:highlevel}

We wish to adapt the proofs of Hatfield \cite{Hatfield} for our model. The main difference between the two models is that Hatfield defines an ``allocation'' as one that always gives each agent exactly $Q$ goods, while we allow for any assignments of items to agents. Thus, when Hatfield discusses ``allocation mechanisms" he refers to mechanisms which are a-priori assumed to always assign every agent exactly $Q$ goods. This is an assumption that Hatfield uses in many stages of his proofs.

There are also differences in our assumptions: Hatfield proves his claims for agents with responsive preferences, while in this section we have the stronger assumption that agents preferences are lexicographic. Unlike us, Hatfield does not make use of neutrality when proving his claims, and only adds it in the final stage when proving his theorem.
On the other hand, Hatfield makes use of the additional assumption that the mechanism is Pareto-efficient {(compared to all allocations that assign exactly $Q$ goods to each agent)}. We, on the other hand, prove in (\cref{proposition:Pareto-efficient}) that Pareto-efficiency 
is implied by the other properties, so we do not need to assume it.

{The proof uses the concept of a ``consecutive allocation'' for lexicographic preferences:}
\begin{definition}
    An allocation $A=(A_1,\ldots, A_n)\in \ALLOCsM$ is \emph{consecutive} according to preference $\preq$ if there exists an ordering $p_1,\ldots , p_n$ of the agents s.t. for every pair of indices $i<j$ and goods $x\in A_{p_i},y\in A_{p_j}$ it holds that $y \preq x$.
\end{definition}
Intuitively, this means that if we order the goods from highest to lowest value according to the lexicographic preference $\preq$, under some ordering of the agents, the partition $A$ splits this interval into non-overlapping ``consecutive'' intervals.

The following lemma is based on Lemma 2 of  Hatfield \cite{Hatfield}. {We defer all proofs from this section to \cref{appendix:lex-hatfield}.}

\begin{restatable}{lemma}{idconsecutive}
\label{lemma:id-consecutive}
    Let $f$ be a partition mechanism for $n$ agents with preferences from the class $\Plex(\MM)$ of lexicographic preferences. If $f$ is truthful, non-bossy, and neutral then for any lexicographic preference $\preq\in \Plex$, for the identical profile $(\preq, \ldots ,\preq)$ the allocation $f(\preq, \ldots ,\preq)$ is consecutive according to $\preq$.
\end{restatable}

The following lemma proves a similar result to Lemma 3 of Hatfield \cite{Hatfield}, but because of the difference in the models the proofs are completely dissimilar.

\begin{restatable}{lemma}{globalquotaslexi}
\label{lemma:global-quotas-lexi}
    Let $f$ be a partition mechanism for $n$ agents with preferences from the class $\Plex(\MM)$ of lexicographic preferences.
     If $f$ is truthful, non-bossy, and neutral then there exists a quota-vector and ordering of the agents $\qp\in\Qnm$ such that for every lexicographic preference $\preq\in \Plex$, {for the identical profile $(\preq, \ldots ,\preq)$}  it holds that  $f(\preq,\ldots,\preq)=\SQqp(\preq,\ldots,\preq)$.
\end{restatable}

Following the arguments in the  proof of Theorem 1 of Hatfield, we extend the lemma and show that the same holds even when the lexicographic preferences are not necessarily identical, thus completing the proof for this section.

\lexicase*

\subsubsection{Missing Proofs} \label{appendix:lex-hatfield}

\idconsecutive*

\begin{proof}
Assume that $f$ is truthful, non-bossy, and neutral. Then \cref{proposition:Pareto-efficient} implies it is Pareto-efficient. Let $\preq\in \Plex$ be a lexicographic preference. We prove that the allocation $A$ defined $A_i=f_i(\preq, \ldots ,\preq)$ is consecutive according to $\preq$.

Assume for contradiction $A$ is not consecutive. Then w.l.o.g there exist distinct goods $c\preq b \preq a$ such that $a,c\in f_1(\preq, \ldots ,\preq)$ but $b\in f_2(\preq, \ldots ,\preq)$. Since $\preq$ is lexicographic and $a,b,c$ are distinct this implies $c\prec b \prec a$. For every agent $i\in \NN$ let $R_i =A_i\setminus \{a,b,c\}$ be the set of goods that agent $i$ gets without including $\{a,b,c\}$  and let $R_{-i}=\MM\setminus   (A_i\cup \{a,b,c\})$ be the set of goods that agent $i$ does not get, without including $\{a,b,c\}$. Note that for agents $i\in \NN\setminus\{1,2\}$, $R_i=A_i$. By definition:
\[f(\preq, \ldots ,\preq)=(A_1,\ldots, A_n)=(R_1\cup \{a,c\},R_2\cup \{b\}, R_3,\ldots, R_n)\]
For every $i\in \NN$ let $\vecp{e_i}$ be an ordering of the goods that is consistent with $(R_i, a,b,c,R_{-i})$ and let $\vecp{e'_i}$ be an ordering of the goods that is consistent with $(R_i, a,c,b,R_{-i})$, so that $\vecp{e},\vecp{e'}$ differ only in the ordering of $b,c$. Consider the lexicographic preferences profiles $\vecordp{e}$ and $\vecordp{e'}$. 

We will prove the following two claims:
\begin{enumerate}
    \item $\vecordp{e}$ is a push-up of $(\preq, \ldots ,\preq)$ for $f$.
    \item $\left(\preq_{\vecp{e_1}},\preq_{\vecp{e_2}},\preq_{\vecp{e'_{-1,2}}}\right)$ is a push-up of $\vecordp{e}$ for $f$.
\end{enumerate}
We will first show that with both of these claims combined we can prove the claim. Since $f$ is a truthful and non-bossy mechanism, \cref{monotonicity} combined with the above claims implies:
\[f\left(\preq_{\vecp{e_1}},\preq_{\vecp{e_2}},\preq_{\vecp{e'_{-1,2}}}\right)=f\vecordp{e}=f(\preq, \ldots ,\preq)=A\]
Let $\pi\in\perM$ that swaps $b,c$ while leaving the rest of the goods the same, we have $\preq_{\vecp{e'_i}}=\preq^{\pi}_{\vecp{e_i}}$. Then $\pi(A_i)=A_i$ for agents $i\in \NN\setminus\{1,2\}$ and $\pi(A_1)=R_1\cup\{a,b\}, \pi(A_2)=R_2\cup\{c\}$. Neutrality of $f$ implies:
\[f\vecordp{e'}=\pi\left(f\vecordp{e}\right)=\pi(A)=(R_1\cup \{a,b\},R_2\cup \{c\}, R_3,\ldots, R_n)\] 

Let $\preqa_1\in \Plex(\MM)$ a lexicographic preference corresponding to an ordering of the goods that is consistent with $(R_1,b,a,c,R_{-1})$. Consider the profile $\left(\preqa_1,\preq_{\vecp{e'_{-1}}}\right)$. Because $f_1\vecordp{e'}= R_1\cup\{a,b\}$, and since agent $1$ with preference $\preqa_1$ lexicographically places the goods of $R_1\cup\{a,b\}$ the highest, truthfulness of $f$ implies that $R_1\cup\{a,b\}\subseteq f_1\left(\preqa_1,\preq_{\vecp{e'_{-1}}}\right)$. Then $R_1\cup\{a,b\}\preq_{\vecp{e'_1}} f_1\left(\preqa_1,\preq_{\vecp{e'_{-1}}}\right)$. Yet truthfulness in the opposite direction implies that $f_1\left(\preqa_1,\preq_{\vecp{e'_{-1}}}\right)\preq_{\vecp{e'_1}} R_1\cup\{a,b\}$, so the strictness of lexicographic preference $\preq_{\vecp{e'_1}}$ implies $ f_1\left(\preqa_1,\preq_{\vecp{e'_{-1}}}\right)=R_1\cup\{a,b\}=f_1\vecordp{e'}$, and so non-bossiness of $f$ implies that $ f\left(\preqa_1,\preq_{\vecp{e'_{-1}}}\right)=f\vecordp{e'}=\pi(A)$. In particular, this implies that $f_2\left(\preqa_1,\preq_{\vecp{e'_{-1}}}\right)=\pi(A_2)=R_2\cup \{c\}$.

Finally, consider the profile $\left(\preqa_1,\preq_{\vecp{e_2}},\preq_{\vecp{e'_{-1,2}}}\right)$. Truthfulness in both directions with the profile $\left(\preqa_1,\preq_{\vecp{e'_{-1}}}\right)$, where agent $2$ gets $R_2\cup\{b\}$, implies that both $R_2\subseteq f_2\left(\preqa_1,\preq_{\vecp{e_2}},\preq_{\vecp{e'_{-1,2}}}\right)$ and $a\notin f_2\left(\preqa_1,\preq_{\vecp{e_2}},\preq_{\vecp{e'_{-1,2}}}\right)$. Assume for contradiction that $b\in f_2\left(\preqa_1,\preq_{\vecp{e_2}},\preq_{\vecp{e'_{-1,2}}}\right)$. Then $a\in f_1\left(\preqa_1,\preq_{\vecp{e_2}},\preq_{\vecp{e'_{-1,2}}}\right)$, otherwise agent $1$ would lie and report $\preq_{\vecp{e_1}}$, thus getting a better allocation $f_1\left(\preq_{\vecp{e_1}},\preq_{\vecp{e_2}},\preq_{\vecp{e'_{-1,2}}}\right)=A_1=R_1\cup \{a,c\}$ (as shown in Claim 1). But then Pareto-efficiency of $f$ is violated, contradicting \cref{proposition:Pareto-efficient}, because over profile $\left(\preqa_1,\preq_{\vecp{e_2}},\preq_{\vecp{e'_{-1,2}}}\right)$ agent $1$ with preference $\preqa_1$ would prefer to get $b$ over $a$ and agent $2$  with preference $\preqb_2$ would prefer $a$ over $b$.

Therefore $b\notin f_2\left(\preqa_1,\preq_{\vecp{e_2}},\preq_{\vecp{e'_{-1,2}}}\right)$. Then truthfulness with the profile $\left(\preqa_1,\preq_{\vecp{e'_{-1}}}\right)$ implies $c\in f_2\left(\preqa_1,\preq_{\vecp{e_2}},\preq_{\vecp{e'_{-1,2}}}\right)$, since preference $\preq_{\vecp{e_2}}$ lexicographically prefers $c$  over the entirety of $R_{-2}$. Then we have shown $f_2\left(\preqa_1,\preq_{\vecp{e'_{-1}}}\right)=R_2\cup \{c\}\subseteq f_2\left(\preqa_1,\preq_{\vecp{e_2}},\preq_{\vecp{e'_{-1,2}}}\right)$. Truthfulness in the opposite direction implies the subset cannot be strict, so $f_2\left(\preqa_1,\preq_{\vecp{e_2}},\preq_{\vecp{e'_{-1,2}}}\right)=f_2\left(\preqa_1,\preq_{\vecp{e'_{-1}}}\right)$. Then, non-bossiness implies $f\left(\preqa_1,\preq_{\vecp{e_2}},\preq_{\vecp{e'_{-1,2}}}\right)=f\left(\preqa_1,\preq_{\vecp{e'_{-1}}}\right)$. In particular, $f_1\left(\preqa_1,\preq_{\vecp{e_2}},\preq_{\vecp{e'_{-1,2}}}\right)=R_1\cup \{a,b\}$.

But now consider the profile $\left(\preq_{\vecp{e_1}},\preq_{\vecp{e_2}},\preq_{\vecp{e'_{-1,2}}}\right)$. Recall that we have shown in Claim 1 that $f_1\left(\preq_{\vecp{e_1}},\preq_{\vecp{e_2}},\preq_{\vecp{e'_{-1,2}}}\right)=A_1=R_1\cup \{a,c\}$. But then over this profile agent $1$ with preference $\preq_{\vecp{e_1}}$ would benefit from lying and reporting $\preqa_1$, thus getting allocation $R_1\cup\{a,b\}\succ_{\vecp{e_1}}R_1\cup \{a,c\}$, contradicting the truthfulness of $f$.

 \emph{Claim 1:} We prove that $\vecordp{e}$ is a push-up of $(\preq, \ldots ,\preq)$ for $f$. Let $i\in \NN$ and $Z\subseteq \MM$ such that $Z\preq A_i$, we will prove that $Z\preq_{\vecp{e_i}} A_i$.

First consider that case $i\notin \{1,2\}$, in this case $R_i=A_i$. Since $\preq_{\vecp{e_i}}$ places the goods of $R_i=A_i$ the highest, if it does not hold that $A_i\subseteq Z_i$ then $Z\preq_{\vecp{e_i}}A_i$ immediately. Otherwise, assume $A_i\subseteq Z$. Then $A_i\preq Z$, which combined with the assumption $Z\preq A_i$ and the strictness of lexicographic preferences implies $Z=A_i$, thus again proving $Z\preq_{\vecp{e_i}} A_i$.

Next consider the case $i=1$. In this case $A_1=R_1\cup \{a,c\}$. Since $\preq_{\vecp{e_1}}$ places the goods of $R_1\cup \{a\}$ the highest, if it does not hold that $R_1\cup \{a\}\subseteq Z$ then $Z\preq_{\vecp{e_1}}A_1$ immediately. Otherwise, assume $R_1\cup \{a\}\subseteq Z$. Since $c\prec b$, if it were the case that $b\in Z$ then $R_1\cup \{a,b\}\subseteq Z$, therefore:
\[Z\preq A_1=R_1\cup \{a,c\}\prec R_1\cup \{a,b\}\preq Z\]
A contradiction. Therefore $b\notin Z$. If $c\in Z$ then $A_1\subseteq Z$, implying $A_1\preq Z$, which combined with the assumption $Z\preq A_1$ implies $A_1=Z$, again proving $Z\preq_{\vecp{e_1}}A_1$. Otherwise, assume $c\notin Z$. Then $Z\subseteq R_1\cup \{a\} \cup R_{-1}$, since $\preq_{\vecp{e_1}}$ lexicographically prefers $c$ over the entirety of $R_{-1}$:
\[Z\preq_{\vecp{e_1}}  R_1\cup \{a\} \cup R_{-1}\preqa_1 R_1\cup \{a,c\} =A_1\]
Proving $Z\preq_{\vecp{e_1}}A_1$ as required.

Finally consider the case $i=2$. In this case $A_2=R_2\cup \{b\}$. Since $\preq_{\vecp{e_2}}$ places the goods of $R_2$ the highest, if it does not hold that $R_2\subseteq Z$ then $Z\preq_{\vecp{e_2}}A_2$ immediately. Otherwise, assume $R_2\subseteq Z$. Since $b\prec a$, if it were the case that $a\in Z$ then $R_1\cup \{a,b\}\subseteq Z$, therefore:
\[Z\preq A_2=R_2\cup \{b\}\prec R_2\cup \{a\}\preq Z\]
A contradiction. Therefore $a\notin Z$. If $b\in Z$ then $A_2\subseteq Z$, implying $A_2\preq Z$, which combined with the assumption $Z\preq A_2$ implies $A_2Z$, again proving $Z\preq_{\vecp{e_2}}A_2$. Otherwise, assume $b\notin Z$. Then $Z\subseteq R_1\cup \{a,c\} \cup R_{-1}$, since $\preq_{\vecp{e_2}}$ lexicographically prefers $b$ over the entirety of $\{c\}\cup R_{-2}$:
\[Z\preq_{\vecp{e_2}}  R_2\cup \{c\} \cup R_{-2}\preqa_1 R_2\cup \{b\} =A_2\]
Proving $Z\preq_{\vecp{e_2}}A_2$ as required.

\emph{Claim 2:} We make use of the result form Claim 1 that $\vecordp{e}$ is a push-up of $(\preq, \ldots ,\preq)$ for $f$. Since $f$ is truthful and non-bossy, this implies that $f\vecordp{e}=f(\preq, \ldots ,\preq)=A$. We prove that the profile $\left(\preq_{\vecp{e_1}},\preq_{\vecp{e_2}},\preq_{\vecp{e'_{-1,2}}}\right)$ is a push-up of $\vecordp{e}$ for $f$.

Let $i\in \NN\setminus\{1,2\}$ and $Z\subseteq \MM$ such that $Z\preq_{\vecp{e_i}}f_i\vecordp{e}=f_i(\preq, \ldots ,\preq)=A_i$, we prove that $Z\preq_{\vecp{e'_i}} A_i$. Note that since $i\notin \{1,2\}$, $R_i=A_i$ and $\preq_{\vecp{e'_i}}$ lexicographically places the goods of $R_i=A_i$ the highest. Therefore if it does not hold that $A_i\subseteq Z$ definitely $Z\preq_{\vecp{e'_i}} A_i$. Otherwise $A_i\subseteq Z$, then $A_i\preq_{\vecp{e_i}} Z$, which combined with the assumption $Z\preq_{\vecp{e_i}}A_i$ implies $A_i=Z$, thus again proving $Z\preq_{\vecp{e'
_i}} A_i$.
\end{proof}

The following lemma proves a similar result to Lemma 3 of Hatfield \cite{Hatfield}, but because of the difference in the models the proofs are completely dissimilar.

\globalquotaslexi*

\begin{proof}Fix a lexicographic preference $\preq\in \Plex$. \cref{lemma:id-consecutive} implies the partition $A=f(\preq, \ldots ,\preq)$ is consecutive according to the ordering of the goods defined by $\preq$. Meaning, there exists an ordering $\sigma\in \per{\NN}$ such that, letting $p_i=\sigma^{-1}(i)$, for every pair of indices $i<j$ and goods $x\in A_{p_i},y\in A_{p_j}$, $y\preq x$. For every index $i\in \NN$ let $q_i=|A_{p_i}|$ be the quota corresponding to the size of $A_{p_i}$.

By definition, for every index $i\in \NN$, every goods $x\in A_{p_i}$ is places higher by $\preq$ than all the goods in $\MM\setminus \bigcup_{j=1}^{i-1}A_{p_j}$. Then if we were to let an agent with lexicographic preference $\preq$ select $q_i$ goods from the set $\MM\setminus \bigcup_{j=1}^{i-1}A_{p_j}$ they would always take exactly the goods of $A_{p_i}$. Thus $A_{p_i}=\dem{\preq}{q_i}{\MM\setminus\bigcup_{j=1}^{i-1}A_{p_j}}$ for every index $i\in \NN$, proving that $f$  behaves like a $\qp$-serial-quota mechanism over the preference profile $(\preq, \ldots ,\preq)$. Now it is left to prove that for the same $\sigma, \vecp{q}$, the mechanism $f$ behaves like a $\qp$-serial-quota mechanism for all identical preference profiles.

Let $\preqa\in \Plex(\MM)$ be some lexicographic preference Let $\pi\in\perM$ be the permutation defined s.t. $\preqa=\preq^{\pi}$. Such a permutation must exist because both $\preqa,\preq^{\pi}$ are lexicographic and thus differ only in their ordering of the goods\footnote{{See \cref{appendix:subsec:lexi} for the formal proof.}}. By definition of demand, for every set $S\subseteq \MM$ and constant $k\in [m]$:
\[\dem{\preqa}{k}{\pi(S)}=\dem{\preq^{\pi}}{k}{\pi(S)}=\argmax_{\preq^{\pi}}\{T\subseteq \pi(S),\space |T|=k\}=\pi\left(\argmax_{\preq}\{T\subseteq S,\space |T|=k\}\right)=\pi\left(\dem{\preq}{k}{S}\right)\]
Meaning, the demand of $\preqa$ from the ``renamed" set $\pi(S)$ is just a permutation of the demand of $\preq$ from $S$. For every index $i\in \NN$, neutrality of $f$ implies:
\[f_{p_i}(\preqa, \ldots ,\preqa)=f_{p_i}(\preq^{\pi},\ldots, \preq^{\pi})=\pi(f_{p_i}(\preq, \ldots ,\preq))=\pi(A_{p_i})\]
Using the relationship from earlier:
\[\pi(A_{p_i})=\pi\left(\dem{\preq}{q_i}{\MM\setminus\bigcup_{j=1}^{i-1}A_{p_j}}\right)=\dem{\preqa}{q_i}{\pi\left(\MM\setminus\bigcup_{j=1}^{i-1}A_{p_j}\right)}=\dem{\preqa}{q_i}{\MM\setminus\bigcup_{j=1}^{i-1}\pi\left(A_{p_j}\right)}\]
So in total we have shown:
\[f_{p_i}(\preqa, \ldots ,\preqa)=\dem{\preqa}{q_i}{\MM\setminus\bigcup_{j=1}^{i-1}f_{p_j}(\preqa, \ldots ,\preqa)}\]
As required.
\end{proof}

\lexicase*
\begin{proof}It is trivial to show that any serial-quota mechanism is truthful, non-bossy, and neutral for strict preferences. We will prove the converse.

Let $\qp$ be as defined in \cref{lemma:global-quotas-lexi}. For each index $i\in[n]$ let $p_i=\sigma^{-1}(i)$ be the agent who chooses $i$-th as defined by $\sigma$. Let $\veco\in \Plex^n(\MM)$ be some lexicographic preference profile. 
We construct a new preference $\preqa\in \Plex(\MM)$ according to the following ordering of the goods. The first $q_1$ best goods will be $A_1=\dem{\preq_{p_1}}{q_1}{\MM}$, ordered arbitrarily. The next $q_2$ best goods will be $A_2=\dem{\preq_{p_2}}{q_2}{\MM\setminus A_1}$. In general, for $i\in[n]$, the goods ordered from index $(\sum_{j=1}^{i-1}q_j)+1$ to  index $\sum_{j=1}^{i}q_j$ according to $\preqa$ will be defined recursively:
\[A_i=\dem{\preq_{p_i}}{q_i}{\MM\setminus \bigcup_{j=1}^{i-1}A_j}\]
Note that by this definition $(A_1,\ldots, A_n)$ is exactly the allocation a $\qp$-serial-quota mechanism would output for the input $\veco$, with each agent in turn receiving their demand according to the ordering $\sigma$ and the quotas $\vecp{q}$. Then all that is left is to prove is that $f\veco=(A_1,\ldots , A_n)$, and this will complete the proof.

Consider the identical preference profile $(\preqa,\ldots, \preqa)$. \cref{lemma:id-consecutive} implies that $f(\preqa,\ldots, \preqa)=(A_1,\ldots, A_n)$. We will now show that $\veco$ is a push-up of $(\preqa,\ldots, \preqa)$ for $f$. As $f$ is a truthful and non-bossy allocation mechanism, by \cref{monotonicity} (monotonicity) this will imply that that $f\veco=f(\preqa,\ldots, \preqa)=A$, thus completing the proof. 

Let $i\in \NN$ and $Z\subseteq \MM$ such that $Z\preqa f_{p_i}(\preqa,\ldots, \preqa)=A_i$. We will show that $Z\preq_{p_i}A_i$. Recall from above:
\[f_{p_i}(\preqa,\ldots, \preqa)=A_{i}=\dem{\preq_{p_i}}{q_i}{\MM\setminus \bigcup_{j=1}^{i-1}A_j}\]
By the definition of $\preqa$, the set goods in $ \bigcup_{j=1}^{i-1}A_j$ consists of \textbf{exactly} all the goods ranked above the goods of $A_i$. Then since $\preqa$ is lexicographic $Z\preqa f_{p_i}(\preqa,\ldots, \preqa)$ implies $Z\cap \left(\bigcup_{j=1}^{i-1}A_j\right)=\emptyset$. If $A_i\subseteq Z$ then $A_i\preqa Z$, which combined with the assumption $Z\preqa A_i$ implies $Z=A_i$ and so $Z\preq_{p_i}A_i$. Otherwise, assume $A_i\nsubseteq Z$. Then there exists $x\in A_i$ such that $x\notin Z$. 

Assume for contradiction $A_i \prec_{p_i} Z$. Then there must exist some $y\in Z\setminus A_i$ such that $x \prec_{p_i} y$. Since we showed earlier that $Z\cap \left(\bigcup_{j=1}^{i-1}A_j\right)=\emptyset$ we know that $y\notin \bigcup_{j=1}^{i-1}A_j$. Yet now $x\in A_i=\dem{\preq_{p_i}}{q_i}{\MM\setminus \bigcup_{j=1}^{i-1}A_j}$ but $y\notin A_i$, so it must be that $y\prec_{p_i}x$, otherwise $y$ would be chosen for $A_i$ before $x$. We have reached a contradiction, therefore $Z\preq_{p_i} A_i$ as required.
\end{proof}

\section{\texorpdfstring{Missing Proofs from \cref{sec:strict-partition}}
                        {Missing Proofs from Section~\ref{sec:strict-partition}}}

\subsection{\texorpdfstring{Missing Proofs from \cref{subsec:control-lemma}}
                        {Missing Proofs from Section~\ref{subsec:control-lemma}}}
\label{appendix:control}

In this section we prove \cref{control-c1} and \cref{control-c2}, two assisting claims in the proof of \cref{control} (Control Lemma).

\begin{claim}
\label{control-c1}
    {In the context of the proof of \cref{control}, the preference profile $\vecordp{e}$ is a push-up of $\veco$ for $f$.}
\end{claim}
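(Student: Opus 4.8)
The claim to prove is \cref{control-c1}: in the setup of the Control Lemma's proof, the lexicographic profile $\vecordp{e}$ is a push-up of $\veco$ for $f$. Recall that for each agent $j$, the ordering $\vecp{e}_j$ is consistent with the partition $(S, R_j, R_{-j})$, where $A_j = f_j\veco$, $R_j = A_j \setminus S$, $R_{-j} = \MM \setminus (R_j \cup S)$, and that for the special agent $i$ we have $A_i = R_i \cup S$ while for $j \neq i$ we have $A_j = R_j$ (since all agents strongly desire $S$ but only $i$ receives it whole — though actually I must be careful: the claim only assumes $S \subseteq f_i\veco$, so other agents get subsets of $R_j$ disjoint from $S$). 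So the plan is: fix an agent $j$ and a set $Z \subseteq \MM$ with $Z \preq_j A_j$; I must show $Z \preq_{\vecp{e}_j} A_j$, i.e. $Z \preccurlyeq_{\vecp{e}_j} A_j$ in the lexicographic order induced by $\vecp{e}_j$.

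**Key steps.**

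First I would split on whether $j = i$ or $j \neq i$. For $j \neq i$: here $A_j = R_j$, a subset of $\MM \setminus S$. The ordering $\vecp{e}_j$ ranks all of $S$ first, then $R_j$, then $R_{-j}$. If $Z \cap S \neq \emptyset$, then $Z$ contains a good ranked above everything in $A_j = R_j$, so immediately $Z \succ_{\vecp{e}_j} A_j$ — but wait, that would contradict what I want. This is where strong desire enters: since agent $j$ strongly desires $S$ and $Z \cap S \neq \emptyset$ while $Z \preq_j A_j = R_j \subseteq \MM\setminus S$, I get a contradiction with the strong-desire inequality $A \cup (\MM\setminus S) \prec B$ applied with $B \subseteq Z \cap S$ — more precisely, $R_j \subseteq \MM \setminus S$ is weakly dominated (monotonicity) by $\MM \setminus S$, which is strictly less preferred than any nonempty subset of $S$, hence than $Z$. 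So $A_j \prec_j Z$, contradicting $Z \preq_j A_j$. Therefore $Z \cap S = \emptyset$, so $Z \subseteq R_j \cup R_{-j}$. Now within the lexicographic order $\vecp{e}_j$ restricted to $R_j \cup R_{-j}$ (which ranks $R_j$ before $R_{-j}$), I claim $Z \preq_{\vecp{e}_j} R_j = A_j$: if $R_j \subseteq Z$ then monotonicity gives $A_j = R_j \preq_j Z$, so with $Z \preq_j A_j$ and strictness $Z = A_j$; if $R_j \not\subseteq Z$ then $Z$ misses a good of $R_j$ that $\vecp{e}_j$ ranks above all of $R_{-j}$, hence $Z \prec_{\vecp{e}_j} R_j = A_j$. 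Either way done.

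For $j = i$: here $A_i = R_i \cup S$ and $\vecp{e}_i$ ranks $S$ first, then $R_i$, then $R_{-i}$, so $\vecp{e}_i$ ranks $A_i = S \cup R_i$ as a prefix — the top $|A_i|$ goods. Given $Z \preq_i A_i$, I want $Z \preq_{\vecp{e}_i} A_i$, i.e. if $Z \neq A_i$ then $Z$ is not a "prefix-or-above" — equivalently $Z$ misses some good among the top $|A_i|$. Suppose for contradiction $A_i \subsetneq Z$ is false but $Z \succ_{\vecp{e}_i} A_i$: then $Z$ contains all of $S \cup R_i = A_i$ and something more, or $Z$ and $A_i$ have the same size but... actually since $A_i$ is exactly the top $|A_i|$ goods in $\vecp{e}_i$, $Z \succeq_{\vecp{e}_i} A_i$ with $Z \neq A_i$ forces $A_i \subsetneq Z$. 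Then monotonicity gives $A_i \preq_i Z$, and combined with $Z \preq_i A_i$ and strictness, $Z = A_i$, contradiction. Hence $Z \preq_{\vecp{e}_i} A_i$. Actually I realize the strong-desire property may not even be needed for $j=i$ — monotonicity plus the prefix structure suffices; strong desire is only needed for $j \neq i$ to rule out $Z$ grabbing goods from $S$.

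**Expected main obstacle.**

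The main subtlety is handling the agents $j \neq i$ correctly, specifically using the definition of strong desire with the right choice of $A$ and $B$. The strong-desire definition says $A \cup (\MM \setminus S) \prec_j B$ for every $A \subsetneq B \subseteq S$; I need to apply it (together with weak monotonicity, to pass from $R_j \subseteq \MM\setminus S$ to $\MM\setminus S$, and to pass from a nonempty $Z \cap S$ up to a full subset $B$ of $S$) to derive $R_j \prec_j Z$ whenever $Z$ touches $S$. Writing this chain of inequalities carefully — $A_j = R_j \preq_j \MM \setminus S = \emptyset \cup (\MM\setminus S) \prec_j B \preq_j Z$ for a suitable $\emptyset \subsetneq B \subseteq Z \cap S$ — is the one place where I must be precise about which monotonicity/strong-desire instance is invoked. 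Everything else is routine case analysis on the lexicographic orders, exactly paralleling the push-up arguments already carried out in the proof of \cref{cycle_nonexistence} and \cref{lemma:id_consecutive}.
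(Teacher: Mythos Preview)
Your proposal is correct and follows essentially the same approach as the paper's proof: both split into the cases $j=i$ and $j\neq i$, use strong desire (with $A=\emptyset$ and $B=Z\cap S$) together with monotonicity to rule out $Z\cap S\neq\emptyset$ when $j\neq i$, and then exploit that $A_j$ is a prefix of the ordering $\vecp{e}_j$ (restricted to $\MM\setminus S$ when $j\neq i$) to finish via monotonicity and strictness. Your write-up actually makes the strong-desire chain more explicit than the paper's one-line justification.
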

\begin{proof}
    Let $j\in \NN$ and $Z\subseteq \MM$ such that $Z \preq_j f_j\veco=A_j$, we will prove that $Z \preq_{\vecp{e}_j} A_j$. 

First consider the case $j=i$, recall that $A_i=R_i\cup S$. In this case agent $i$ with lexicographic preference $\preq_{\vecp{e}_i}$ strongly desires the set $R_i\cup S=A_i$. Then if it does not hold that $A_i\subseteq Z$ then clearly $Z\preq_{\vecp{e}_i}A_i$. Otherwise assume $A_i\subseteq Z$. In this case, $A_i\preq_i Z$, which combined with $Z\preq_i A_i$ and the assumption that $\PP$ is strict, implies that $A_i=Z$, thus again proving $Z\preq_{\vecp{e}_i}A_i$.

Next consider the case $j\neq i$, recall that $A_j=R_j$. In this case, since we assumed that  agent $j$ with preference $\preq_j$ strongly desires $S$ and $A_j\cap S=\emptyset$, the assumption $Z\preq_j A_j$ implies $Z\cap S =\emptyset$. Then since $\preq_{\vecp{e}_j}$ places the goods of $A_j=R_j$ highest after the goods of $S$, if it does not hold that $A_j\subseteq Z$ then immediately $Z\preq_{\vecp{e}_j}A_j$. Otherwise assume $A_j\subseteq Z$, then $A_j\preq_jZ$ which combined with $Z\preq_j A_j$ and the assumption that $\PP$ is strict implies $A_j=Z$, thus again proving $Z\preq_{\vecp{e}_j}A_j$. 
\end{proof}

\begin{claim}
\label{control-c2}
    {In the context of the proof of \cref{control}, the preference profile $\vecordp{e'}$ is a push-up of $\vecoa$ for $f$.}
\end{claim}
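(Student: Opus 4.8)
The statement to prove is \cref{control-c2}: in the context of the proof of the Control Lemma, the lexicographic preference profile $\vecordp{e'}$ is a push-up of $\vecoa$ for $f$. This means we must show that for every agent $j\in\NN$, the lexicographic preference $\preq_{\vecp{e'}_j}$ is a push-up of $\preqa_j$ for the allocation $f_j\vecoa = A'_j$; that is, for every $Z\subseteq\MM$ with $Z\preqa_j A'_j$, we must have $Z\preq_{\vecp{e'}_j}A'_j$.

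The plan is to split into two cases according to whether $j=i$ or $j\neq i$, mirroring the structure of \cref{control-c1}. For agents $j\in\NN\setminus\{i\}$, the ordering $\vecp{e'}_j$ was chosen to be consistent with $(A'_j,\MM\setminus A'_j)$, so $\preq_{\vecp{e'}_j}$ places all goods of $A'_j$ strictly above everything else. Hence if $A'_j\not\subseteq Z$ then immediately $Z\prec_{\vecp{e'}_j}A'_j$; and if $A'_j\subseteq Z$ then monotonicity gives $A'_j\preqa_j Z$, which together with the hypothesis $Z\preqa_j A'_j$ and strictness of $\PP$ forces $Z=A'_j$, again giving $Z\preq_{\vecp{e'}_j}A'_j$. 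This handles all agents other than $i$.

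The interesting case is $j=i$. Recall $\vecp{e'_i}$ is consistent with $(S_i,S_{-i},R'_i,R'_{-i})$ where $S_i = S\cap A'_i$, $S_{-i}=S\setminus A'_i$, $R'_i = A'_i\setminus S$, $R'_{-i}=\MM\setminus(R'_i\cup S)$, so that $A'_i = S_i\cup R'_i$. The key point is that agent $i$ with preference $\preqa_i$ strongly desires $S$. Let $Z\subseteq\MM$ with $Z\preqa_i A'_i$. We argue as follows. First, since $i$ strongly desires $S$ and $A'_i$ is what she actually receives under $\vecoa$, we claim $Z$ cannot contain any good of $S_{-i}=S\setminus A'_i$: if some good $b\in S_{-i}\cap Z$, then taking $A = A'_i\cap S = S_i$ and $B = S_i\cup\{b\}\subseteq S$ in the strong-desire definition, we get $S_i\cup(\MM\setminus S)\prec_i S_i\cup\{b\}$; but actually the cleaner route is: $A'_i = S_i\cup R'_i \subseteq S_i\cup(\MM\setminus S)$ (since $R'_i\subseteq\MM\setminus S$), so by monotonicity $A'_i\preqa_i S_i\cup(\MM\setminus S)$, and strong desire gives $S_i\cup(\MM\setminus S)\prec_i S_i\cup\{b\}$; hence $A'_i\prec_i S_i\cup\{b\}\preqa_i S_i\cup\{b\}\cup(Z\setminus\{b\})$ when that is $\subseteq Z$ — more directly $S_i\cup\{b\}\subseteq Z$ forces $A'_i\prec_i Z$ by monotonicity, contradicting $Z\preqa_i A'_i$. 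So $Z\cap S_{-i}=\emptyset$. Next, since $\preq_{\vecp{e'_i}}$ ranks $S_i$ highest, then $S_{-i}$, then $R'_i$, then $R'_{-i}$, and $Z$ contains no good of $S_{-i}$: if $S_i\cup R'_i=A'_i\not\subseteq Z$ then $Z\prec_{\vecp{e'_i}}A'_i$ because $Z$ misses a good ranked above all of $R'_{-i}$ while containing nothing from $S_{-i}$; and if $A'_i\subseteq Z$ then $A'_i\preqa_i Z$ and with $Z\preqa_i A'_i$ and strictness we get $Z=A'_i$, so $Z\preq_{\vecp{e'_i}}A'_i$ trivially.

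The main obstacle — and the reason this claim is deferred to the appendix — is getting the case $j=i$ exactly right: one must carefully use the strong-desire hypothesis for agent $i$ under $\vecoa$ to rule out $Z$ picking up goods of $S_{-i}$ (the part of $S$ that agent $i$ did \emph{not} receive), and then argue the lexicographic comparison using the four-block structure $(S_i,S_{-i},R'_i,R'_{-i})$ of $\vecp{e'_i}$. The subtle point is that $A'_i$ need not contain all of $S$, so unlike in \cref{control-c1} (where the agent who gets $S$ whole has $S$ as a prefix block), here $A'_i=S_i\cup R'_i$ straddles the first and third blocks of the ordering, with the second block $S_{-i}$ sitting in between; the strong-desire property is precisely what lets us conclude that any competitor bundle $Z$ that is weakly worse than $A'_i$ must avoid $S_{-i}$ entirely, after which the lexicographic argument goes through as in the other case.
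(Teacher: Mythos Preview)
Your overall strategy and the case $j\neq i$ match the paper's proof exactly. The $j=i$ case, however, has a genuine gap.

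You claim that any $Z$ with $Z\preqa_i A'_i$ must satisfy $Z\cap S_{-i}=\emptyset$, and your argument for this tacitly assumes $S_i\subseteq Z$ (you need $S_i\cup\{b\}\subseteq Z$ to pass from $A'_i\preca_i S_i\cup\{b\}$ to $A'_i\preca_i Z$ by monotonicity; you even write ``when that is $\subseteq Z$'' but never discharge the other case). In fact the intermediate claim is false without that assumption: take $S=\{a,b\}$, $A'_i=\{a\}$ (so $S_i=\{a\}$, $S_{-i}=\{b\}$, $R'_i=\emptyset$), and $Z=\{b\}$. Strong desire of $S$ does not determine the order between $\{a\}$ and $\{b\}$, so $\{b\}\preca_i\{a\}$ is perfectly consistent with $\preqa_i$ strongly desiring $S$; yet $Z\cap S_{-i}=\{b\}\neq\emptyset$.

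The fix is exactly what the paper does: case-split on whether $S_i\subseteq Z$ \emph{first}. If $S_i\not\subseteq Z$, then $A'_i$ contains a good from the top block $S_i$ that $Z$ lacks, and since everything in $Z\setminus A'_i\subseteq S_{-i}\cup R'_{-i}$ is ranked below $S_i$, the lexicographic comparison gives $Z\prec_{\vecp{e'_i}}A'_i$ immediately (no need to say anything about $S_{-i}$). If $S_i\subseteq Z$, then your strong-desire argument is valid: $S_i\cup\{b\}\subseteq Z$ for any $b\in Z\cap S_{-i}$, whence $A'_i\preqa_i S_i\cup(\MM\setminus S)\preca_i S_i\cup\{b\}\preqa_i Z$, a contradiction; so $Z\cap S_{-i}=\emptyset$, and the rest of your lexicographic analysis goes through.
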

\begin{proof}
Let $j\in \NN$ and $Z\subseteq \MM$ such that $Z \preqa_j f_j\vecoa=A'_j$, we will prove that $Z \preq_{\vecp{e'_j}} A'_j$. 

First consider the case when $j\in \NN\setminus \{i\}$. Since preference $\preq_{\vecp{e'_j}}$ lexicographically places the goods of $A'_j$ the highest, if it does not hold that $A'_j\subseteq Z$ then immediately $Z\preq_{\vecp{e'_j}}A'_j$, as required. Otherwise $A'_j\subseteq Z$, implying $A'_j\preqa_jZ$ which combined with the assumption $Z\preqa_jA'_j$ and the strictness of $\PP$ implies $Z=A'_j$ and thus again $Z\preq_{\vecp{e'_j}}A'_j$, as required.

Next consider the case $j=i$. In this case $A'_i=R'_i\cup S_i$ Since preference $\preq_{\vecp{e'_i}}$ places the goods of $S_i$ the highest, if it does not hold that $S_i\subseteq Z$ immediately $Z \preq_{\vecp{e'_i}} S_i\preq_{\vecp{e'_i}}A'_i$ as required. Otherwise, assume $S_i\subseteq Z$. Since we originally assumed that agent $i$ with preference $\preqa_i$ strongly desires $S$, if $S_{-i}\cap Z\neq \emptyset$ then by definition of strong desire $A'_i=R'_i\cup S_{i} \preca_i Z$, contradicting the earlier assumption that $Z\preqa_iA'_i$, therefore this is impossible and so $S_{-i}\cap Z= \emptyset$.

Up to now we have $Z\cap S=S_i=A'_i\cap S$, it remains to see what goods $Z$ contains from $R_i\cup R_{-i}$. Since preference $ \preq_{\vecp{e'_i}}$ lexicographically prefers the goods from $R'_i$ over the entirety of $R'_{-i}$, then if it does not hold that $R'_i\subseteq Z$ this implies  $Z \preq_{\vecp{e'_i}} S_i\cup R'_i=A'_i$, as required. Otherwise, assume $R'_i\subseteq Z$, then $A'_i=R'_i\cup S_i\subseteq Z$ so $A'_j\preqa_jZ$, which combined with $Z\preqa_jA'_j$ implies $Z=A'_j$ and therefore again $Z \preq_{\vecp{e'_i}}A'_i$, as required.
\end{proof}

\subsection{\texorpdfstring{Necessity of all properties for the Control Lemma: {Proof of \cref{lemma:control-tight}}}
                        {Necessity of all properties for the Control Lemma: {Proof of Claim~\ref{lemma:control-tight}}}}
\label{appendix:control-nec}
{In this section we show that} {for every preference class $\PP\in \FFn(\MM)$, all three properties of truthfulness, non-bossiness, and neutrality are needed to prove the Control Lemma.}
\controltight*

\begin{proof}
{Let $\PP\in\FFn(\MM)$ be some permutations-closed class of strict preferences that contains the lexicographic preferences. We present three allocations mechanisms, each violates only one of the three properties (of truthfulness, non-bossiness, and neutrality), s.t. all three mechanisms do not satisfy the control claim. Since the only preferences that $\PP$ is guaranteed to contain are the lexicographic preferences, we will only use lexicographic preferences in our counterexamples.}
    
    \emph{Without truthfulness:} Consider the following non-bossy and neutral partition mechanism of     a set of goods $\MM=\{x,y\}$ among $n\geq 2$ agents.     If both agents have the same preference, then agent $1$ gets all the goods. Otherwise, agent $2$ gets all the goods. The mechanism is non-bossy as any partition mechanism for 2 agents is non-bossy. The mechanism is neutral since the property ``the preferences are identical'' is invariant under permutation of the names of the goods. Yet the mechanism is not truthful because agent $2$ benefits from lying when both agents have identical preferences. Finally, notice that the control claim does not hold{, even for lexicographic preferences}: over the profile where both agents have the same {{lexicographic}} preference which places the good $x$ highest, both agents strongly desire the singleton set $\{x\}$ and agent $1$ gets $x$, yet she does not control $x$ because if agent $2$ alters her reported preference she can take $x$, so this contradicts the control claim.
    
\emph{Without non-bossiness:}  Consider the following truthful and neutral partition mechanism of a set of goods $\MM$ among $n\geq 3$ agents. Let $x$ be the highest rated good according to agent $1$, then agent $2$ is allocated $\{x\}$ while agent $3$ is allocated all remaining goods. The mechanism is truthful because the only agent that affects the allocation is agent $1$, but she never gets any goods anyway so she can never have incentive to lie. The mechanism is neutral since permuting the names of the goods causes the best good $x$ to be permuted accordingly. Yet the mechanism is bossy, since when agent $1$ swaps her highest rated good she affects the allocation of agent $2$ without changing her own allocation. Finally, notice that the control claim does not hold{, even for lexicographic preferences}: consider the case when all agents report {{lexicographic}} preferences {which place some good $x\in \MM$ the highest. Then all agents strongly desire the set} $\{x\}$, and agent $2$ gets $\{x\}$ whole. Yet if agent $1$ reports some other preference which {places a different good the highest} then agent $3$ gets $x$ instead, even though agent $2$ still strongly desires it, contradicting the control claim. We remark that one can also modify this example to show that even with $n=2$ agents, a similar claim holds for allocations (not partitions). This can be done by instead of giving goods to agent $3$, simply throwing them away.

    \emph{Without neutrality:}  Consider the following truthful and non-bossy partition mechanism for allocating {a set $\MM$ of $m\geq 3$ goods} 
        among $n\geq 3$ agents.     Fix two distinct goods $a\neq b\in \MM$. Agent $1$ first picks her single favorite good $x$ from $\MM\setminus \{a\}$ and receives it. There are now two cases:
    \begin{enumerate}
        \item If $x=b$ then agent $2$ picks her single favorite good $y$ from $\MM\setminus \{x\}$ and agent $3$ gets all remaining goods $\MM\setminus\{x,y\}$. 
        \item If $x\neq b$ then agent $3$ picks her single favorite good $y$ from $\MM\setminus \{x\}$ and agent $2$ gets all remaining goods $\MM\setminus\{x,y\}$.
    \end{enumerate}
The mechanism is truthful since lying can only cause an agent to {get} a less preferred good when choosing. The mechanism is also non-bossy since the agents can only alter the overall allocation by altering the good they select at some step, thus altering their own allocation as well. Yet the mechanism is not neutral since it makes use of the specific naming of two goods $a,b$, so if we rename the goods according to a permutation that swaps the names of one of these goods with a different good the resulting allocation may be radically different. Finally, notice that control claim does not hold{, even for lexicographic preferences}: consider two different profiles when all agents {have lexicographic preferences which place good $a$ the highest, and thus} strongly desire the singleton set $\{a\}$, s.t. in the first profile {agent $1$'s second favorite good is $b$, while in the second profile agent $1$'s second favorite good is some other good $c\in \MM\setminus\{a,b\}$.} In both profiles all agents strongly desire $\{a\}$, yet in the first agent $2$ gets $a$ while in the second profile agent $3$ gets $a$, contradicting the control claim. 
\end{proof}

\subsection{\texorpdfstring{Missing Proofs from \cref{subsec:strict-partition}}
                        {Missing Proofs from Section~\ref{subsec:strict-partition}}}
\label{appendix:strict-partition}

For the following lemmas let $f$ be a truthful, non-bossy, and neutral partition mechanism of a set $\MM$ of $m$ goods among a set $\NN$ of $n$ agents
with  preferences from some class $\PP\in \FFn(\MM)$. 
Moreover let $\qp\in\Qnm$ be the quota-vector and ordering of the agents for which $f$ is a serial-quota mechanism according to \cref{lexi-serial-quota}.

In this section, we prove the following characterization for partition mechanisms:
\strictpartition*

Given a preference $\preq$ over a set of goods $\MM$, for any subset of goods $\WW\subseteq\MM$ {we consider the preference $\restW{\preq}$ \emph{induced} by $\preq$ on $\WW$: the
one in which preference over subsets of $\WW$ are simply determined according to $\preq$.  Formally:}

\begin{definition}
\label{definition:induced}
    Let $\preq$ by a preference over a set $\MM$ of $m$ goods. For a subset $\WW\subseteq \MM$, the preference $\restW{\preq}$ denotes the preference \emph{induced} by $\preq$ on $\WW$, s.t. for every $S,T\subseteq W$:       \[S\restW{\preq}T\iff S\preq T\]
    Let $\PP$ be a preference class over $\MM$. The \emph{class of preferences induced} by $\PP$ on $\WW$ is the class $\PP_{\WW}=\{\restW{\preq}:\preq\in\PP\}$.     Note that if $\PP\in \FFn(\MM)$ then $\PP_{\WW}\in \FFn(\WW)$.
\end{definition}

{We first show that after agent $p_1$ takes her $q_1$-demand, 
the allocation of the remaining agents is decided only according to their preferences on the remaining goods.} {That is, the preferences of the agents on the goods taken by agent $p_1$ cannot affect the outcome of $f$. We formalize this using \cref{definition:induced} of \emph{induced preferences} {in the following lemma}.

\begin{restatable}{lemma}{firstpickindependence}
    \label{first-pick-independence}
    Let $\PP\in \FFn(\MM)$ be a class of \emph{strict} preferences. Fix a strict preference $\preq_{p_1}\in \PP$, let $D=\dem{\preq_{p_1}}{q_1}{\MM}$ be the unique $q_1$-demand of agent $p_1$ with preference $\preq_{p_1}$, and let $\WW=\MM\setminus D$ denote the set of remaining goods after agent $p_1$ has taken her $q_1$-demand. For any pair of profiles $\preq_{-p_1}, \preqa_{-p_1}\in \PP^{n-1}$ such that $\restW{\preq_i}=\restW{\preqa_i}$ for every $i\in \NN\setminus\{p_1\}$:    \[f(\preq_{p_1},\preq_{-p_1})=f(\preq_{p_1},\preqa_{-p_1})\]
\end{restatable}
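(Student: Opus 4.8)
\textbf{Proof plan for \cref{first-pick-independence}.}

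The plan is to apply the Control Lemma together with monotonicity (\cref{monotonicity}) to pin down the allocations of the remaining agents. First I would fix $\preq_{p_1}\in\PP$, set $D=\dem{\preq_{p_1}}{q_1}{\MM}$ and $\WW=\MM\setminus D$, and fix two profiles $\preq_{-p_1},\preqa_{-p_1}$ that agree with each other when restricted to $\WW$. By \cref{first_picker} we already know that in \emph{both} profiles $(\preq_{p_1},\preq_{-p_1})$ and $(\preq_{p_1},\preqa_{-p_1})$ agent $p_1$ receives exactly $D$; so the disputed goods all lie in $\WW$, and the question is whether the other agents split $\WW$ the same way in the two profiles. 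Since $f$ partitions $\MM$, writing $g(\cdot)$ for the allocation among agents $\NN\setminus\{p_1\}$ over the remaining goods $\WW$, it suffices to show $g$ depends on the other agents' reports only through $\restW{\preq_i}$.

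The key step is a ``push-up'' reduction to lexicographic profiles. Let $B=(B_i)_{i\neq p_1}$ be the allocation that $f$ gives the non-$p_1$ agents under $(\preq_{p_1},\preq_{-p_1})$ (so $\bigcup_{i\neq p_1}B_i=\WW$). For each $i\neq p_1$, I would pick a lexicographic preference $\preqb_i$ whose ordering is consistent with $(D, B_i, \WW\setminus B_i)$ — that is, it ranks every good of $D$ above every good of $B_i$, which in turn is above everything else. For agent $p_1$ I would pick a lexicographic $\preqb_{p_1}$ consistent with $(D,\WW)$. The point of putting $D$ at the very top of every agent's lexicographic order is twofold: (i) each $\preqb_i$ is then a push-up of $\preq_i$ for the bundle $f_i(\preq_{p_1},\preq_{-p_1})$ (for $i\neq p_1$, because $B_i\subseteq\WW$ is ranked maximally among subsets of $\WW$, and adding goods of $D$ only helps since $\preqb_i$ is monotone — this is the routine technical check, analogous to the push-up verifications deferred in the Control Lemma proof); and (ii) by \cref{monotonicity}, $f$ agrees on $(\preq_{p_1},\preq_{-p_1})$ and on the lexicographic profile $\vecob$. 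Now \cref{lexi_serial_quota} applies to $\vecob$: $f$ runs as $\SQqp$, agent $p_1$ picks first and (since $D$ sits at the top of $\preqb_{p_1}$, and $|D|=q_1$) takes exactly $D$, and then the remaining agents in the order $p_2,\ldots,p_n$ each take their quota-demand from $\WW$ according to $\preqb_{p_i}$. Crucially, because each $\preqb_{p_i}$ only looked at $(B_{p_i},\WW\setminus B_{p_i})$ within $\WW$, the resulting split of $\WW$ is determined entirely by the restricted orderings $\restW{\preqb_i}$.

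Running the identical construction on the second profile $(\preq_{p_1},\preqa_{-p_1})$ gives lexicographic preferences $\preqb'_i$ consistent with $(D, B'_i, \WW\setminus B'_i)$, where $B'$ is the non-$p_1$ allocation under that profile; again $f(\preq_{p_1},\preqa_{-p_1})=f(\vecob')$ and the split of $\WW$ under $\vecob'$ is governed by $\restW{\preqb'_i}$. The final step is to compare $\vecob$ and $\vecob'$: I claim the serial-quota run on $\WW$ gives the same outcome for both. The subtle point — and what I expect to be the main obstacle — is that $\preqb_i$ and $\preqb'_i$ need \emph{not} induce the same preference on $\WW$ in general, since $B_i$ and $B'_i$ could differ. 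To close this, I would instead argue directly: starting from $(\preq_{p_1},\preq_{-p_1})$, I would \emph{not} re-sort the goods of $\WW$ by the realized bundle but rather choose, for each $i\neq p_1$, a lexicographic $\preqb_i$ with $D$ on top followed by an ordering of $\WW$ that refines $\restW{\preq_i}$ (e.g.\ consistent with the full $\preq_i$-demand sequence within $\WW$), so that $\restW{\preqb_i}$ is a lexicographic refinement of $\restW{\preq_i}$ and $\preqb_i$ is still a push-up of $\preq_i$ for $f_i(\preq_{p_1},\preq_{-p_1})$ — this last part is where the care is needed, and it may require first applying \cref{monotonicity} once to move to an intermediate profile where $i$'s bundle is already $\restW{\preq_i}$-maximal within $\WW$, which is exactly what $\SQqp$ guarantees on the prior lexicographic profile. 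Since $\restW{\preq_i}=\restW{\preqa_i}$ by hypothesis, the \emph{same} lexicographic refinements $\restW{\preqb_i}$ can be used for both profiles, so $\vecob$ and $\vecob'$ restrict identically to $\WW$, the serial-quota runs coincide on $\WW$, and therefore $f(\preq_{p_1},\preq_{-p_1})=f(\vecob)=f(\vecob')=f(\preq_{p_1},\preqa_{-p_1})$, as required.
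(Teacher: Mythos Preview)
Your push-up construction has a genuine gap. You propose, for each $i\neq p_1$, a lexicographic $\preqb_i$ consistent with $(D,B_i,\WW\setminus B_i)$ and claim this is a push-up of $\preq_i$ for the bundle $B_i$. It is not. Push-up requires that every $Z$ with $Z\preq_i B_i$ also satisfy $Z\preqb_i B_i$. But take $Z=\{d\}$ for any $d\in D$: since $B_i\subseteq\WW$ contains no good from $D$, and $\preqb_i$ places all of $D$ lexicographically above $B_i$, we get $\{d\}\succb_i B_i$. Yet nothing prevents $\{d\}\prec_i B_i$ under the original preference $\preq_i$ (agent $i$ has no special attachment to $D$, which is $p_1$'s demand, not hers). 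So the push-up check fails, and monotonicity cannot be invoked. The Control Lemma gets away with putting a set $S$ at the top only because the relevant agents \emph{strongly desire} $S$; here there is no such assumption about $D$. Your fallback (``choose a lexicographic refinement of $\restW{\preq_i}$ with $D$ on top'') has the identical defect, and the further patch via an ``intermediate profile'' is left unspecified.

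The paper's proof avoids all of this machinery. It swaps one agent at a time from $\preq_i$ to $\preqa_i$ and uses only three facts: by \cref{first_picker}, agent $p_1$ receives $D$ regardless, so agent $i$'s bundle lies in $\WW$ both before and after the swap; since $\restW{\preq_i}=\restW{\preqa_i}$ and both are strict, the two bundles are comparable the same way under either preference, so truthfulness in both directions forces them to be equal; and then non-bossiness propagates equality to the full allocation. No lexicographic detour, no push-ups, no appeal to \cref{lexi_serial_quota} --- just truthfulness and non-bossiness applied directly to bundles that are already confined to $\WW$.
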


\begin{proof}
    Assume w.l.o.g that $p_1=1$. We will prove the claim by induction on $i$, from $1$ to $n$, showing that:
    \[f(\preq_1, \preqa_2,\ldots , \preqa_{i},\preq_{i+1},\ldots , \preq_n)=f\veco\]
    
    \textit{Base Case:} $i=1$. This case is trivial since the LHS preference profile is just $\veco$.
    
    \textit{Induction Case:} Let $i>1$, assume that $f(\preq_1, \preqa_2,\ldots , \preqa_{i-1},\preq_{i},\ldots , \preq_n)=f\veco$, we will prove that $f(\preq_1, \preqa_2,\ldots , \preqa_{i},\preq_{i+1},\ldots , \preq_n)=f\veco$. \cref{first-picker} implies that over any profile where agent $p_1=1$ reports preference $\preq_1$ she gets allocation $D=\dem{\preq_{p_1}}{q_1}{\MM}$, and so the allocations of all other agents are subsets of $\WW$. In particular, this implies both:
    \[f_i(\preq_1, \preqa_2,\ldots , \preqa_{i-1},\preq_{i},\ldots , \preq_n)\subseteq \WW\]
    \[f_i(\preq_1, \preqa_2,\ldots , \preqa_{i},\preq_{i+1},\ldots , \preq_n)\subseteq \WW\]
    
    The profile $(\preq_1, \preqa_2,\ldots , \preqa_{i},\preq_{i+1},\ldots , \preq_n)$ is obtained from $(\preq_1, \preqa_2,\ldots , \preqa_{i-1},\preq_{i},\ldots , \preq_n)$ by having agent $i$ swap her preference from $\preq_i$ to $\preqa_{i}$. Assume for contradiction that:
    \[f_i(\preq_1, \preqa_2,\ldots , \preqa_{i-1},\preq_{i},\ldots , \preq_n) \neq f_i(\preq_1, \preqa_2,\ldots , \preqa_{i},\preq_{i+1},\ldots , \preq_n)\]
    
    Since both allocations are subsets of $\WW$ and $\restW{\preq_i}=\restW{\preqa_i}$, this implies that both preferences $\preq_i$ and $\preqa_i$ must agree which one is of the allocations is better (and since both preferences are strict, one of the allocations be \emph{strictly} better). But this contradicts the truthfulness of $f$, since over the profile with the worse allocation agent $i$ would benefit from lying and swapping her reported preference to the other one. Then $f_i(\preq_1, \preqa_2,\ldots , \preqa_{i-1},\preq_{i},\ldots , \preq_n) = f_i(\preq_1, \preqa_2,\ldots , \preqa_{i},\preq_{i+1},\ldots , \preq_n)$, so non-bossiness of $f$ implies: \[f(\preq_1, \preqa_2,\ldots , \preqa_{i-1},\preq_{i},\ldots , \preq_n) = f(\preq_1, \preqa_2,\ldots , \preqa_{i},\preq_{i+1},\ldots , \preq_n)=f\veco\]

Proving the induction case.
    
    Taking the end of the induction $i=n$ we get:    \[f(\preq_1, \preqa_2,\ldots , \preqa_{n})=f\veco\]
    
    Proving the claim.
\end{proof}

We have seen in \cref{first-pick-independence} that after agent $p_1$ takes her $q_1$-demand, the allocation of the remaining goods among the remaining agents depends only on their preferences over the remaining goods. This allows us to define a new partition mechanism $f^{\preq_{p_1}}$ which allocates the set $\WW$ of the remaining goods  among the remaining $n-1$ agents, such that the mechanism receives from the agents only their induced preferences on $\WW$, and is identical to $f$. 
{We refer to $f^{\preq_{p_1}}$ as \emph{induced partition mechanism}}.
In the following lemma, we prove that this mechanism is also  truthful, non-bossy, and neutral. The purpose of this lemma is to facilitate the use of an induction argument in \cref{strict-partition}, by showing how to ``remove" the first agent and her bundle from the mechanism.

\begin{restatable}{lemma}{inducedmechanism}
    \label{induced-mechanism}
     Let $f$ be a truthful, non-bossy, and neutral partition mechanism of a set $\MM$ goods among a set $\NN$ of $n$ agents with \emph{strict} preferences from some class $\PP\in \FFn(\MM)$. Fix a strict preference $\preq_{p_1}\in \PP$, let $D=\dem{\preq_{p_1}}{q_1}{\MM}$ be the unique $q_1$-demand of $p_1$ with preference $\preq_{p_1}$. Let $\WW=\MM\setminus D$ denote the set of remaining goods after agent $p_1$ has taken her $q_1$-demand. Then exists an {\emph{induced partition mechanism}}
     $f^{\preq_{p_1}}$ of the goods $\WW$ among $n-1$ agents $\NN'=\NN\setminus\{p_1\}$ with strict preferences from class $\PP_{\WW}\in \FFn(\WW)$ such that, letting $f_{-p_1}$ denote the allocation of $f$ to all agents except agent $p_1$, for every profile $\preq_{-p_1}\in \PP^{n-1}$:  \[f^{\preq_{p_1}}(\restW{\preq_{-p_1}})=f_{-p_1}\veco\]
     Moreover, $f^{\preq_{p_1}}$ is truthful, non-bossy, and neutral.
\end{restatable}

\begin{proof}
                Fix a preference $\preq_{p_1}\in \PP$. We must first prove that the partition mechanism $f^{\preq_{p_1}}$ is well-defined. That is, we will prove that for every input preference profile $\preq_{-p_1}\in \PP_{\WW}^{n-1}$ with preferences over $\WW$, the mechanism $f^{\preq_{p_1}}$ outputs a single partition. Note that this is not immediately true from the above definition since there can be many different preferences of the set of goods $\MM$ that reduce to the same preference when induced {on} $\WW$. Formally, let $\preq_{-p_1}, \preqa_{-p_1}\in \PP^{n-1}$ be profiles such that $\restW{\preq_i}=\restW{\preqa_i}$ for every agent $i\in \NN\setminus\{p_1\}$. Then \cref{first-pick-independence} implies that $f(\preq_{p_1},\preq_{-p_1})=f(\preq_{p_1},\preqa_{-p_1})$, so we can define $f^{\preq_{p_1}}(\restW{\preq_i})=f^{\preq_{p_1}}(\restW{\preqa_i})=f_{-p_1}(\preq_{p_1},\preq_{-p_1})$ and $f^{\preq_{p_1}}$ is well-defined. 

        We will now prove that $f^{\preq_{p_1}}$ is truthful, non-bossy, and neutral.

        \emph{Truthfulness:} Let $\preq_{-p_1}\in \PP_{\WW}^{n-1}$ be some preference profile of the agents $\NN\setminus\{p_1\}$ over the set of goods $\WW$, let $i\in\NN\setminus\{p_1\}$ be some agent an let $\preqa_i\in \PP_{\WW}$ be an alternate preference. In order to prove the truthfulness of $f^{\preq_{p_1}}$ we must prove that:        \[f^{\preq_{p_1}}_i(\preqa_i,\preq_{-i,p_1})\preq_if^{\preq_{p_1}}_i\left(\preq_i,\preq_{-i,p_1}\right)\]
Let $\preqb_{-p_1}\in \PP^{n-1}$ be a preference profile over the original set of goods $\MM$ that satisfies $\restW{\preqb_j}=\preq_j$  for every agent $j\in\NN\setminus\{p_1\}$. Such a profile must exist from the definition of  $\PP_{\WW}$ (see \cref{definition:induced}). These preferences order the subsets of $\WW$ exactly as their counterparts from $\preq_{-p_1}$, with the addition of also ordering the goods of $\MM$ in some arbitrary manner. Similarly, define a preference $\preqb'_i\in \PP$ that satisfies $\restW{\preqb'_i}=\preqa_i$. Then by definition of $f^{\preq_{p_1}}$:
\[f_{-p_1}\left(\preq_{p_1},\preqb_i,\preqb_{-i,p_1}\right)=f^{\preq_{p_1}}\left(\restW{\preqb_i},\restW{\preqb_{-i,p_1}}\right)=f^{\preq_{p_1}}\left(\preq_i,\preq_{-i,p_1}\right)\]
\[f_{-p_1}\left(\preq_{p_1},\preqb'_i,\preqb_{-i,p_1}\right)=f^{\preq_{p_1}}\left(\restW{\preqb'_i},\restW{\preqb_{-i,p_1}}\right)=f^{\preq_{p_1}}(\preqa_i,\preq_{-i,p_1})\]
Truthfulness of $f$ implies that:
\[f^{\preq_{p_1}}_i(\preqa_i,\preq_{-i,p_1})=f_i\left(\preq_{p_1},\preqb'_i,\preqb_{-i,p_1}\right)\preqb_i f_i\left(\preq_{p_1},\preqb_i,\preqb_{-i,p_1}\right)=f^{\preq_{p_1}}_i\left(\preq_i,\preq_{-i,p_1}\right)\]
Finally, since we defined $\preqb_i$ so that $\restW{\preqb_i}=\preq_i$ and both $f^{\preq_{p_1}}_i\left(\preq_i,\preq_{-i,p_1}\right),f^{\preq_{p_1}}_i(\preqa_i,\preq_{-i,p_1})\subseteq \WW$, the above inequality is equivalently stated:
\[f^{\preq_{p_1}}_i(\preqa_i,\preq_{-i,p_1})\preq_if^{\preq_{p_1}}_i\left(\preq_i,\preq_{-i,p_1}\right)\]
Proving the truthfulness of $f^{\preq_{p_1}}$.

\emph{Non-Bossiness:} Let $\preq_{-p_1}\in \PP_{\WW}^{n-1}$ be some preference profile of the agents $\NN\setminus\{p_1\}$ over the set of goods $\WW$, let $i\in\NN\setminus\{p_1\}$ be some agent an let $\preqa_i\in \PP_{\WW}$ be an alternate preference. Since the preference class $\PP_{\WW}$ contains only strict preferences, in order to prove that $f^{\preq_{p_1}}$ is neutral we assume that $f^{\preq_{p_1}}_i(\preqa_i,\preq_{-i,p_1})=f^{\preq_{p_1}}_i\left(\preq_i,\preq_{-i,p_1}\right)$ and prove that $f^{\preq_{p_1}}(\preqa_i,\preq_{-i,p_1})=f^{\preq_{p_1}}\left(\preq_i,\preq_{-i,p_1}\right)$. 

Let $\preqb_{-p_1}\in \PP^{n-1}$ be a preference profile over the original set of goods $\MM$ that satisfies $\restW{\preqb_j}=\preq_j$  for every agent $j\in\NN\setminus\{p_1\}$. Similarly, define a preference $\preqb'_i\in \PP$ that satisfies $\restW{\preqb'_i}=\preqa_i$. Then by definition of $f^{\preq_{p_1}}$:
\[f_{-p_1}\left(\preq_{p_1},\preqb_i,\preqb_{-i,p_1}\right)=f^{\preq_{p_1}}\left(\restW{\preqb_i},\restW{\preqb_{-i,p_1}}\right)=f^{\preq_{p_1}}\left(\preq_i,\preq_{-i,p_1}\right)\]
\[f_{-p_1}\left(\preq_{p_1},\preqb'_i,\preqb_{-i,p_1}\right)=f^{\preq_{p_1}}\left(\restW{\preqb'_i},\restW{\preqb_{-i,p_1}}\right)=f^{\preq_{p_1}}(\preqa_i,\preq_{-i,p_1})\]
Since we assumed that $f^{\preq_{p_1}}_i(\preqa_i,\preq_{-i,p_1})=f^{\preq_{p_1}}_i\left(\preq_i,\preq_{-i,p_1}\right)$ from the above equations this implies $f_i\left(\preq_{p_1},\preqb_i,\preqb_{-i,p_1}\right)=f_i\left(\preq_{p_1},\preqb'_i,\preqb_{-i,p_1}\right)$. Then non-bossiness of $f$ implies that $f\left(\preq_{p_1},\preqb_i,\preqb_{-i,p_1}\right)=f\left(\preq_{p_1},\preqb'_i,\preqb_{-i,p_1}\right)$. Then the above equations imply that:
\[f^{\preq_{p_1}}(\preqa_i,\preq_{-i,p_1})=f_{-p_1}\left(\preq_{p_1},\preqb'_i,\preqb_{-i,p_1}\right)=f^{\preq_{p_1}}\left(\preq_i,\preq_{-i,p_1}\right)=f^{\preq_{p_1}}\left(\preq_i,\preq_{-i,p_1}\right)\]
Proving the non-bossiness of $f^{\preq_{p_1}}$.

\emph{Neutrality:} Let $\preq_{-p_1}\in \PP_{\WW}^{n-1}$ be some preference profile and let $\pi\in\per{W}$ be a permutation of the set of goods $\WW$. Since the preference class $\PP_{\WW}$ contains only strict preferences, in order to prove that $f^{\preq_{p_1}}$ is neutral we must prove that $f^{\preq_{p_1}}(\preq_{-p_1}^{\pi})= \pi(f^{\preq_{p_1}}(\preq_{-p_1}))$. Let $\sigma\in \perM$ be a permutation of $\MM$ defined from $\pi$ as follows:
\[\sigma(x)=\begin{cases}
    \pi(x) & x\in \WW \\
    x & x\in \MM\setminus \WW
\end{cases}\]
That is, $\sigma$ permutes the goods of $\WW$ exactly like $\pi$ and behaves like the identity for all other goods.

Let $\preqb_{-p_1}\in \PP^{n-1}$ be a preference profile over the original set of goods $\MM$ that satisfies $\restW{\preqb_j}=\preq_j$  for every agent $j\in\NN\setminus\{p_1\}$. We will first prove that $\restW{\preqb^{\sigma}_j}=\preq^{\pi}_j$ for every $j\in\NN\setminus\{p_1\}$. Meaning, we show that if we take a preference over the set of goods $\MM$, permute it according to $\sigma$ and then induce it to $\WW\subseteq \MM$, this is equivalent as first inducing it to $\WW$ and then permuting according to $\pi$. Fix some $j\in\NN\setminus\{p_1\}$ and let $S,T\subseteq \WW$, we must show that $S\preqb^{\sigma}_j T\iff S\preq^{\pi}_j T$. By definition of the permutation of a preference, $S\preqb^{\sigma}_j T \iff \sigma^{-1}(S)\preqb_j \sigma^{-1}(T)$. From the definition of $\sigma$, since both $S,T$ are in $\WW$ we get $\sigma^{-1}(S)=\pi^{-1}(S)$ and $\sigma^{-1}(T)=\pi^{-1}(T)$, so $\sigma^{-1}(S)\preqb_j \sigma^{-1}(T)\iff \pi^{-1}(S)\preqb_j \pi^{-1}(T)$. Since we originally defined $\preqb_j$ so that $\restW{\preqb_j}=\preq_j$ and both $\pi^{-1}(S),\pi^{-1}(T)\subseteq \WW$ we get $\pi^{-1}(S)\preqb_j \pi^{-1}(T)\iff \pi^{-1}(S)\preq_j \pi^{-1}(T)$. Finally, using the definition of the permutation of a preference, we know that $\pi^{-1}(S)\preq_j \pi^{-1}(T)\iff S\preq^{\pi}_j T$, completing the proof that $\restW{\preqb^{\sigma}_j}=\preq^{\pi}_j$.

Consider the preference $\preq_{p_1}^{\sigma}$. Since we defined $\sigma$ such that $\sigma(x)=x$ for all goods $x\in \MM \setminus \WW  =D= \dem{\preq_{p_1}}{q_1}{\MM}$, the demand of agent $p_1$ will remain the same with preference $\preq_{p_1}^{\sigma}$, so $f_{p_1}(\preq^{\sigma}_{p_1},\preqb^{\sigma}_{-p_1})=f_{p_1}(\preq_{p_1},\preqb^{\sigma}_{-p_1})=D$. Therefore non-bossiness of $f$ implies that $f(\preq^{\sigma}_{p_1},\preqb^{\sigma}_{-p_1})=f(\preq_{p_1},\preqb^{\sigma}_{-p_1})$, and so neutrality of $f$ implies:
\[f(\preq_{p_1},\preqb^{\sigma}_{-p_1})=f(\preq^{\sigma}_{p_1},\preqb^{\sigma}_{-p_1})=\sigma(f(\preq_{p_1},\preqb_{-p_1}))\]
Since $f_{p_1}(\preq_{p_1},\preqb_{-p_1})=D=\MM\setminus \WW$, we know that $f_{j} (\preq_{p_1},\preqb_{-p_1})\subseteq \WW$ for every agent $j\in\NN\setminus\{p_1\}$, so $\sigma(f_{j} (\preq_{p_1},\preqb_{-p_1}))=\pi(f_{j} (\preq_{p_1},\preqb_{-p_1}))$. Finally, using the definition of $f^{\preq_{p_1}}$ and that both $\restW{\preqb_j}=\preq_j$ and $\restW{\preqb^{\sigma}_j}=\preq^{\pi}_j$ for every $j\in\NN\setminus\{p_1\}$, we get:
\[f^{\preq_{p_1}}(\preq^{\pi}_{-p_1})=f_{-p_1}(\preq_{p_1},\preqb^{\sigma}_{-p_1})=\sigma(f_{-p_1}(\preq_{p_1},\preqb_{-p_1}))=\pi(f_{-p_1}(\preq_{p_1},\preqb_{-p_1}))=\pi(f^{\preq_{p_1}}(\preq_{-p_1}))\]
Proving the neutrality of $f^{\preq_{p_1}}$.
\end{proof}

Finally, we prove the characterization.

\begin{proof} [Proof of \Cref{strict-partition}]
As preferences are strict, proving that a serial-quota mechanism is truthful, non-bossy, and neutral is trivial. We will now prove the converse. 

We will prove by induction on $n$, the number agents. The base step $n=1$ is trivial: when there is only one agent and all goods must be allocated, she must always get all the goods. Then, choosing $p_1=1,q_1=m$ we get that for any $\preq_1\in \PP$, $f_1(\preq_1)=\MM=\dem{\preq_1}{m}{\MM}$ (as her preference is strict and monotone, $\MM$ is indeed her demand), therefore $f$ is a $\qp$-serial-quota partition mechanism as required. We will now prove the induction step.

For the induction hypothesis, we can assume the lemma holds for all partition mechanisms of any set of goods $\WW$ to $n-1$ agents. By assumption, the partition mechanism $f$ is a truthful, non-bossy, and neutral mechanism of the set of goods $\MM$ among a set $\NN$ of $n$ agents with strict preferences from 
class $\PP\in \FFn(\MM)$. Let $\qp\in\Qnm$ be the quota-vector and ordering of the agents defined for $f$ in \cref{lexi-serial-quota}. We will prove that $f$ is a $\qp$-serial-quota partition mechanism.

Let $\veco\in \PP^n$ be some preference profile. Denote $D=\dem{\preq_{p_1}}{q_1}{\MM}$, then \cref{first-picker} implies that $f_{p_1}\veco =D$.

Let $\vecp{e}=(e_1,\ldots , e_m)$ be an ordering of the goods of $\MM$ that places the goods of $D$ first. Consider the identical lexicographic profile $(\preq_{\vecp{e}},\ldots , \preq_{\vecp{e}})$. \cref{lexi-serial-quota}  implies that $f(\preq_{\vecp{e}},\ldots , \preq_{\vecp{e}})=\SQqp(\preq_{\vecp{e}},\ldots , \preq_{\vecp{e}})$. Since all agents order the goods the same, each agent will take in turn a prefix of the remaining goods, according to their quota. Formally, letting $Q_i=\sum_{j=1}^iq_j$ denote the partial sums of the quotas, we get $f_{p_i}(\preq_{\vecp{e}},\ldots , \preq_{\vecp{e}})=\{e_{Q_{i-1}+1},\ldots , e_{Q_i}\}$ for every index $i\in[n]$ (using notation $Q_0=0$). Specifically note that $f_{p_1}(\preq_{\vecp{e}},\ldots , \preq_{\vecp{e}})=\{e_1,\ldots , e_{q_1}\}=D$.

\cref{induced-mechanism} shows that there is a truthful, non-bossy, and neutral partition mechanism $f^{\preq_{\vecp{e}}}$ that partitions the set $\WW=\MM\setminus D=\{e_{q_1+1},\ldots , e_m\}$ among $n-1$ agents $\NN'=\NN\setminus\{p_1\}$ with preferences from $\PP_{\WW}$ s.t. $f^{\preq_{\vecp{e}}}(\restW{\preqb_{-p_1}})=f_{-p_1}(\preq_{\vecp{e}},\preqb_{-p_1})$ for every profile $\preqb_{-p_1}\in \PP^{n-1}$. Let $\vecp{e}_{\WW}=(e_{q_1+1},\ldots , e_m)$ denote the ordering of the goods of $\WW$ according to $\vecp{e}$, so that $\restW{\preq_{\vecp{e}}}=\preq_{\vecp{e}_{\WW}}$. Consider the identical preference profile $(\preq_{\vecp{e}_{\WW}},\ldots , \preq_{\vecp{e}_{\WW}})\in \Plex^{n-1}(\WW)$. By definition of $f^{\preq_{\vecp{e}}}$ we get for every index $i\in \NN'$:
\[f^{\preq_{\vecp{e}}}_{p_i}(\preq_{\vecp{e}_{\WW}},\ldots , \preq_{\vecp{e}_{\WW}})=f_{p_i}(\preq_{\vecp{e}},\ldots , \preq_{\vecp{e}})=\{e_{s_{i-1}+1},\ldots , e_{Q_i}\}\]

On the other hand, the induction hypothesis implies that $f^{\preq_{\vecp{e}}}$ is a $\qpa$-serial-quota mechanism for some {\emph{unique}} $\qpa\in \Q{n-1}{|\WW|}$. Yet, the above equation implies that this is only possible if $p'_i=p_{i+1}$ and $q'_i=q_{i+1}$ for every index $i\in[n]$. That is, the $i$-th agent to select a good in $f^{\preq_{\vecp{e}}}$ is the same as the $i+1$-th agent in $f$, with the same quota.

 Note that for any preference $\preqb\in \PP$, quota $q$ and set of goods $S\subseteq \WW$, $\dem{\preqb}{q}{S}=\dem{\restW{\preqb}}{q}{S}$, meaning the demand of $\preqb$ from subsets of $\WW$ stays the same when considering the induced preference $\restW{\preqb}$ on $\WW$ by $\preqb$. Then for any profile $\preqb_{-p_1}\in \PP^{n-1}$ we know that $\SQqpa(\restW{\preqb_{-p_1}})=\SQqp_{-p_1}(\preq_{\vecp{e}},\preqb_{-p_1})$, since each agent will have the exact same demand at each stage.

Consider the profile $(\preq_{\vecp{e}},\preq_{-p_1})$. This profile is obtained from profile $\veco$ when agent $p_1$ swaps her preference from $\preq_{p_1}$ to $\preq_{\vecp{e}}$, both of which have the same $q_1$-demand from $\MM$, meaning $f_{p_1}\veco=D=f_{p_1}(\preq_{\vecp{e}},\preq_{-p_1})$. Then non-bossiness implies that $f\veco=f(\preq_{\vecp{e}},\preq_{-p_1})$. On the other hand, for every index $i>1$ we know that:
\[f_{p_i}(\preq_{\vecp{e}},\preq_{-p_1})=f^{\preq_{\vecp{e}}}_{p_i}(\restW{\preq_{-p_1}})=\SQqpa_{p_i}(\restW{\preq_{-p_1}})=\SQqp_{p_i}(\preq_{\vecp{e}},\preqb_{-p_1})\]
Thus:
\[f\veco =f(\preq_{\vecp{e}},\preq_{-p_1})=\SQqp(\preq_{\vecp{e}},\preqb_{-p_1})=\SQqp\veco\]
Completing the proof.

\end{proof}

\subsection{\texorpdfstring{Missing Proofs from \cref{sec:strict-alloc}}
                        {Missing Proofs from Section~\ref{sec:strict-alloc}}}
\label{appendix:alloc-strict}
{
\strict*
\begin{proof}
It is trivial to show that any serial-quota mechanism is truthful, non-bossy, and neutral for strict preferences. We will prove the converse.

Let $\qp\in\Qnm$ be as defined in \cref{strict-partition}. We will prove that for every profile $\veco\in \PP^n$, $f\veco=\SQqp\veco$. Let $h$ be the partition mechanism of $\MM$ among $n+1$ agents (with preferences from $\PP$) defined as follows. 
For every profile $(\preq_1,\ldots , \preq_n,\preq_{n+1})\in \PP^{n+1}$ and agent $i\in \NN$:
\[h_i(\preq_1,\ldots , \preq_n,\preq_{n+1})=f_i\veco\]
And for agent $n+1$:
\[h_{n+1}(\preq_1,\ldots , \preq_n,\preq_{n+1})=\MM\setminus \left(\bigcup_{i=1}^n f_i\veco \right)\]
Meaning $h$ allocates all the goods $f$ does not allocated to agent $n+1$, thus always partitioning $\MM$. It is easy to see that the truthfulness and non-bossiness of $f$ imply the same holds for $h$. We will prove that $h$ is also neutral. Let $(\preq_1,\ldots , \preq_n,\preq_{n+1})\in \PP^{n+1}$ be a preference profile and $\pi\in\perM$ be a permutation of the goods. Since we assumed $f$ is neutral, for all $i\in\NN$:
\[f_i\vecopi=\pi(f_i\veco)\]
Then using the definition of $h$, for all $i\in\NN$:
\[h_i(\preq^{\pi}_1,\ldots , \preq^{\pi}_n,\preq^{\pi}_{n+1})=f_i\vecopi=\pi(f_i\veco)=\pi(h_i(\preq_1,\ldots , \preq_n,\preq_{n+1}))\]
And for agent $n+1$:
\[h_{n+1}(\preq^{\pi}_1,\ldots , \preq^{\pi}_n,\preq^{\pi}_{n+1})=\MM\setminus \left(\bigcup_{i=1}^n f_i\vecopi \right)=\MM\setminus \left(\bigcup_{i=1}^n \pi(f_i\veco \right)\]
\[=\MM\setminus \pi\left(\bigcup_{i=1}^n f_i\vecopi \right)=\pi\left(\MM\setminus \left(\bigcup_{i=1}^n f_i\veco \right)\right)=\pi\left(h_{n+1}(\preq_1,\ldots , \preq_n,\preq_{n+1})\right)\]
In both cases we have shown that the requirement for neutrality holds for $h$ as well. In total we have shown that $h$ is a truthful, non-bossy, and neutral partition mechanism, therefore by \cref{strict-partition} there exists $\qpa\in \Q{n+1}{m}$ such that for all profiles $(\preq_1,\ldots , \preq_n,\preq_{n+1})\in \PP^{n+1}$:
\[h(\preq_1,\ldots , \preq_n,\preq_{n+1})=\SQqpa(\preq_1,\ldots , \preq_n,\preq_{n+1})\]
From the definition of $h$ we know that the valuation $\preq_{n+1}$ is not considered at all when choosing the allocation, meaning that agent $n+1$ cannot ever affect her bundle. Then we know for sure that $p'_{n+1}=n+1$, meaning that agent $n+1$ is the last agent in the picking order and gets the remaining goods after the other agents have chosen. This is because either agent $n+1$ has a quota larger than zero, and then she must be last otherwise she could affect her outcome by changing her demand, or she has a quota of zero, but then she is still last since we assumed in \cref{quota-ordering-pair} that agents with quotas of zero are ordered by index, and $n+1$ is the largest index.

We have shown that $p'_{n+1}=n+1$. Let $\qp\in\Qnm$ be defined s.t. $p_j=p'_j$ and $q_j=q'_j$ for all indices $j\in[n]$. Since in serial-quota mechanisms the last agent cannot affect what the other agents receive, for all profiles $(\preq_1,\ldots , \preq_n,\preq_{n+1})\in \PP^n$:
\[\SQqpa_{-(n+1)}(\preq_1,\ldots , \preq_n,\preq_{n+1})=\SQqp\veco\]
Then, for any profile $\veco\in \PP^n$, choosing some arbitrary $\preq_{n+1}\in \PP$:
\[f\veco=h_{-(n+1)}(\preq_1,\ldots , \preq_n,\preq_{n+1})=\SQqpa_{-(n+1)}(\preq_1,\ldots , \preq_n,\preq_{n+1})=\SQqp\veco\]
Proving that $f$ is a $\qp$-serial-quota mechanism.
\end{proof}
}

\section{Cardinal Allocation Mechanisms}
\label{appendix:cardinal-allocations}

\begin{definition} [{Cardinal allocation mechanism}]
A \emph{cardinal allocation mechanism} $f$ of a set $\MM$ of goods among a set $\NN$ of $n$ agents, each with a valuation from class $\VV$, is a (deterministic) function that maps a profile of $n$ valuations that are each in $\VV$, to an allocation $A\in \ALLOCsM$ of the set of goods $\MM$ among
the agents. That is, for $f:\VV^n\rightarrow \ALLOCsM$, the set that $f$ allocates to agent $i$ when the 
valuation profile is $\vecv\in \VV^n$
is $f_i(\vecv)$.
\end{definition}
When  it is clear from the context that $f$ is a \emph{cardinal} allocation mechanism (as it is defined over valuations)  
we sometimes simply say that it is an \emph{allocation mechanism} or simply a \emph{mechanism}.

We next present several standard definitions of properties of cardinal allocation mechanisms. For the following definitions, let $f:\VV^n\rightarrow \ALLOCsM$ be a cardinal allocation mechanism of a set  $\MM$ of $m$ goods  to a set $\NN$ of $n$ agents, each with valuation in the class $\VV$.

\begin{definition}
    [{Truthfulness in cardinal allocation mechanisms}]
    Cardinal allocation mechanism  $f$ is \emph{truthful} (in dominant strategies), if for every  agent $i\in\NN$, every 
    valuation 
    profile $\vecv= (v_i,v_{-i})\in \VV^{n}$, and every alternate {reported valuation}     $v'_i\in \VV$ of agent $i$, it holds that  \[ f_i(v'_i,v_{-i}) \preq_{v_i} f_i(v_i,v_{-i})\]
\end{definition}
That is, an agent cannot improve her utility by misreporting her valuation. {Note that this is a stronger requirement than just asking for misreports that change the induced preferences
not to be beneficial: misreporting cannot be beneficial even if {the valuation changed but} the induced preference remains the same.} 

\begin{definition} [{Non-bossiness in cardinal allocation mechanisms}]
{
Consider a class $\VV$ of strict valuations. Cardinal allocation mechanism $f$ is \emph{non-bossy} if for every agent $i\in\NN$, every valuation profile $\vecv= (v_i,v_{-i})\in \VV^{n}$, and every alternate valuation function $v_i\in \VV$, it holds that  \[ f_i(v'_i,v_{-i})= f_i(v_i,v_{-i}) \implies f(v'_i,v_{-i}) = f(v_i,v_{-i})\]
}
\end{definition}
A cardinal allocation mechanism is non-bossy if {a change in the {reported valuation} of an agent does not change her own allocation, it must not change any of the other agents' allocations.}

To define neutrality we need to define the valuation resulted from applying a permutation on the goods.
\begin{definition} [{Permuted valuation}]
    Given a permutation $\pi\in\perM$, for any valuation $v$, let $v^{\pi}$ denote the valuation defined $v^{\pi}(S)=v(\pi^{-1}(S))$ for every subset $S\subseteq\MM$.    
\end{definition}

The above requirement can be equivalently stated: $v^{\pi}(\pi(S))=v(S)$ for every subset $S\subseteq\MM$. Note that {it implies that} for every valuation $v$ and permutation $\pi\in\perM$ it holds that $\preq_{v^{\pi}}=\preq_v^{\pi}$\. That is, the preference induced by a valuation after permuting the goods is the same as the permutation of the induced preference.
We say that a set of valuations $\VV$ is \emph{permutations-closed} if it is closed under permutations, that is, for any permutation  $\pi\in \perM$ and $v\in \VV$ it holds that the preference $v^{\pi}$ is in $\VV$ as well.

\begin{definition} [{Neutrality in cardinal allocation mechanisms}]
        Consider any permutations-closed valuation class $\VV$ of strict valuations. 
    Cardinal allocation mechanism $f$ is \emph{neutral} if for every permutation $\pi\in\perM$ and valuation profile $\vecv=(v_1,...,v_n)\in \VV^n$, it holds that:  \[f_i(v^{\pi}_1,...,v^{\pi}_n)=\pi(f(v_1,...,v_n))\]
\end{definition}

{We define a cardinal serial-quota mechanism (for valuations) as the result of the corresponding serial-quota mechanism on the induced preferences as follows:}
\begin{definition}
    {
    Let $f$ be a {cardinal} allocation mechanism of a set $\MM$ of $m$ goods to a set $\NN$ of $n$ agents with valuations from a class $\VV$ of strict valuations. For a quota-ordering pair $\qp\in\Qnm$, we say $f$ is a \emph{$\qp$-serial-quota mechanism on $\VV$} if for every valuation vector $\vecv\in\VV^n$:
    \begin{equation}
\label{eq:cardinal-to-ordinal}
    f(v_1,...,v_n)=\SQqp\vecord{v}
\end{equation}
    
    }
\end{definition}

Meaning, the allocation of {a cardinal serial-quota mechanism} is decided entirely based on the induced ordinal preferences of the agents.

Let $\VV\subseteq \Vmon(\MM)$ be some valuation class, that may contain non-strict valuations. For a sub-class $\widehat{\VV}\subseteq \VV$ that contains only strict valuations, we say that {the} cardinal allocation mechanism $f$ is a \emph{serial-quota mechanism on $\widehat{\VV}$} if there exists a quota-ordering pair $\qp\in\Qnm$ s.t. $f$ satisfies \cref{eq:cardinal-to-ordinal} for every valuation profile $\vecv\in \widehat{\VV}^n$.

We say that {the} cardinal allocation mechanism $f$ is a \emph{serial-quota mechanism} if there exists a quota-ordering pair $\qp\in\Qnm$ s.t. $f$ is a $\qp$-serial-quota mechanism. Similarly, we say $f$ is a  \emph{serial-quota mechanism  on $\widehat{\VV}$} if there exists a quota-ordering pair $\qp\in\Qnm$ s.t. $f$ is a $\qp$-serial-quota mechanism on $\widehat{\VV}$.

\subsection{\texorpdfstring{Missing Proofs from \cref{sec:implications-val}}
                        {Missing Proofs from Section~\ref{sec:implications-val}}}
\label{appendix:subsec:implications-val}

\setOrderingEquivalence*

\begin{proof}
    We prove by induction on $i$ from $0$ to $n$ that $f(v_1,...,v_n)=f(u_1,...,u_i,v_{i+1},...,v_n)$. The base case $i=0$ is trivial. Let $i\in[n]$, assuming that  $f(v_1,...,v_n)=f(u_1,...,u_{i-1},v_{i},...,v_n)$, we prove the induction claim. Valuation profile $(u_1,...,u_i,v_{i+1},...,v_n)$ is obtained from $(u_1,...,u_{i-1},v_{i},...,v_n)$ by having agent $i$ swap her reported valuation from $v_i$ to $u_i$. Truthfulness in both directions implies: 
    \[f_i(u_1,...,u_i,v_{i+1},...,v_n) \preq_{v_i} f_i(u_1,...,u_{i-1},v_{i},...,v_n)\] \[f_i(u_1,...,u_{i-1},v_{i},...,v_n) \preq_{u_i} f_i(u_1,...,u_i,v_{i+1},...,v_n)\]
    Yet we assumed that $\preq_{v_i}=\preq_{u_i}$ and that the valuations are strict, so this is only possible if:
    \[f_i(u_1,...,u_i,v_{i+1},...,v_n)= f_i(u_1,...,u_{i-1},v_{i},...,v_n)\]
    Then non-bossiness implies $f(u_1,...,u_i,v_{i+1},...,v_n)= f(u_1,...,u_{i-1},v_{i},...,v_n)=f(v_1,...,v_n)$, proving the induction claim. Taking the end of the induction $i=n$ we have $f(\vecv)=f(\vecu)$.
\end{proof}

\cardinalstrict*

\begin{proof}
        It is easy to see that when valuations are strict, any serial-quota mechanism is truthful, non-bossy, and neutral. We will prove the converse. Assume that $f$ is truthful, non-bossy, and neutral.  \cref{setOrderingEquivalence} implies that the allocation $f$ outputs depends only on the induced ordinal preferences of the agents. Then we can define an ordinal allocation mechanism $h$ of the set of goods $\MM$ among $n$ agents $\NN$ with preferences from the class $\PP_{\VV}$ s.t. for every valuation vector $\vecv\in\VV$:
    \[f(v_1,...,v_n)=h(\preq_{v_1},...,\preq_{v_n})\]
   {Notice that since $\VV$ is class of strict valuations that is  permutations-closed and induces every lexicographic preference, the class of preferences $\PP_{\VV}$ is
   a class of strict preferences that is also permutations-closed and contains every lexicographic preference, meaning that $\PP_{\VV}\in \FFn(\MM)$.}
   We will prove that the truthfulness, non-bossiness, and neutrality of $f$ implies that the same holds for $h$. Truthfulness and non-bossiness of $h$ are trivial from the above definition. Neutrality is implied from the relationship $\preq_{v^{\pi}}=\preq_v^{\pi}$:
    \[h\left(\preq_{v_1}^{\pi},...,\preq_{v_n}^{\pi}\right)=h\vecord{v^{\pi}}=f\vecoper{v}{\pi}=\pi(f(v_1,...,v_n))=\pi(h\vecord{v})\] 
    Then \cref{theorem:strict} implies that $h$ is a $\qp$-serial-quota mechanism for some $\qp\in\Qnm$. Then, for every valuation vector $\vecv\in \VV^n$:
    \[f(v_1,...,v_n)=h(\preq_{v_1},...,\preq_{v_n})=\SQqp(\preq_{v_1},...,\preq_{v_n})\]
    
    Proving that $f$ is a $\qp$-serial-quota mechanism on $\VV$.
\end{proof}

\section{\texorpdfstring{Missing Proofs from \cref{sec:implications-fair}}
                        {Missing Proofs from Section~\ref{sec:implications-fair}}}
\label{appendix:implications-fair}

\quotaefone*
\begin{proof}
    Let $f$ be a serial-quota mechanism, then there exists a quota-ordering pair $\qp\in\Qnm$ such that $f$ is a $\qp$-serial-quota mechanism. We assume that $f$ is EF1 and show that either $q_i\leq 1$ for every index $i\in [n]$ or $\vecp{q}=(1,1,1,\ldots,1,1,2)$.
    
    Assume for contradiction that $q_i> 1$ for some index $i\in[n-1]$. Fix some strict additive valuations for agents $p_1,...,p_{n-1}$, note that the allocation of these agents does not depend on the preference of the last agent $p_n$. Consider the case where the last agent $p_n$ only values the goods received by agent $p_i$, and all other goods are valued at $0$. Then whatever allocation she gets, she will envy the allocation of agent $p_i$, even when any one of the $q_i>1$ goods is removed, contradicting the assumption that $f$ is EF1. We conclude that $q_i\leq 1$ for every index $i\in[n-1]$.

    Next we show that if $q_i=0$ for some $i\in[n-1]$ then $q_n\leq 1$. Assume for contradiction that $q_n>2$. Consider the case when all agents value all goods identically. Agent $p_i$ receives zero goods while agent $p_n$ receives at least two. Then agent $p_i$ will envy the allocation of agent $p_n$, even when one of the $q_n>1$ goods is removed, contradicting the assumption that $f$ is EF1.

    Finally, we show that if $q_i=1$ for every $i\in[n-1]$ then $q_n\leq 2$. Assume for contradiction that $q_n>2$. Again consider the case when all agents value all goods identically. Agent $p_i$ receives one good while agent $p_n$ receives at least three, so again agent $p_i$ will envy the allocation of agent $p_n$, even when one of the $q_n>1$ goods is removed, contradicting the assumption that $f$ is EF1. 
    \end{proof}

\end{document}